\documentclass[prx,epsf,showpacs,twocolumn,superscriptaddress,footinbib]{revtex4-2}

\usepackage[pdftex]{graphicx}
\usepackage{dcolumn}
\usepackage{bm}
\usepackage{epsfig}
\usepackage{latexsym}
\usepackage{amsmath}
\usepackage{amssymb}
\usepackage{amsfonts}
\usepackage{braket}
\usepackage[position=top,caption=false,singlelinecheck=off, justification=raggedright]{subfig}
\usepackage{color}
\usepackage{comment}
\usepackage{array}
\usepackage{braket}
\usepackage{bbold}
\usepackage{dsfont}
\usepackage{mathrsfs}
\usepackage{stmaryrd}
\usepackage{tikz}
\usetikzlibrary{quantikz}
\usepackage{hyperref}
\hypersetup{
    colorlinks,
    citecolor=blue,
    filecolor=red,
    linkcolor=magenta,
    urlcolor=blue
}

\usepackage{algpseudocode}

\newcommand{\nkd}[3]{\left \llbracket #1,\, #2,\, #3 \right \rrbracket}

\renewcommand{\>}{\rangle}

\renewcommand{\[}{\left[}

\usepackage[normalem]{ulem}

\usepackage[dvipsnames]{xcolor}
\usepackage{hyperref}
\hypersetup{
  colorlinks   = true,
  urlcolor     = blue,
  linkcolor    = blue,
  citecolor   = PineGreen
}

\DeclareMathOperator{\tr}{tr}

\newcommand{\logX}{\overline{X}}

\newcommand{\logZ}{\overline{Z}}
\newcommand{\I}[1]{\mathcal{I}_{#1}}
\newcommand{\CX}{\text{CX}}

\def\cambridge{Quantinuum, Terrington House, Cambridge, CB2 1NL, UK}
\def\london{Quantinuum, Partnership House, Carlisle Place, London, SW1P 1BX, UK}
\def\broomfield{Quantinuum, Broomfield, CO 80021, USA}
\newcommand{\brooklynpark}{Quantinuum, Brooklyn Park, MN 55422, USA}
\def\tokyo{Quantinuum K.K., Chiyoda-ku, Tokyo, Japan}
\usepackage{graphicx}
\usepackage{dcolumn}
\usepackage{tabularx}
\usepackage{xcolor}
\usepackage{bm}
\usepackage{amsthm}
\newtheorem{proposition}{Proposition}
\newtheorem{definition}{Definition}
\newcommand{\Tr}{\text{Tr}\,}

\begin{document}

\title{Computing with many encoded logical qubits beyond break-even}

\author{Shival Dasu}
\thanks{These authors contributed equally.}
\affiliation{\broomfield}
\author{Matthew DeCross}
\thanks{These authors contributed equally.}
\affiliation{\broomfield}

\author{Andrew Y. Guo}
\affiliation{\broomfield}
\author{Ali Lavasani}
\affiliation{\broomfield}

\author{Jan Behrends}
\affiliation{\cambridge}
\author{Asmae Benhemou}
\affiliation{\london}
\author{Yi-Hsiang Chen}
\affiliation{\broomfield}
\author{Karl Mayer}
\affiliation{\broomfield}
\author{Chris N. Self}
\affiliation{\cambridge}
\author{Selwyn Simsek}
\affiliation{\london}
\author{Basudha Srivastava}
\affiliation{\cambridge}

\author{M.S. Allman}
\affiliation{\broomfield}
\author{Jake Arkinstall}
\affiliation{\cambridge}
\author{Justin G. Bohnet}
\affiliation{\broomfield}
\author{Nathaniel Q. Burdick}
\affiliation{\brooklynpark}
\author{J.P. Campora III}
\affiliation{\broomfield}
\author{Alex Chernoguzov}
\affiliation{\broomfield}
\author{Samuel F. Cooper}
\affiliation{\broomfield}
\author{Robert D. Delaney}
\affiliation{\broomfield}
\author{Joan M. Dreiling}
\affiliation{\broomfield}
\author{Brian Estey}
\affiliation{\broomfield}
\author{Caroline Figgatt}
\affiliation{\broomfield}
\author{Cameron Foltz}
\affiliation{\broomfield}
\author{John P. Gaebler}
\affiliation{\broomfield}
\author{Alex Hall}
\affiliation{\broomfield}
\author{Craig A. Holliman}
\affiliation{\tokyo}
\author{Ali A. Husain}
\affiliation{\brooklynpark}
\author{Akhil Isanaka}
\affiliation{\broomfield}
\author{Colin J. Kennedy}
\affiliation{\broomfield}
\author{Yuga Kodama}
\affiliation{\tokyo}
\author{Nikhil Kotibhaskar}
\affiliation{\london}
\author{Nathan K. Lysne}
\affiliation{\tokyo}
\author{Ivaylo S. Madjarov}
\affiliation{\broomfield}
\author{Michael Mills}
\affiliation{\broomfield}
\author{Alistair R. Milne}
\affiliation{\london}
\author{Brian Neyenhuis}
\affiliation{\broomfield}
\author{Annie J. Park}
\affiliation{\broomfield}
\author{Anthony Ransford}
\affiliation{\broomfield}
\author{Adam P. Reed}
\affiliation{\broomfield}
\author{Steven J. Sanders}
\affiliation{\broomfield}
\author{Charles H. Baldwin}
\affiliation{\broomfield}
\author{David Hayes}
\affiliation{\broomfield}
\author{Ben Criger}
\affiliation{\cambridge}

\author{Andrew C. Potter}
\affiliation{\broomfield}
\author{David Amaro}
\affiliation{\london}

\date{\today}

\begin{abstract}
High-rate quantum error correcting (QEC) codes encode many logical qubits in a given number of physical qubits, making them promising candidates for quantum computation.
Implementing high-rate codes at a scale that both frustrates classical computing and improves performance by encoding requires both high fidelity gates and long-range qubit connectivity---both of which are offered by trapped-ion quantum computers.
Here, we demonstrate computations that outperform their unencoded counterparts in the high-rate $\nkd{k+2}{k}{2}$ iceberg quantum error detecting (QED) and $\nkd{(k_2 + 2)(k_1 + 2)}{k_2k_1}{4}$ two-level concatenated iceberg QEC codes, using the 98-qubit Quantinuum Helios trapped-ion quantum processor.
Utilizing new gadgets for encoded operations, we realize this ``beyond break-even" performance with reasonable postselection rates across a range of fault-tolerant (FT) and partially-fault-tolerant (pFT) component and application benchmarks with between $48$ and $94$ logical qubits.
These benchmarks include FT state preparation and measurement, QEC cycle benchmarking, logical gate benchmarking, GHZ state preparation, and a pFT quantum simulation of the three-dimensional $XY$ model of quantum magnetism. Additionally, we illustrate that postselection rates can be suppressed by increasing the code distance via concatenation. Our results represent state-of-the-art logical component and state fidelities and provide evidence that high-rate QED/QEC codes are viable on contemporary quantum computers for near-term beyond-classical-scale computation.
\end{abstract}
\maketitle

Achieving high fidelity in lengthy quantum computations necessary for most practical applications requires encoding logical qubits into a larger number of physical qubits in a quantum code to suppress errors.
Computation and error correction with encoded logical qubits adds space (memory) and time (circuit complexity) overheads, that, so far, have limited experimental implementations on existing quantum hardware to modest numbers of logical qubits that do not yet challenge the limits of classical computation.

The arbitrary qubit connectivity available in trapped-ion and neutral atom devices enables a variety of geometrically non-local codes with substantially reduced overheads~\cite{yoder2025tour, Yoshida2025concatenate, berthusen2025simple}. As an extreme example, iceberg quantum error-detection (QED) codes~\cite{gottesmanPHD,chao2018quantum,Self:2024py} require only two additional physical qubits to encode an arbitrary number of logical qubits capable of detecting any single-qubit error.

In this work, we introduce a family of concatenated iceberg codes that generalize many-hypercubes codes~\cite{Goto:2024xd} to variable code block sizes to achieve asymptotically high encoding rates and code distances~\cite{Yoshida2025concatenate}.
Using the 98-qubit Quantinuum Helios trapped-ion quantum processor~\cite{helios} (see Methods~\ref{sec:Helios} for hardware details), we implement iceberg codes at the first and second levels of concatenation, respectively forming QED and quantum error correction (QEC) codes of distances $d=2$ and $4$.

While arbitrary error suppression ultimately requires fully fault-tolerant (FT) operation, there is a significant leap in complexity and scale of computation before FT algorithms can outperform unencoded ones.
To bridge this gap, partially fault-tolerant (pFT) techniques~\cite{Fujii2024star, Fujii2025} that can immediately improve upon unencoded performance and extend the practical reach of quantum computers can be useful, even if these techniques do not scale efficiently to asymptotically large circuits.
In Quantinuum trapped-ion quantum processors, a practical pFT scheme consists in implementing arbitrary-angle logical Pauli rotations via continuously-parameterized native two-qubit Pauli rotations to avoid the sizable cost of FT non-Clifford gate synthesis~\cite{iceberg-beyond-the-tip, Yamamoto2025molecular}. 

This work reports encoded (``logical") vs. unencoded (``physical") performance results across a range of FT and pFT techniques, including experiments with up to 94 logical qubits in iceberg QED codes, and up to 48 logical qubits in concatenated iceberg QEC codes. 

For iceberg QED codes, we introduce improved FT initialization, syndrome extraction, and readout gadgets, as well as new FT and pFT gates, that develop iceberg codes as a practical tool for beyond-classical-scale quantum computation. We show that encoded algorithms can outperform their unencoded counterparts, that is, demonstrate \emph{beyond break-even} performance, on a variety of logical benchmarks including state preparation-and-measurement (SPAM) experiments, cycle benchmarking of logical gates, preparation of GHZ states, and the dynamics of the three-dimensional $XY$ model relevant to simulation of magnetic materials.

For concatenated iceberg QEC codes, we similarly introduce improvements to initialization, QEC, and entangling gadgets, demonstrating beyond break-even performance on SPAM experiments and the preparation of GHZ states. We benchmark the performance of the newly introduced QEC cycle and find below pseudo-threshold performance for the $\nkd{80}{48}{4}$ code. Additionally, we find improved QEC acceptance rates for the $\nkd{16}{4}{4}$ code compared to analogous QED acceptance rates for the $\nkd{4}{2}{2}$ code at comparable fidelity, allowing for longer FT computations to be performed at higher levels of concatenation. Postselecting out detectable, but not correctable, weight-$d/2$ faults in even-distance codes should not be misconstrued as a non-scalable technique, since, below threshold, both the error rate and the postselection overheads are exponentially suppressed with increasing code distance.

\begin{figure*}[t]
\centering
\includegraphics[width=\textwidth]{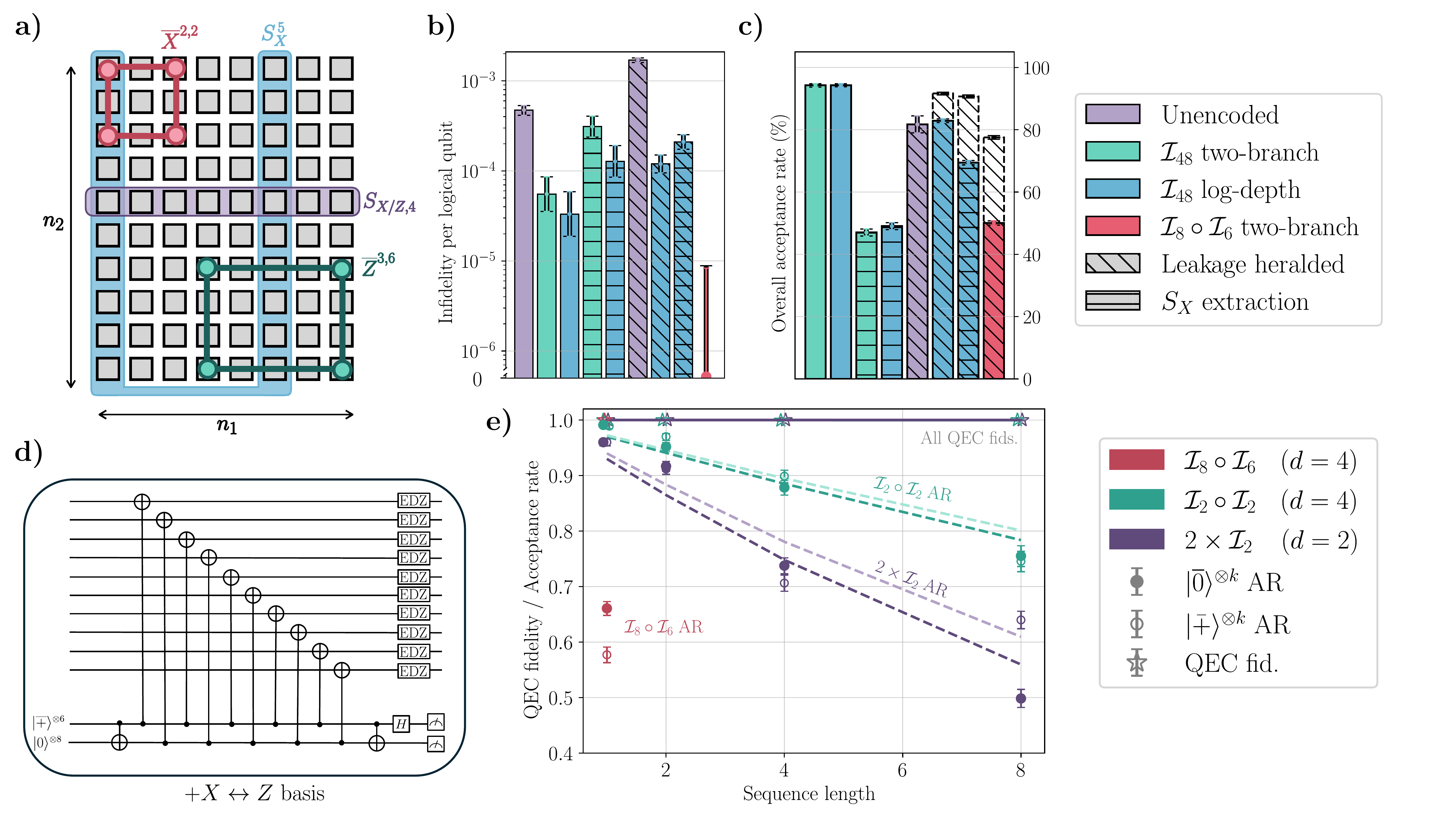}
\caption{
{\bf Iceberg codes and their concatenation}
\textbf{(a)} Grid layout depicting stabilizers and logical operators of the $\I{8}\circ\I{6}$ code. \textbf{(b)} SPAM infidelity per logical qubit across experiments, some using leakage-heralded measurement and/or non-destructive extraction of $S_X$. Note that the $\I{8}\circ\I{6}$ SPAM experiment recorded no logical errors, bounding the infidelity as indicated in the rightmost column. \textbf{(c)} Acceptance rates across each SPAM experiment. When leakage-heralded measurement is used, the leakage acceptance rate is outlined (black, dashed). \textbf{(d)} The QEC cycle for the $\I{8}\circ\I{6}$ code; Z stabilizer syndrome extraction is shown, and is followed by extraction of the $X$ stabilizers. EDZ denotes $d=2$ Bell-pair syndrome extraction. \textbf{(e)} Fidelity and acceptance rate (AR) as a function of number of QEC/SE cycles for (concatenated) iceberg codes for different logical basis input states. All uncertainties reported through this work denote 68\% Wilson confidence intervals unless otherwise noted, with point estimates corresponding to empirical observations. Acceptance and fidelity curves are obtained from fitting the data to exponential decays using maximum likelihood estimation. Fit values are given in Methods~\ref{sec:QEC Benchmark}.
}\label{fig:encoding}
\end{figure*}

\noindent\textbf{Concatenated Iceberg Codes:}
Iceberg codes,  $\I{k}$, are low-overhead quantum error-detecting (QED) codes that encode an even number, $k$, of logical qubits at code distance $d=2$ using only two additional qubits for a total of $n = k + 2$ physical qubits~\cite{gottesmanPHD,chao2018quantum,Self:2024py}. The stabilizer generators $S_X$ and $S_Z$ of an iceberg code are global Pauli products: $S_{X/Z} = \prod_{j=0}^{n-1}(X/Z)_j$. Any Pauli operator containing an even number of $X$'s and $Z$'s is a logical operator. These can be generated by $\logX^j = X_{0}X_{j }$ and $\logZ^j = Z_{j}Z_{k+1}$ for $1 \leq j \leq k$. Here we index physical qubit operators by subscripts with range $[n] \equiv \{0,1,\dots n-1\}$, and logical operators with an overline and superscripts with range $\{1,2,\dots k \}$, and use index-raising to denote encoding qubits into the logical qubits of a code.

Iceberg QED codes can be promoted to QEC codes using concatenation (see, e.g. Ref.~\cite{Goto:2024xd}), which takes as input two iceberg codes, $\I{k_2}$ and $\I{k_1}$, and outputs a code, $\I{k_2} \circ \I{k_1}$, with parameters $\nkd{n_2n_1}{k_2k_1}{d_2d_1}$. This procedure can be visualized as in Fig.~\ref{fig:encoding}a by arranging the physical qubits into an $n_1 \times n_2$ grid with Pauli-$X$ and $Z$ operators labeled $X_{i,j}$ and $Z_{i,j}$ ($i\in [n_1], j\in [n_2]$) respectively. The qubits in each row are encoded into $\I{k_1}$ codes and then $\I{k_2}$ codes are encoded on each of the $k_1$ logical qubit columns. The $X$ and $Z$ stabilizers of the concatenated code consist of Pauli $X$ or $Z$ products along rows (the stabilizers of the $\I{k_1}$ codes) and pairs of columns (the stabilizers of the encoded $\I{k_2}$ codes), and the $\overline{X}$ and $\overline{Z}$ logical operators are rectangles of physical $X$ or $Z$ operators (see Fig.~\ref{fig:encoding}a). This concatenation procedure can be straightforwardly iterated to higher levels to achieve larger distances. 
In the following, we implement only the first two concatenation levels for iceberg codes, exploring a variety of iceberg QED codes with up to $94$ logical qubits and concatenated iceberg QEC codes with up to $48$ logical qubits (the largest for which a full QEC cycle can be executed on Helios).

\noindent\textbf{Encoding and readout:}
Encoding physical qubits into logical qubits is an essential first step for FT computation. To this end, we introduce and benchmark improved encoding circuits and logical readout protocols for iceberg codes with and without concatenation.

The logical zero state of the iceberg code $\I{k}$, $|\overline{0}\rangle^{\otimes k}$, is a GHZ state $|\text{GHZ}_n\rangle$ on all physical qubits \cite{Self:2024py}. In the $d=4$ concatenated iceberg code, the logical zero state $|\overline{0}\rangle^{\otimes k_1 k_2}$ is the tensor product of the logical GHZ states over each $\I{k_2}$ logical qubit column: $|\overline{\text{GHZ}}_{k_2}\rangle^{\otimes k_1}.$ To prepare the $d=2$ logical zero state, a physical GHZ state preparation circuit can be used, which starts with a $|+\rangle$ state and subsequently incorporates more qubits via a tree of CX gates. To prepare the $d=4$ logical zero state, we follow a similar procedure replacing physical qubits with iceberg codes. Starting with a $|\bar{+}\rangle^{\otimes k_1}$ $\I{k_1}$ block, other blocks are incorporated via a tree of transversal logical CX gates.
In both cases, $X$ errors during the circuit can cascade through the CX gates, causing a high-weight error on the output. To ensure fault-tolerance, we catch these errors by measuring $ZZ$ operators at the ends of certain pairs of tree branches \cite{Goto:2024xd}. In the $d=2$ case, we introduce a FT state preparation and measurement (SPAM) circuit using a log-depth tree, which uses up to $n/4$ ancilla qubits and up to $n/2$ additional CX gates. In the $d=4$ case, we introduce a two-branch circuit, with a modified tree structure and additional error-checking to catch additional errors which cannot occur in the $d=2$ case. We use repeat-until-success (RUS), decoherence free subspace (DFS) techniques, and dynamical decoupling (DD) to increase acceptance rates and suppress errors in some $d=2$ encoding and QED gadgets throughout this work (the third only for $XY$ model simulations as well as SPAM experiments extracting $S_X$); for additional details, see Supplementary Information (SI) Sec.~\ref{sec:state_prep_gadgets}. In some SPAM experiments, we only extract one syndrome destructively ($S_Z$), in contrast to the final syndrome measurement proposed in Refs.~\cite{Self:2024py, iceberg-beyond-the-tip}.

We benchmark the performance of fault-tolerantly preparing 48 logical qubits in the $d=2$ and $d=4$ cases in the $\I{48}$ and $\I{8}\circ \I{6}$ codes, respectively (see Fig.~\ref{fig:encoding}b, c, SI.~Sec.~\ref{sec:spam}). Both the $d=2$ case and $d=4$ case achieve beyond-break-even performance compared to the physical SPAM error on Helios \cite{helios} by at least an order of magnitude, with $d=4$ exhibiting improved fidelities and acceptance rates compared to $d=2$, demonstrating the additional error-correction power of the concatenated QEC codes compared to their QED counterparts. In some experiments, we use the leakage-heralded measurement available on Helios (Methods~\ref{sec:Helios}) and compare to the physical leakage-heralded measurement error correspondingly.

\noindent\textbf{Syndrome Extraction and QEC:}
Once encoded, quantum error detection and/or correction are required to prevent physical errors from spreading into logical ones.
Here, we introduce new methods for performing (QEC) QED on (concatenated) iceberg codes, which are optimized to Helios.
For the $d=2$ case, both mid-circuit syndrome extraction (SE) and readout of the logical state are performed
by extending the protocol~\cite{iceberg-beyond-the-tip} to use GHZ states in the $X$ or $Z$ basis to improve parallelization of gates (shown later in Fig.~\ref{fig:computing}b; see also SI Sec.~\ref{sec:ghz_se}). We use four-qubit GHZ states that allow syndrome extraction gates to be parallelized in groups of $8$, matching the number of gate zones in Helios~\cite{helios}. 

For concatenated iceberg codes, previous work on closely-related many-hypercubes codes~\cite{Goto:2024xd, goto2025optimized} proposed using Knill-style QEC that involves preparing two ancillary logical code blocks per QEC cycle, incurring significant spacetime overhead to the QEC cycle. Here, we instead introduce a lower-overhead scheme that generalizes the $d=2$ QEC protocol (with a Bell pair of ancilla) by replacing the physical ancilla qubits in that scheme with lower-concatenation-level $d=2$ iceberg code blocks, followed by regular $d=2$ syndrome extraction on each of the lower-level $\I{6}$ blocks. For the $\I{8}\circ\I{6}$ code, the resulting circuit, shown in Fig.~\ref{fig:encoding}d, uses only 18 physical ancilla qubits, and has depth $18$, representing a substantial reduction in the spacetime overhead compared to Knill-Style QEC. The QEC cycle in this work is similar to the improvement proposed in Ref.~\cite{Nakai_2026}, a follow-up to Ref.~\cite{Goto:2024xd} which appeared during the writing of this manuscript for a subsystem version of concatenated iceberg codes. The physical qubit overhead in Ref.~\cite{Nakai_2026} for the subsystem code would be the same, with the additional advantage of not requiring the lower-level ancilla to be encoded for the extraction of either high- or low- level stabilizers. Conversely, the QEC cycle and decoder introduced here is fault-tolerant for circuit-level noise as discussed in SI Sec.~\ref{sec:QEC}, whereas the QEC cycle in Ref.~\cite{Nakai_2026} was introduced only for phenomenological noise.

Implementing this procedure for the $\I{8}\circ \I{6}$ concatenated iceberg code with 48 logical qubits, we observe an average QEC-cycle infidelity per logical qubit (see Methods~\ref{sec:QEC Benchmark}) upper-bounded by $4\times 10^{-5}$ with no logical errors recorded in five thousand shots in each basis, and an acceptance rate (AR) of $0.62(2)$ when postselecting out uncorrectable errors. Because leakage-heralded measure as well as DD were used in all logical memory experiments, a natural point of comparison for this encoded logical memory experiment is the (non-leakage component of the) physical linear memory error accrued over a depth-1 time: $1.1(3)\times 10^{-4}$ (\cite{helios}, Methods~\ref{sec:QEC Benchmark}).
By this comparison, QEC improves the memory error over that for unencoded physical qubits, indicating that the hardware is below the pseudo-threshold for this $d=4$ concatenated iceberg code.

Next, we examine the effect of increasing the code distance on the logical memory performance. To this end, we shift our focus to smaller (concatenated) iceberg codes with 4 logical qubits: $\I{2}\circ \I{2}$ or two copies of $\I{2}$. 
The higher AR per QEC cycle of these smaller codes allow us to explore sequences with many repeated rounds of QEC (Fig.~\ref{fig:encoding}e) without a prohibitively high shot overhead.
We observe that increasing the distance from 2 to 4 by concatenation significantly boosts the AR (averaged over both initial states) from $0.945(2)$ to $0.972_{-2}^{+1}$.
The infidelities, $< 4\times 10^{-5}$ and $2^{+8}_{-2}\times10^{-5}$ for distances $2$ and $4$ respectively, are indistinguishable up to statistical uncertainty.
As the discard rate is several
orders of magnitude higher than the logical error rate, this demonstrates that the length of a FT computation can be scaled up by increasing the code distance through concatenation.

\begin{figure*}[!t]
\centering
\includegraphics[width=\textwidth]{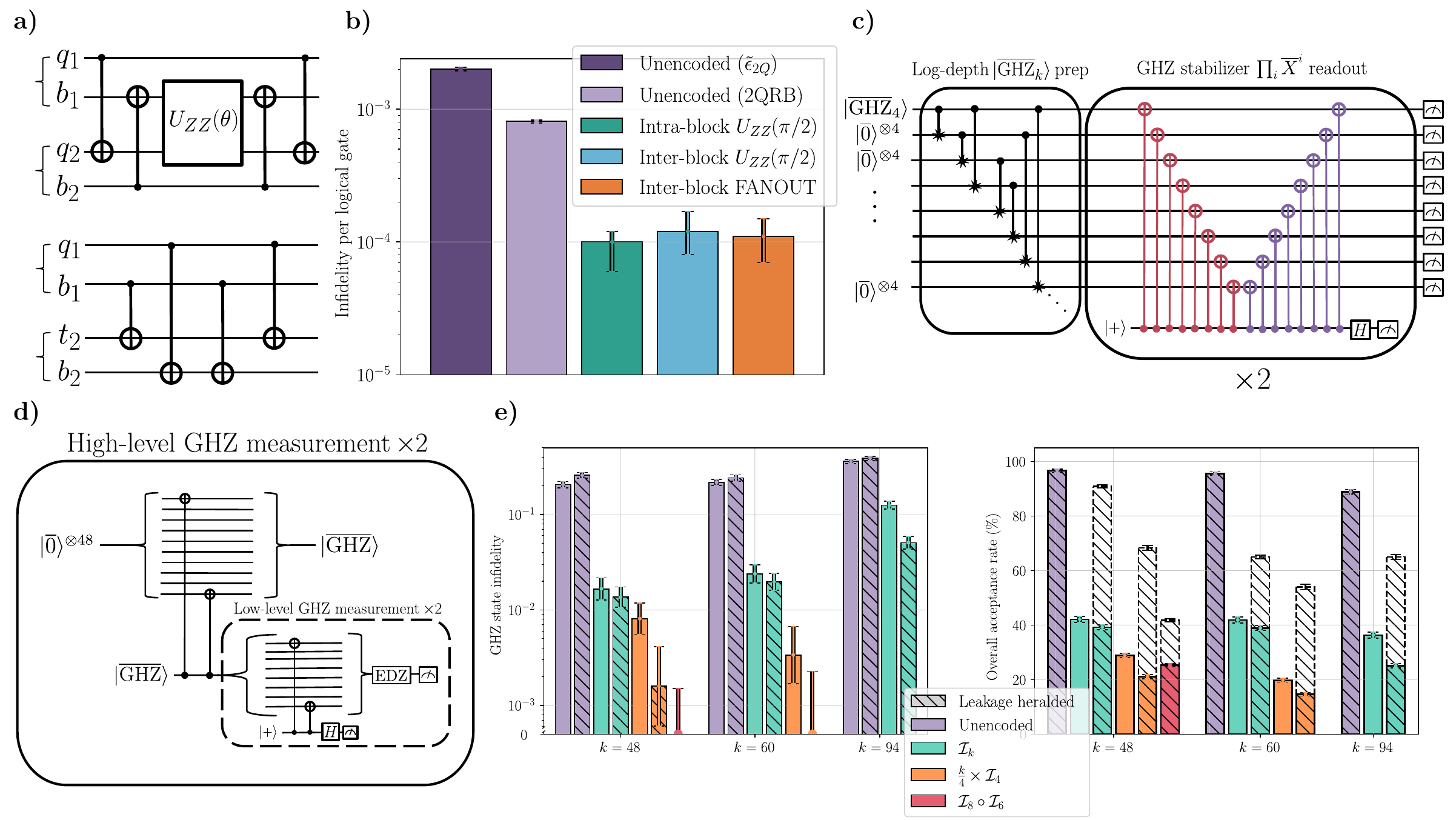}
\caption{
{\bf Logical gate benchmarking and GHZ state preparation}
\textbf{(a)} Circuit gadgets for an inter-block $U_{ZZ}$ gate and the FANOUT gate. Here $t = q_0$ is the ``top" qubit and $b = q_{n-1}$ is the ``bottom" qubit on each code block with indices labeling the code block. \textbf{(b)} Infidelity per logical gate for each gate considered in the main text, reported alongside the physical $U_{ZZ}(\pi/2)$ fidelity from \cite{helios}, both including ($\tilde{\epsilon}_{2Q}$) and excluding (2QRB) transport overheads. \textbf{(c)} Preparing a GHZ state distributed over many $\I{4}$ code blocks using the log-depth FANOUT tree, represented in shorthand by controlled gates with stars, followed by extraction of the $\prod_i \overline{X}^i$ GHZ stabilizer where red (purple) gates couple to qubit $q_0$ ($q_5$) alone on each block. \textbf{(d)} GHZ state preparation in the $\I{8}\circ\I{6}$ code as described in the text. \textbf{(e)} GHZ infidelity and acceptances across the various experiments. Note that the $\I{8}\circ\I{6}$ ($k=48$) and leakage-heralded $15\times \I{4}$ ($k=60$) experiments both recorded no errors.}
\label{fig:entangling}
\end{figure*}

\noindent\textbf{Benchmarking a universal logical gate set:}
Unlike physical benchmarking, logical gate benchmarking must incorporate the effects of syndrome extraction, state preparation, and readout at the encoded level. 
Here, we perform logical gate benchmarking to estimate the average fidelity of a universal logical gate set that entangles logical qubits within or between $\I{4}$ code block(s), demonstrating break-even performance over physical gates.
We employ a logical version of cycle benchmarking (CB)~\cite{Combes2017, Erhard2019} (see Methods~\ref{sec:methods_lcb}, SI Sec.~\ref{sec:si_lgb}), which involves interleaving repeated layers of a chosen multi-qubit Clifford operation (``cycle") with random Pauli gates. Fitting the (exponential) decay of the expectation value of a randomly sampled Pauli observable vs. depth---after postselecting via syndrome extraction rounds interspersed every 8 logical gates (chosen to roughly optimize gate performance)---extracts an estimate of the Pauli fidelity, which can be used to compute the fidelity of the logical gate.

Using this approach, we benchmark the following universal pFT gate set (Fig.~\ref{fig:entangling}a, see also SI Sec.~\ref{sec:gates}) consisting of: non-FT intra- and inter-code block $U_{ZZ}(\theta) = e^{-i\frac{\theta}{2}\overline{ZZ}}$ \footnote{We benchmark the gate at $\theta = \pi/2$; in general Helios can implement either gate with arbitrary $\theta$ and we expect that logical error decreases with $\theta$ similar to the physical gate, since the probability of a single fault causing a logical error is correspondingly scaled down.}, as well as a FT logical FANOUT gate.
The latter implements a global logical parity flip $\prod_i \logX^i$ on all logical qubits in one code block, controlled by a logical qubit in another code block. 
The resulting average logical gate infidelities, all between $(1.0 - 1.2) \times 10^{-4}$, are summarized in Fig.~\ref{fig:entangling}b.
Crucially, these logical fidelities significantly outperform the bare physical two-qubit gate errors of $\sim 8\times 10^{-4}$ on Helios~\cite{helios}, not to mention the implied two-qubit gate fidelities $\tilde{\epsilon}_{2Q} \sim 2\times 10^{-3}$ obtained from full-circuit benchmarks that include other substantial overheads.

\noindent\textbf{Preparing large logical entangled states:}
We now benchmark the high-fidelity preparation of entangled resource states of many logical qubits, specifically, GHZ states, in various sizes and combinations of iceberg codes.
Details of the unencoded GHZ state fidelity protocol can be found in SI Sec.~\ref{sec:physical_ghz}.
The logical GHZ state for a single $\I{k}$ code block factorizes into a physical GHZ state on physical qubits $1,\dots k$ and a Bell pair on qubits $0,k+1$, which can be directly prepared by a small modification of the usual encoding circuit (SI Sec.~\ref{sec:ghz_prep}).
To prepare logical GHZ states spanning multiple iceberg code blocks, we initialize a single code block in a logical GHZ state, and apply a log-depth tree of logical FANOUT gates (Fig.~\ref{fig:entangling}c) to extend the GHZ state across other code blocks that were initialized to $\ket{\bar{0}}^{\otimes k}$. This approach saves many physical 2Q gates compared to the standard implementation using transversal CX gates.
To benchmark the resulting GHZ state fidelity in both cases, we perform simultaneous (single-shot) FT measurements of the logical GHZ stabilizers ($\bar{Z}^i\bar{Z}^{i+1}$ and $\prod_i \bar{X}^i$)~(see Methods~\ref{sec:methods_ghz}).

To explore the effects of code concatenation, we develop a FT GHZ state preparation protocol on a concatenated iceberg code block that generalizes more straightforwardly to higher concatenation levels. The protocol (Fig.~\ref{fig:entangling}d) uses level-1 $\I{k_1}$ ancilla prepared in a logical GHZ state as described above, to fault-tolerantly measure the level-2 logical GHZ stabilizer $\prod_{i,j}\logX^{i,j}$ (defined in Methods~\ref{sec:methods_ghz}).
To ensure robustness to up to two measurement errors the logical GHZ stabilizer measurement is performed twice.
Additionally, the $X, Z$ stabilizers and $\logZ$ observables of the level-1 GHZ state are extracted, which allows us to correct one error occurring anywhere in the full $d=4$ protocol (see Methods~\ref{sec:methods_ghz} for further details). SI Sec.~\ref{sec:ghz_prep_d4} contains simulations of fault tolerance and a more detailed analysis. 

Results for logical GHZ state (in)fidelity and acceptance rates are displayed in Fig.~\ref{fig:entangling}e and reported in SI Sec.~\ref{sec:ghz_prep} for $\I{48}$, $\I{60}$, and $\I{94}$ codes, the $\I{8} \circ \I{6}$ code, and for $k=48,\,60$ logical qubits distributed among $\I{4}$ code blocks. Across the board, encoding yields large improvements to logical GHZ state fidelity with greater gains coming from the improved fault-tolerance of the $\I{4}$ code blocks and the $\I{8} \circ \I{6}$ code. In particular, we measure $\sim\! 95\%$ fidelity in the $\I{94}$ code and recorded no logical errors in the $\I{8} \circ \I{6}$ and leakage-heralded $k=60$, $\I{4}$ codes over several thousand shots.

\begin{figure*}[!t]
\centering
\includegraphics[width=\textwidth]{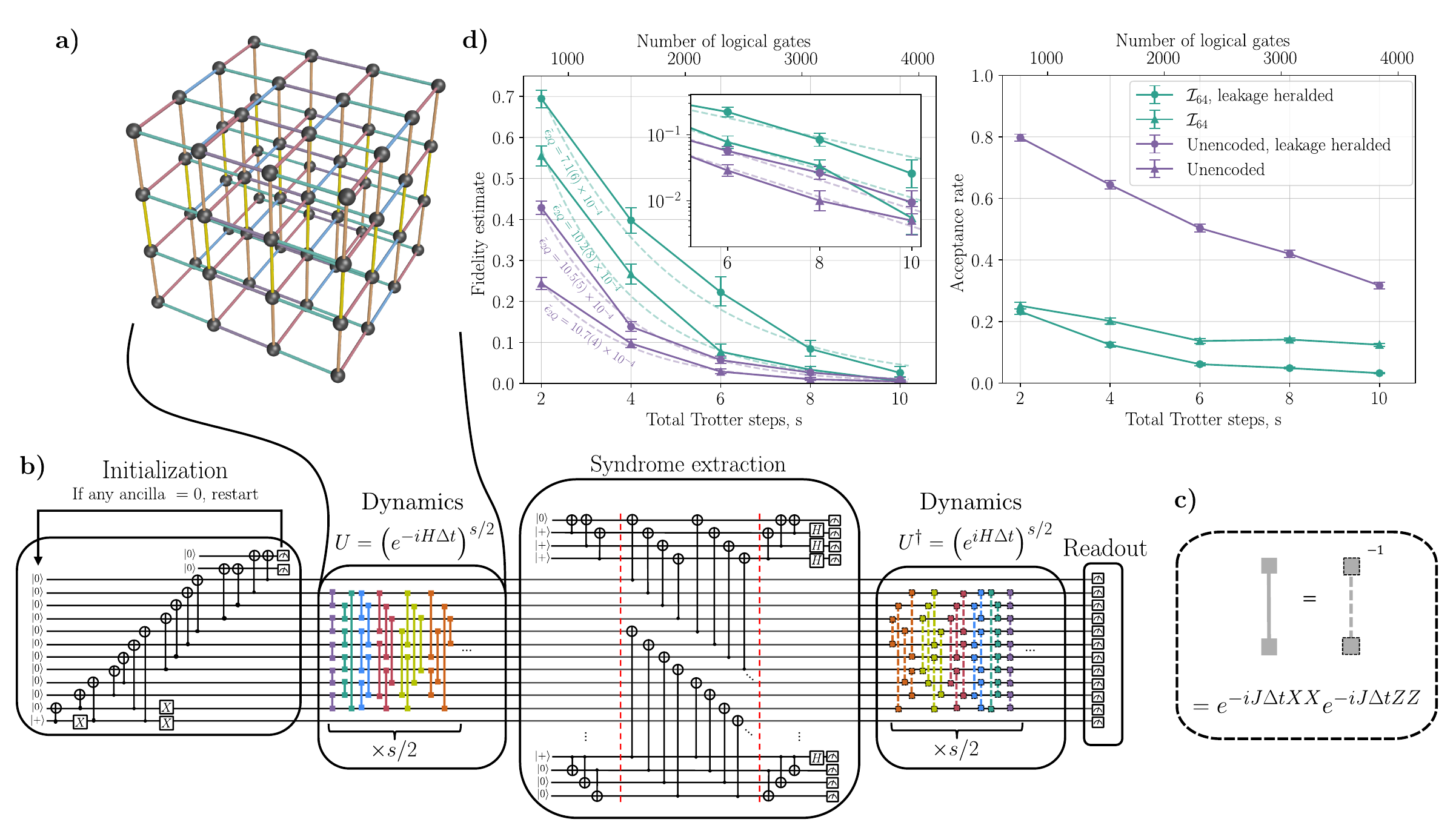}
\caption{
{\bf Mirror benchmarking of encoded Hamiltonian simulation}
\textbf{(a)} The natural edge-$6$-coloring of the $4\times 4\times 4$ periodic cubic lattice. Edges connecting the periodic boundaries are omitted for visual clarity, but the corresponding gates are included in the circuits executed. \textbf{(b)} Logical mirror benchmarking circuits used to assess the fidelity of Hamiltonian simulation. For the purposes of illustration, the circuit shown uses an $\I{10}$ code with dynamics governed by a 6-regular graph on 10 vertices. \textbf{(c)} Logical gates implementing the first-order Trotter circuit depicted in (b). In the ``reverse" half of the circuit, the inverse gates (dashed) are used in the opposite order. \textbf{(d)} Fidelity estimate and acceptance rate as a function of total Trotter steps in the full mirrored circuit. The fidelity estimate inset displays the last three points on a log scale for clarity. Each heuristic depolarizing error model (dashed) is labeled with its corresponding effective 2Q error rate $\tilde{\epsilon}_{2Q}$ (see Methods~\ref{sec:Helios}, SI Sec.~\ref{sec:64lq_data}).
}
\label{fig:computing}
\end{figure*}

\noindent\textbf{Computing:}
Finally, we put these ingredients together for an application-level benchmark of the pFT operation of an iceberg code, inspired by magnetic materials simulation. The iceberg code is particularly well-tailored to the simulation of quantum magnetism with Hamiltonians of the form $H=\sum_{\alpha \in x,y,z} \sum_{i,j} J^\alpha_{i,j} \sigma^\alpha_i \sigma^\alpha_j$, whose terms are each logical operators that are simple to implement in a pFT manner. Trotterized evolution under such Hamiltonians can be implemented with no extra overhead in an $\I{k}$ code using Helios' native physical gate set. This direct implementation of logical gates is not FT---a single 2Q gate error with Pauli components $XX$, $YY$, or $ZZ$ leads to a logical error. However, due to bias in the native 2Q gate errors, these dangerous errors occur with probability $\sim\!10\times$ lower than other detectable errors~\cite{PhysRevX.13.041052, floquet}.
This pFT computation scheme mixing FT state preparation, syndrome extraction, and measurement with non-FT Hamiltonian simulation gates enables a significant fidelity improvement over unencoded simulations of quantum computations at a scale that challenges classical simulation.

To benchmark this pFT scheme, we implement a logical version of the standard mirror-benchmarking (MB) protocol~\cite{Proctor22, proctor2022measuring, mayer2021theory}, that measures the survival probability $P(t) = |\langle \psi_0|U(\frac{t}{2})^\dagger U(\frac{t}{2})|\psi_0\rangle|^2$, where $U(\frac{t}{2})=e^{-iH t/2}$ is implemented by $s/2$ Trotter steps with size $\Delta t = t/s$ approximating the magnetic dynamics, and $|\psi_0\rangle$ is a reference state.
This protocol provides a sample-efficient way to estimate the global fidelity of time evolution for time $t$, without comparing to non-scalable classical simulations.
Specifically, we implement logical MB for a three-dimensional ($3$D) antiferromagnet with an easy-plane anisotropy to the spin-exchange, described by the paradigmatic $XY$-model Hamiltonian:
\begin{equation*}
    H_{xz} = \sum_{\langle ij \rangle} h_{ij} =  \sum_{\langle ij\rangle} J(X_i X_j + Z_i Z_j).
\end{equation*}
where $\langle ij\rangle$ denote nearest neighbors on a $3$D cubic lattice with dimensions $4\times 4\times 4$, and periodic boundary conditions.
Here, we have chosen the easy-plane anisotropy to lie in the $xz$ plane (rather than the conventional $xy$ plane) for convenience of implementation in the iceberg code. 
We choose to study a 3$D$ model to connect to realistic bulk materials, showcase the flexible programmability of
Helios' QCCD architecture in implementing models of arbitrary dimension, and because this high-dimensional geometry offers a stronger challenge to competing classical simulation methods for forward time evolution circuits.

We encode this system into a single $\I{64}$ iceberg code block, choosing a ferromagnetically polarized initial state $|\psi_0\rangle = |\bar{0}\rangle^{\otimes 64}$. The forward dynamics is implemented via $s/2$ first-order Trotter steps: $U(s\Delta t/2) \approx U_1(\Delta t)^{s/2}$, where $U_1(\Delta t) = \prod_{C}\prod_{\langle ij\rangle \in C} e^{-ih_{ij}\Delta t}$ and $C$ labels the 6 different sets of non-overlapping bonds indicated by the colorings shown in Fig.~\ref{fig:computing}a. We choose a step size $J\Delta t = \pi/8$ that is large enough to rapidly generate entanglement in a small number of Trotter steps, but small enough to be far from the Clifford limit of the system. Gates are Pauli-twirled using real-time random number generation \cite{guppy} to avoid coherent cancellation of errors.
A single SE round is inserted between the forward ($U$) and backward ($U^\dagger$) time evolutions. The MB protocol concludes with $Z$-basis measurements of all physical qubits, from which we reconstruct both the survival probability and $Z$-stabilizers.
The data is postselected on detecting no errors in the mid-circuit SE or final measurement.

The resulting survival probability and acceptance rates for MB sequences of up to $s=10$ total Trotter steps are shown in Fig.~\ref{fig:computing}d for both unencoded and iceberg QED encoded, with and without postselection on detecting leakage outside of the qubit subspace of any of the $k+2$ data qubits.
From exponential-decay fits to the survival probability versus logical gate number, we also extract an effective per-gate error $\tilde{\epsilon}_{2Q}$, which reflects a combination of gate errors and memory errors (see Methods~\ref{sec:Helios} for details and SI Sec.~\ref{sec:bilayer}, \ref{sec:64lq_data} for preliminary experimental results and more data).
We observe that the effective error rate $\tilde{\epsilon}_{2Q}$ is reduced $\approx 30\%$ by encoding in the iceberg code, at the expense of a reasonable increase in shot count incurred by postselecting out detected errors.
We note that obtaining high global state fidelity may be an overly conservative standard for quantum simulation applications that typically require the less stringent task of approximating local correlation functions.

While the MB protocol itself is trivial to simulate classically, 
for chaotic many-body Hamiltonians such as $H_{xz}$, it serves as a proxy for the global state fidelity for simulating a purely forward-time-evolved initial state: $U(s\Delta t)|0\rangle$. We anticipate that the pure forward time evolution may be difficult---if at all possible---to simulate classically, due to the large number of logical qubits (which renders direct state-vector simulation infeasible), high dimensionality, periodic boundary conditions, and rapid entanglement spreading (all of which challenge tensor-network methods).

\noindent\textbf{Outlook:}
We demonstrate computations below break-even on many logical qubits encoded in iceberg codes and their concatenations on Quantinuum Helios.
In particular, the favorable performance and overhead of the pFT simulation of the $XY$ model with 64 logical qubits at distance-2 motivates extending this approach to more demanding condensed-matter models, such as the Fermi–Hubbard or Kitaev models, of both academic and industrial relevance. 
When higher algorithmic fidelity or acceptance rates are required, fully FT, below-threshold protocols on higher-distance codes become necessary~\cite{Acharya2025googlethreshold}. 
We observe in this work that increasing the distance via concatenation already enhances acceptance rates without degrading fidelity.

We further present alternative fault-tolerant resource-state preparation schemes (SI Sec.~\ref{sec:alternative ft prep}) based on repeated stabilizer measurements, flag-at-origin~\cite{Forlivesi:2025ilq}, and an automatically generated flag circuit---the latter outperforming the gadgets of~\cite{Goto:2024xd} in depolarizing simulations. 
Decoding with look-up tables derived from circuit-level simulations is explored in SI Sec.~\ref{sec:decoding}.
They can provide a first layer of a hierarchical decoding pipeline, with other layers potentially leveraging the concatenated structure of iceberg codes~\cite{kakizaki2026concatenateddecoding}.
These codes additionally offer a low-overhead, naturally global logical gate set—including Clifford gates via automorphisms~\cite{liu2025coniq} and global magic-state preparation~\cite{Dasu:2025ktd} (as an alternative to distillation~\cite{SalesRodriguez2025queramagic})—that warrants further investigation.

These creative developments, together with advances in quantum architectures that offer arbitrary connectivity, including neutral-atom~\cite{Bluvstein2024queradevice} and trapped-ion~\cite{helios} platforms, are rapidly bringing fault-tolerant quantum computation with many logical qubits within reach.

\section*{Acknowledgments}

We acknowledge the entire Quantinuum team for innumerable contributions towards the design and operation of the Helios quantum computer, and we acknowledge Honeywell for fabricating the trap used in this experiment.
We thank Eli Chertkov, Michael Foss-Feig, Reza Haghshenas, Tyler LeBlond, Setso Metodi, Ciar\'{a}n Ryan-Anderson, Kartik Singhal, and David Stephen for useful discussions.

\section*{Author Contributions}

S.D., M.D., C.H.B., D.H., B.C., A.C.P., and D.A. devised the project and the protocol implementations.
S.D. devised the distance-4 QEC cycle, with improvements from B.C., and decoder as well as distance-4 GHZ preparation protocol and developed code for and executed all distance-4 experiments and distance-2 syndrome extraction experiments.
M.D. devised the GHZ-based distance-2 syndrome extraction as well as the FT FANOUT gate and its use in the GHZ experiments, and developed code for and executed distance-2 SPAM, GHZ state preparation, and $XY$ model experiments.
A.Y.G. developed code for and executed the logical gate benchmarking experiments.
Y.-H. C. developed code for and executed the unencoded GHZ state preparation experiments.
A.L., J.B., A.B., C.N.S., S.S., and B.S. performed additional simulations, QEC algorithm design, and circuit code development.
A.Y.G., Y.-H. C., K.M., and C.H.B. developed the logical gate benchmarking analysis.
M.S.A., J.A., J.G.B., N.Q.B., J.P.C. III, A.C., S.F.C., R.D.D., J.M.D., B.E., Car.F., Cam.F., J.P.G., A.H., C.A.H., A.A.H., A.I., C.J.K., Y.K., N.K., N.K.L., I.S.M., M.M., A.R.M., B.N., A.J.P., A.R., A.P.R., and S.J.S. designed, optimized, operated, and maintained the Helios hardware and software stack.
D.A. directed the overall project as lead principal investigator.
S.D., M.D., A.Y.G., A.L., J.B., A.B., Y.-H. C., S.S., N.Q.B., A.R., C.H.B., D.H., B.C., A.C.P., and D.A. contributed to the writing and editing of the manuscript.

\section*{Competing Interests}

The authors declare no competing interests.

\section*{Additional Information}
Supplementary Information is available for this paper. Correspondence and requests for materials should be addressed to David Amaro, \href{david.amaro@quantinuum.com}{david.amaro@quantinuum.com}.

\section*{Methods}
\label{sec:methods}

\subsection{Hardware specifications} \label{sec:Helios}

Helios is a 98-qubit trapped-ion quantum processor with $^{137}\text{Ba}^{+}$ hyperfine qubits that implements the quantum charge-coupled device (QCCD) architecture introduced in Refs.~\cite{Wineland1998-xz, Kielpinski2002}, first realized in Ref.~\cite{Pino2021} with $^{171}\text{Yb}^{+}$ qubits. Here we present an abbreviated summary of the technical specifications of the processor presented in Ref.~\cite{helios}. Qubit ions, along with their associated sympathetic coolant ions, are confined by the electromagnetic field generated by RF electrodes and physically transported around the ion trap by additional DC electrodes. The Helios architecture stores qubits in a large ring, sorts them by transport around the ring and through an X-junction \cite{PhysRevLett.130.173202} into an intermediate ``cache" region, and subsequently further transports the qubits into one of two ``legs" for cooling, quantum logic operations and temporary storage (see Ref.~\cite{helios}, Figs.~1 and 2). Enabled by a new classical control system and software stack, Helios implements real-time compilation of program operations (using the Guppy programming language \cite{guppy}) into physical hardware commands on the device.  This permits realization of many new algorithmic primitives, including dynamic allocation of qubits, classical control flow such as conditional statements and loops, early conditional termination of programs, and remote procedure calls to an external server that can perform arbitrary classical computations and return information needed for on-the-fly decisions about quantum logic (``gate streaming", demonstrated in Refs.~\cite{Niroula2025}, \cite{Liu2025}).

All primitive quantum logic operations (measurements, resets, and gates) on Helios are performed via interactions that are mediated by several lasers of different frequencies delivered to the QPU via free space; we refer the reader to Sec.~IIB of Ref.~\cite{helios} for further details. Qubits on Helios are initialized into the $|0\rangle$ state. Arbitrary one-qubit (1Q) gating operations are decomposed by the compiler into $U_{1Q} (\theta, \phi)= \exp \left(-i \frac{\theta}{2} (\cos(\phi) X + \sin(\phi) Y) \right)$ physical gates and $R_Z (\theta) = \exp \left( -i \frac{\theta}{2} Z\right)$ gates that are implemented fully in software as phase updates. The two-qubit (2Q) gate is mediated by the M{\o}lmer-S{\o}rensen interaction made phase-insensitive via wrapper pulses and implements $U_{ZZ} (\theta) = \exp \left(-i \frac{\theta}{2} ZZ \right)$ \cite{Lee_2005, Pino2021}, where $\theta$ is a user-specified angle and the gate error scales linearly with angle up to the maximal entangler $\theta = \pi / 2$ \cite{PhysRevX.13.041052}. Several types of measurements are available on Helios; in addition to the standard (possibly mid-circuit) measurement of isolated qubits, requests for simultaneous measurement of batches of qubits can trigger compilation into a ``protected" measurement operation that reduces crosstalk. Furthermore, users can request a ternary-valued measurement that may return $2$ to herald that the measured qubit has leaked outside the qubit subspace, at the expense of extra measurement error since this scheme requires several additional optical pulses. The leakage-heralded measurement may similarly be protected if ternary measurements are requested on batches of qubits simultaneously.

The infidelities associated to each of these quantum logic operations are reported in Table II of Ref.~\cite{helios}. For self-contained comparisons to the logical benchmarks in this work, we note in particular that the SPAM error associated to the standard measurement was measured at $p_{\text{SPAM}} = 4.8(6) \times 10^{-4}$ per qubit, and the 2Q gate error was measured to be $\epsilon_{2Q} = 8(2) \times 10^{-4}$ at the maximum gate angle. In the full-circuit benchmarks studied in that work, including random Clifford and mirror circuits, several heuristic formulas for circuit fidelity were used to extract an ``effective" 2Q gate infidelity $\tilde{\epsilon}_{2Q}$ roughly following a depolarizing-channel model. These estimates include additional overheads such as memory errors incurred during qubit idling and predict an effective 2Q gate infidelity at the maximum angle of $\tilde{\epsilon}_{2Q} = 20(2) \times 10^{-4}$. For comparison, in Fig.~\ref{fig:computing} we report $\tilde{\epsilon}_{2Q}$ for $XY$ model simulation experiments using \emph{partially}-entangling gates ranging from $\tilde{\epsilon}_{2Q} = 10.7(5) \times 10^{-4}$ (unencoded) to $\tilde{\epsilon}_{2Q} = 7.1(6) \times 10^{-4}$ (encoded in $\I{64}$, leakage-heralded). Here, and similarly to Ref.~\cite{helios}, these values are extracted from a simple depolarizing model $F(N_{\text{gates}}) = A(1-\frac54 \tilde{\epsilon}_{2Q})^{N_{\text{gates}}}$ with $N_{\text{gates}}$ the number of two-qubit gates in the circuit, so that $\tilde{\epsilon}_{2Q}$ is an effective average fidelity and $A$ is an overall normalization capturing SPAM errors.

\subsection{QEC benchmarking}\label{sec:QEC Benchmark}
For all $k > 1$ QEC/QED experiments, due to the application of one or more rounds of QEC/QED, we assume that the logical memory channel on all $k$ can be written as a stochastic logical Pauli channel \cite{Beale2018}. Therefore, we can write this channel in the form of
\begin{equation}
    \mathcal{E}(\rho) = \sum_i p_i P_i\rho P_i, \label{eq:pauli mult}
\end{equation}
where the sum is over all $k$ qubit Paulis. For the $\I{8}\circ\I{6}$ experiments, in which a single round of QEC was performed on $\ket{\overline{0}}^{\otimes 48}$ and $\ket{\bar{+}}^{\otimes 48}$, we apply a union bound to the process fidelity. Let us denote by $p_{\ket{\overline{0}}^{\otimes 48}}$ and $p_{\ket{\bar{+}}^{\otimes 48}}$ the probability of any logical error occurring to the $\ket{\overline{0}}^{\otimes 48}$ and $\ket{\bar{+}}^{\otimes 48}$ states. The process infidelity is $I = \sum p_i$, where the $p_i$ are the non-identity Pauli error probabilities in Eq.~(\ref{eq:pauli mult}). Since any $P_i$ containing an $X$ or a $Y$ will cause a logical error on $\ket{\overline{0}}^{\otimes 48}$ and any $P_i$ containing a $Y$ or a $Z$ will cause a logical error on $\ket{\bar{+}}^{\otimes 48}$, we have that 
\begin{equation}\label{eq:proc mult bound}
    \frac{1}{2}(p_{\ket{\overline{0}}^{\otimes 48}} + p_{\ket{\bar{+}}^{\otimes 48}}) \leq I \leq p_{\ket{\overline{0}}^{\otimes 48}} + p_{\ket{\bar{+}}^{\otimes 48}}.
\end{equation} The average infidelity for this $48$-qubit process is essentially identical to the process infidelity in this case, differing by a factor of $\frac{2^{48}}{2^{48}+1}$, and we report the upper bound on the average infidelity divided by the number of logical qubits as the average infidelity of a QEC cycle per logical qubit for these experiments.

For the $k=4$ experiments, repeated Pauli errors which may occur on a single qubit due to repeated applications of QEC/QED will cancel, so the above analysis does not apply directly. To deal with these cancellations, we restrict our attention to single-qubit logical memory channels by marginalizing over all but one qubit. For logical qubit $j$, the reduced channel for $\tilde{\rho} = \tr_{[k]\backslash j} \rho$ is
\begin{equation*}
    \mathcal{E}(\tilde{\rho}) = \tilde{p}_I \tilde{\rho} + \tilde{p}_X X \tilde{\rho} X + \tilde{p}_Y Y \tilde{\rho} Y + \tilde{p}_Z Z \tilde{\rho}  Z
\end{equation*}
where $\tilde{p}_P$ is the probability of the single qubit Pauli $P$ being applied to $\tilde{\rho}$, which can be obtained by summing over the terms $p_i$ in the full channel which apply a $P$ operator to logical qubit $j$. The process infidelity $\tilde{I_j}$ of the reduced channel is $\tilde{p}_X + \tilde{p}_Y + \tilde{p}_Z$.

To estimate the $\tilde{p}_P$, we can make use of a formula which takes into account possible cancellations of Pauli errors for the survival probability of a single logical qubit under repeated rounds of QEC \cite{RyanAnderson2021} given by
\begin{equation}\label{eq:qec sq}
    p_{\psi}(c) = \frac{1}{2} + \left(s_{\psi} - \frac{1}{2} \right) \left(1 - 2 q_{\psi} \right)^c.
\end{equation}
where $p_\psi(c)$ is the survival probability of an initial state $\ket{\overline{\psi}}$ after $c$ rounds of QEC. The logical SPAM error of producing state $\ket{\overline{\psi}}$ is $s_{\psi}$, and $q_{\psi}$ is the probability of introducing a logical error to $\ket{\overline{\psi}}$ after one round of QEC. Since both logical $X$ and $Y$ operators will flip $\ket{\overline{0}}$, we have that $p_{\ket{\overline{0}}} = \tilde{p_X} + \tilde{p_Y}$ and, similarly,  $p_{\ket{\bar{+}}} = \tilde{p_Y} + \tilde{p_Z}.$ Therefore, we can estimate $\tilde{I}$ from the survival probabilities of the input states over repeated rounds of QEC by again applying a union bound,
\begin{equation}\label{eq:proc sing bound}
    \frac{1}{2}(p_{\ket{\overline{0}}} + p_{\ket{\bar{+}}}) \leq \tilde{I} \leq p_{\ket{\overline{0}}} + p_{\ket{\bar{+}}}.
\end{equation}

We obtain estimates for the $p_\psi$ by fitting experimental data to the curve in Eq.~\eqref{eq:qec sq} through maximum likelihood estimation to take into account non-normality of the sample distributions due to rare events. Uncertainties were derived from Markov Chain Monte Carlo sampling of the curve-fitting parameters. For both $k=4$ experiments, we report the average of the upper bounds for these single-qubit infidelities as the infidelity of a QEC cycle per logical qubit. This quantity is the analog of the depth-1 memory error infidelity reported in \cite{helios}, which does not capture correlations in the noise between different qubits and instead treats the memory error as a single-qubit process. We note that when taking the average of the infidelity over the reduced channels, we add the statistical uncertainties of the infidelities linearly and not in quadrature, since the estimators for the different logical qubits are not independent. For all experiments, confidence intervals were obtained by adding the $1\sigma$ statistical uncertainty (68\%) of all estimators to the lower and upper bounds in Eq.~\eqref{eq:proc mult bound} and Eq.~\eqref{eq:proc sing bound}.

The average infidelity $1.1(3)\times 10^{-4}$ corresponding to non-leakage component of the depth-1 linear memory error reported in the main text is obtained from the transport-1QRB benchmarking data from \cite{helios}. We fit the non-leakage-heralded memory error to the quadratic curve $ax^2 + bx - c$ where $x$ is the number of transport rounds and $b$ is the depth-1 linear memory error and $a$ is the quadratic memory error. For those experiments, $a = 6(3)\times 10^{-5}$, $b = 1.1(3)\times10^{-4}$, and $c = 2(5)\times 10^{-5}$.

We compare to the linear memory error because a limited amount of dynamical decoupling (DD) is used in all QEC/QED memory experiments in the second half of each cycle when the $Z$ syndromes are extracted, since $Z$-type noise on the data qubits commutes with the gates therein and can therefore build coherently. In the $d=4$ case, X pulses are inserted before, at the midpoint of, and after the high-level syndromes are extracted, while in the $d=2$ case, X pulses are applied to the qubits that are gated in the second half of the Z syndrome extraction circuit, then to all the qubits at midpoint of Z syndrome extraction, and then to the qubits gated in the first half of syndrome extraction at the end of the QED cycle. This ensures that X pulses are applied opportunistically in a way that does not induce additional transport while commuting with the extracted stabilizers. In the first half of each cycle, no DD is used since $X$ syndrome extraction should convert $Z$-type noise into incoherent errors.

For all $k=4$ experiments, the QED/QEC cycle rejection rates over $c \in \{1,2,4,8\}$ cycles were fitted to the exponential $(1-r)^c$, where $r$ is the rejection rate per cycle. In the $X$ basis, $r = .060^{+2}_{-2}$ and $r = .027_{-1}^{+2}$, for $d=2$ and $d=4$ respectively. In the $Z$ basis, $r =.070_{-3}^{+2}$, and $r=.030_{-2}^{+2}$, for $d=2, 4$ respectively.

For all $d=2$ and $d=4$ experiments, leakage-heralded measurement is used to postselect on any leakage, since leakage at certain locations can potentially violate fault-tolerance in the absence of gadgets which introduce additional overheads \cite{brown2020}. An alternative solution is leakage repump \cite{Hayes_2020}, which converts leakage into depolarizing noise that can be fault-tolerantly handled by QEC/QED without additional circuit gadgets. All QEC/QED acceptance rates reported are conditioned on the absence of leakage and successfully preparing the data and ancilla qubit code blocks. Leakage acceptance rates, and preacceptance rates for state preparation, are given in SI Sec.~\ref{sec:QEC}.

\subsection{Logical gate benchmarking}\label{sec:methods_lcb}
To perform logical cycle benchmarking on gates in/between $\I{4}$ code blocks, we sample a random logical Pauli string $P$ over all logical qubits and initialize the system in one of its corresponding logical eigenstates. For inter-block gates between two $\I{4}$ blocks, Pauli strings of length $8$ are used, reflecting the fact that faults can propagate across all logical qubits in both code blocks due to the nonlocal structure of syndrome-extraction gadgets. We repeat the Pauli-twirled logical gate (cycle) $L$ times, perform syndrome extraction at a fixed frequency, and then measure in the $P$ basis to obtain an estimate of the expectation value of $P$. We choose a maximum cycle depth of $L = 64$ for intra-block gates and $L=48$ for inter-block gates. Syndrome-extraction gadgets are interleaved at a fixed cadence of one round per eight logical cycles, enabling detection of single-qubit faults and two-qubit Pauli errors not in the set $\{XX,YY,ZZ\}$ that occur during gate execution or syndrome extraction. Any shot with a detected error is discarded, and the remaining shots are used for estimating the logical observable.

Pauli twirling is implemented using shot-level randomness via the real-time random number generation implemented in the Guppy programming language. The Pauli twirling allows the aggregate error channel on the input state $\rho$ to be written as a stochastic Pauli channel
\begin{equation*}
    \mathcal{E}(\rho) = \sum_i p_i P_i\rho P_i,
\end{equation*}
where the sum is over all Pauli operators (modulo an overall phase) and the Pauli error probabilities $p_i \in[0,1]$ sum to 1.
Taking a Walsh-Hadamard transform of these error probabilities, we obtain the \emph{Pauli fidelities}, $f_j$, which are equivalent to the eigenvalues of the Pauli error channel. Specifically, 
\begin{equation*}
    f_j = \sum_i (-1)^{\langle i,j\rangle} p_i,
\end{equation*}
where the symbol $\langle i,j\rangle$ is 0 or 1 depending on whether the Paulis $P_i$ and $P_j$ commute or anti-commute.

Denoting a noisy circuit of length $L$ as $C_j$, the output of the Pauli-basis measurement is $\mathbb{E}_L[P_j] = \text{Tr}({P_j C_j(P_j)})$. 
We estimate the Pauli fidelity by fitting the expectation value decay with increasing $L$ to an exponential model, i.e. $\mathbb{E}_L[P_i] = A_j f_j^L$.
The fit is performed using maximum likelihood estimation (see SI Sec.~\ref{sec:si_lgb} for more details).
We then average the Pauli fidelities to obtain an estimate of the process fidelity $F_{\mathrm{pro}}$ of the logical cycle, which, for the case of a single logical 2Q gate per cycle, can be converted to the gate-averaged fidelity via $F_{\mathrm{avg}} = (4F_{\mathrm{pro}} + 1)/5$. 
The $F_\text{avg}$ for each of the three experiments are reported in Fig.~\ref{fig:entangling}, along with error bars representing the $1\sigma$ (68\%) confidence intervals obtained via standard parametric bootstrap.
We discuss the details of the statistical analysis along with the postselection rates from error detection in SI Sec.~\ref{sec:si_lgb}.

\subsection{Logical GHZ state preparation} \label{sec:methods_ghz}

We estimate the GHZ state preparation fidelity in the $d=2$ codes via a single-shot protocol, making use of the fact that the GHZ state is uniquely identified by its stabilizers $\prod_i \logX^{i}$ and $\logZ^i \logZ^{i+1}$. The former can be measured in a quantum non-destructive (QND) fashion while the latter can be obtained by destructive readout of all qubits in the $Z$ basis. In contrast to other recent large-scale GHZ state experiments \cite{Wei2025}, we do not postselect on any of the GHZ stabilizers.
The logical GHZ state in a single global $\I{k}$ code is particularly simple; as described in the main text, it factorizes into a physical GHZ state on $n-2$ qubits and a Bell pair. As such, it can be directly prepared fault-tolerantly by a circuit similar to the code initialization circuit (see SI Sec.~\ref{sec:ghz_prep} for an illustration). The GHZ stabilizer $\prod_i \logX^{i}$ can be measured efficiently since it is stabilizer-equivalent to $X_0 X_{n-1}$. We measure $X_0 X_{n-1}$ twice to ensure robustness against measurement errors by coupling to a single ancilla each time that is dynamically allocated before reading out the parity.

Break-even is relatively easy to achieve for logical GHZ state preparation on a global $\I{k}$ code block, since the protocol prepares a physical GHZ state of essentially comparable size and then enhances the logical fidelity via postselection. By contrast, preparing a logical GHZ state across many small iceberg code blocks is significantly more general and nontrivial, corresponding to a complicated entangled state of all physical qubits. We describe here the case of many $\I{4}$ iceberg code blocks, which uses the same FANOUT gate and code structure as in our logical gate benchmarking protocol. However, the preparation for any number and size of code blocks follows a similar structure. This protocol is displayed in Fig.~\ref{fig:entangling}c and begins with preparation of one code block in the $|\overline{\text{GHZ}}_4\rangle$ state by the procedure described above while all other code blocks are initialized in the $|\overline{0}\rangle^{\otimes 4}$ state. We then extend the logical GHZ state on the first code block to the full logical GHZ state using the logical FANOUT gate described above. This gate maps the state $|\overline{\text{GHZ}}_{k_1}\rangle \otimes |\overline{0}\rangle^{\otimes k_2 }$ to the state $|\overline{\text{GHZ}}_{k_1 + k_2}\rangle$ when controlled on any logical qubit in the $\I{k_1}$ code. One then proceeds analogously to the usual GHZ state preparation (in which the $|+\rangle$ state on one qubit is extended to the $|\text{GHZ}\rangle$ state on all qubits by means of a tree of CX gates), by applying the logical FANOUT gate in a log-depth tree spanning iceberg code blocks. Each application of the logical FANOUT only requires four two-qubit gates compared to $n$ when instead applying transversal CX gates between each pair of code blocks, and straightforwardly applies to the case that the code blocks are of incommensurate size. Subsequently, the GHZ stabilizer $\prod_i \logX^{i}$ can still be measured relatively efficiently using the product of $X_0 X_{n-1}$ across all code blocks. We order the gates so that first all top qubits $q_0$ in each block are coupled to the ancilla followed by all bottom qubits $q_{n-1}$ in the opposite order, c.f. Fig.~\ref{fig:entangling}c, so that $X$ hook errors are almost always detected on at least one code block. We again repeat the measurement of $\prod_i \logX^{i}$ to be FT to measurement errors on the ancilla. We note that although many applications of the $\I{k}$ codes use the pFT continuously-parameterized two-qubit Pauli rotation gates, both GHZ state preparation protocols detailed above are actually fully FT to single weight-2 faults.

The logical operators of $\I{k_2} \circ \I{k_1}$ are rectangles and we denote them by $\logX^{i, j} = X_{0, 0} X_{i, 0} X_{0, j} X_{i, j}$, $\logZ^{i, j} = Z_{i, j} Z_{n_1-1, j} Z_{i, n_2-1} Z_{n_1-1, n_2-1}$. To prepare a global GHZ state on the $48$ logical qubits of the $\I{8}\circ\I{6}$ code, we begin by initializing a code block in the $\ket{\overline{0}}^{\otimes 48}$ state. The stabilizers of this state already contain the Z-type GHZ stabilizers $\overline{Z}^{i_1,j_1}\overline{Z}^{i_2,j_2}$ for $1\leq i_1,i_2 \leq 6, 1 \leq j_1,j_2, \leq 8$. To enforce the condition that $\prod_{i=1,j=1}^{6,8}\overline{X}^{i,j}$ stabilizes the state, we measure this operator using an $\I{6}$ ancilla. We observe that the operator $X_{0,0}X_{7,0}X_{0,9}X_{7,9}$ commutes with all the $\I{8}\circ\I{6}$ stabilizers and anticommutes with all $\overline{Z}^{i,j}$. Therefore, this operator is equal to $\prod_{i=1,j=1}^{6,8}\overline{X}^{i,j}$ up to stabilizers. We can measure this operator by initializing an $\I{6}$ ancilla with stabilizer $X_0X_7$, performing transversal CX gates from this ancilla to the top and bottom blocks of the $\I{8}\circ\I{6}$ code, and then measuring this ancilla in the X basis, see Fig.~\ref{fig:entangling}d. This will cause the state to collapse to a $\pm 1$ eigenvalue of $\prod_{i=0,j=0}^{5,7}\overline{X}^{i,j}$ with random sign. We flip the sign of this eigenvalue in software to $+1$ by updating the Pauli frame. Since we are preparing the GHZ state in a $d=4$ code, we must measure the eigenvalue of $\prod_{i=1,j=1}^{6,8}\overline{X}^{i,j}$ a second time to protect against two measurement/gate errors which could have caused a logical error in the first $\I{6}$ measurement. If the two $\I{6}$ measurements differ by a logical operator, we postselect. By postselecting in the presence of two faults, we will prepare the state with logical error rate of $O(p^3)$, and, by correcting a single error anywhere in the preparation procedure, the postselection probability will be $O(p^2)$. See SI Sec.~\ref{sec:ghz_prep_d4} for simulations demonstrating these scalings with the error rate.

\clearpage
\onecolumngrid
\appendix

\setcounter{equation}{0}
\setcounter{figure}{0}
\renewcommand{\theequation}{S\arabic{equation}}
\renewcommand{\thefigure}{S\arabic{figure}}
\renewcommand{\thesection}{\Roman{section}}
\renewcommand{\thesubsection}{\arabic{subsection}}
\renewcommand{\appendixname}{}

\begin{center}
{\large \bf Supplementary Information  for:\\ Computing with many encoded logical qubits beyond break-even}
\end{center}

\section{State preparation and error mitigation techniques} \label{sec:state_prep_gadgets}

In the $\I{k}$ codes, we prepare the logical zero state $|\overline{0}\rangle^{\otimes k} = |\text{GHZ}_k\rangle$ via a tree of CX gates that originate at a $|+\rangle$ state with target qubits initialized in the $|0\rangle$ state. This tree can be extended to contain up to $n/2$ branches where $n=k+2$, in which case the state preparation proceeds in log depth. Pairs of branches are coupled to an ancilla qubit that reads out the $ZZ$ operator of the qubits at the end of each branch; therefore, up to $n/4$ ancilla qubits and $n/2$ additional CX gates are required, although the ancilla qubits may be reused at the expense of reduced circuit parallelization.
The simplest such case is the two-branch preparation presented in Ref.~\cite{iceberg-beyond-the-tip} and for which the corresponding gate connectivity graph is displayed in Fig.~\ref{fig:halfdepth}.

\begin{figure}[h!]
\centering
\includegraphics[width=\textwidth]{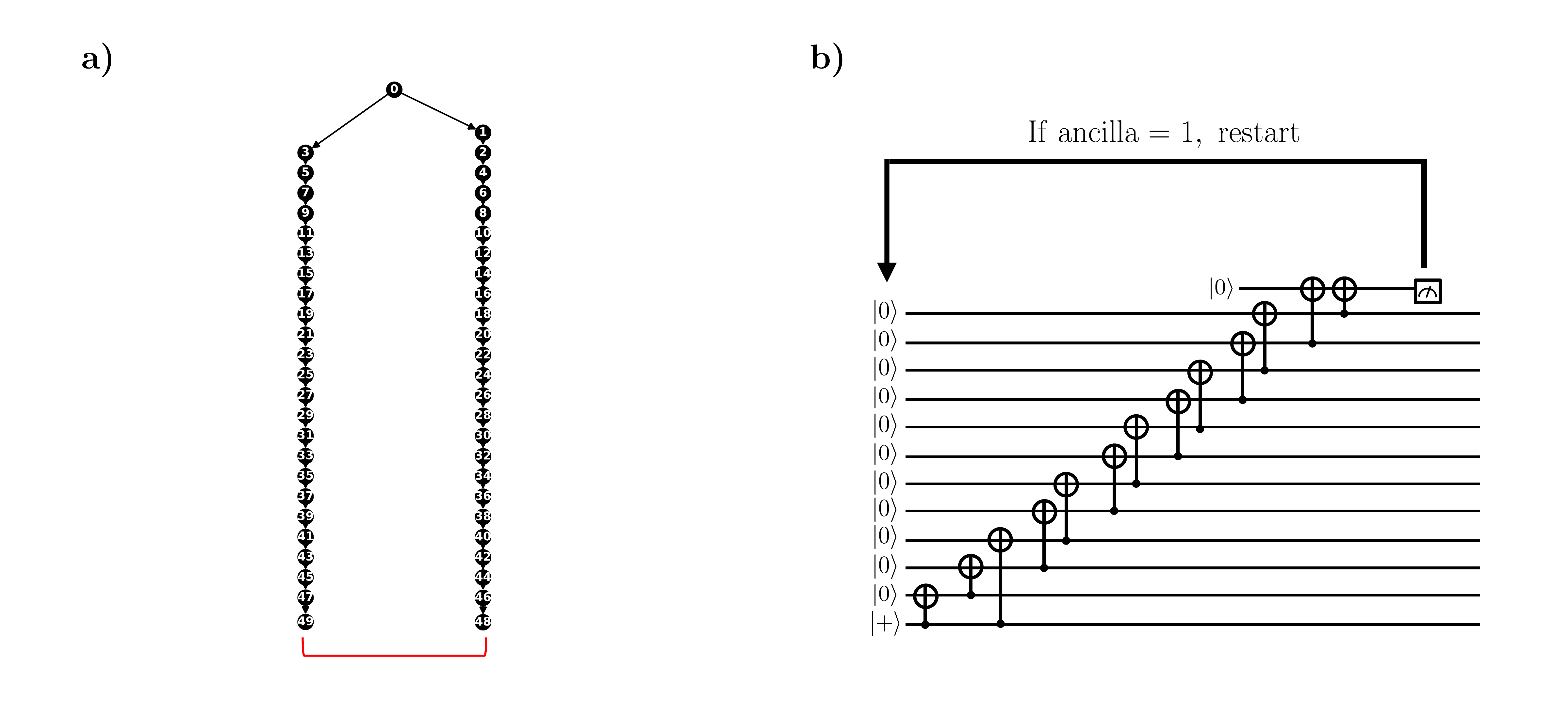}
\caption{\textbf{Two-branch state preparation gadget in the $\I{k}$ codes.}
\textbf{(a)} Two-branch tree specifying preparation of the logical zero state of the $\I{48}$ code. Nodes are labeled with the index $i$ corresponding to qubit $q_i$ in the code. Qubit $q_0$ is initialized in the $|+\rangle$ state while all other qubits are initialized in the $|0\rangle$ state; directed arrows indicate the control and target of CX gates between the qubits and all arrows terminating on nodes at the same vertical height are performed in parallel. The red bracket connecting the terminal nodes indicates that the corresponding operator $Z_{48}Z_{49}$ will be read out by coupling to an ancilla. \textbf{(b)}  The circuit gadget instantiating the two-branch tree in the $\I{10}$ code for illustration.
}\label{fig:halfdepth}
\end{figure}

This gadget is fault-tolerant: all single faults are detectable or do not act on the logical state, as we now explain. First, a weight-1 $Z$ error does not propagate down the CX tree at all, so it will be detected by the $X$ stabilizer later. A weight-1 $X$ error does propagate down the tree; however, it either occurs on $q_0$ in which case it applies the code stabilizer $\prod_i X_i$ to the physical state, or it flips the value of the $ZZ$ operator measured at the end of the branches, since it can only spread down one branch. Gate errors may also cause single weight-2 faults. Any Pauli error on a single gate that does not involve $q_0$ either flips the $ZZ$ operator at the end of the branches (if it applies an $X$ or $Y$ to the target), causes a detectable syndrome, or applies a $ZZ$ operator that acts as the identity on a physical GHZ state. The situation is similar for gates that do involve $q_0$; although these can cause Pauli errors to potentially propagate down both tree branches if they apply an $X$ or $Y$ to $q_0$, either this results in an overall detectable $Y$ error, or it applies the $\prod_i X_i$ code stabilizer.

In Fig.~\ref{fig:logdepth} we illustrate the log-depth version of this state preparation protocol. The key observation is that the same arguments above for fault-tolerance apply as long as we specify the tree such that the $ZZ$ operators measured always correspond to two qubits whose only common ancestor is the root of the tree $q_0$. Conceptually, one builds up such a tree iteratively by separating nodes into two sets: those with $q_1$ as an ancestor, and those without. Each time a new successive node is added to the tree, it is either added to the descendants of $q_1$ or to the side of the tree that does not descend from $q_1$ depending on which set is larger, to maintain a balance between the size of the two sets. Following this procedure, one obtains a log-depth tree in which half of the terminal nodes descend from $q_1$ and half do not; pairing off nodes in each set obtains the $ZZ$ operators that should be checked for fault-tolerance. It is possible, depending on the desired depth of the state preparation protocol, for the tree generated in this fashion to be imbalanced by at most one surplus/deficit qubit descending from $q_1$; the leftover qubit may be paired with $q_0$ in checking $ZZ$ operators of the pairs of terminal nodes.

\begin{figure}[h!]
\centering
\includegraphics[width=\textwidth]{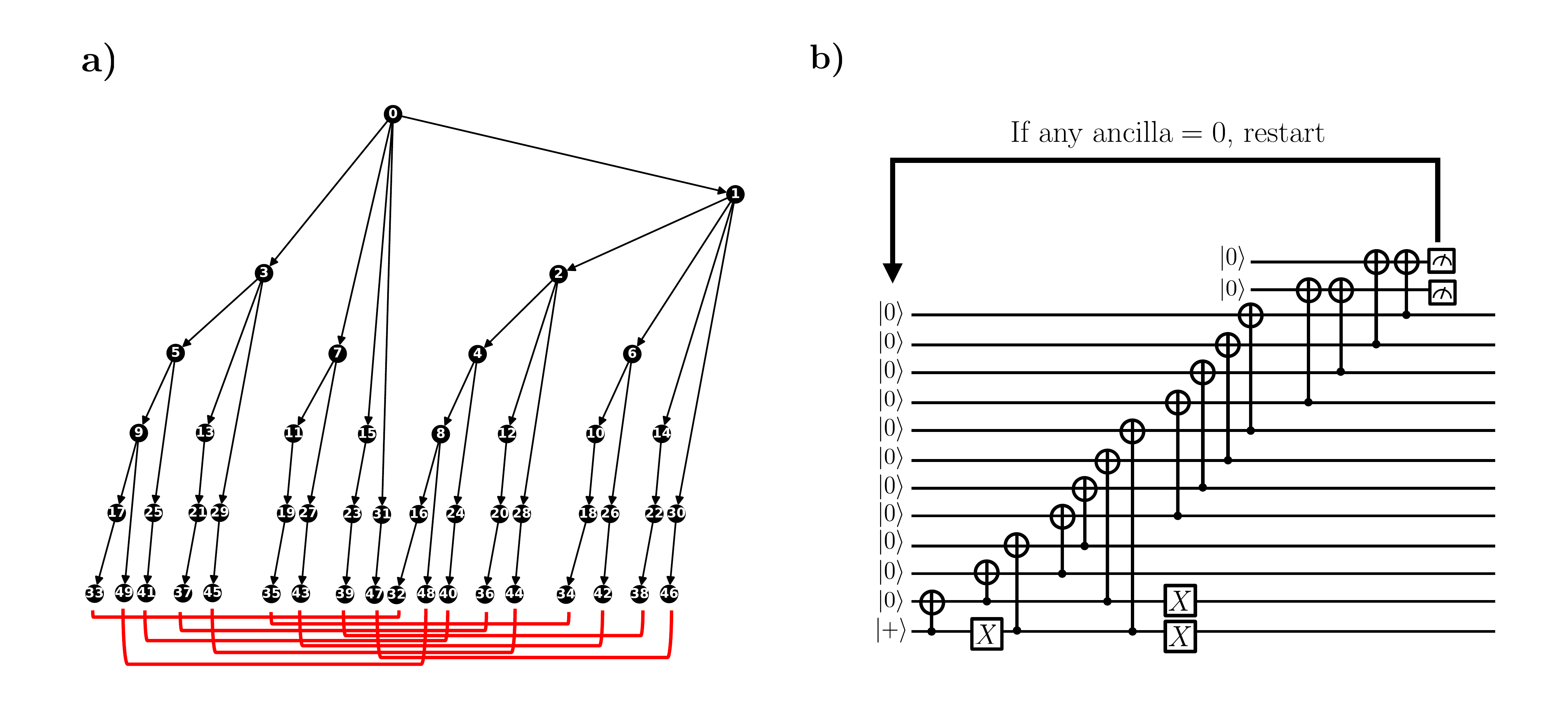}
\caption{\textbf{Log-depth state preparation gadget for $\I{k}$ codes.}
\textbf{(a)} The balanced tree structure specifying preparation of the logical zero state in the $\I{48}$ code. As before, red brackets indicate pairs of qubits whose $ZZ$ operator will be read out by coupling to an ancilla (note that in this particular case, the qubit with index $m$ on the left connects to the qubit with index $m-1$ on the right). \textbf{(b)} Circuit instantiation of the log-depth tree in the $\I{10}$ code, as also depicted in Fig.~\ref{fig:computing}. Pauli $X$ gates in the circuit implement the DFS-based error mitigation discussed below; note that this flips the noiseless value of the ancilla measurements that is checked as part of the repeat-until-success protocol.
}\label{fig:logdepth}
\end{figure}

Rather than postselect on errors detected on the ancilla measurements that record the various $ZZ$ operators, we instead use a repeat-until-success (RUS) protocol in which any detected errors trigger a reset of all qubits followed by a new attempt at initializing the code. In Fig.~\ref{fig:rus} we report statistics on the number of loop iterations for RUS to pass, across all experiments in which it was utilized. Even for the largest code studied, $\I{94}$, initialization succeeds with exceptionally high probability in only one or two loop iterations, with a maximum of six loop iterations required across all experiments of any size in this work. We group the results in Fig.~\ref{fig:rus} for each experiment type (SPAM, GHZ, $XY$ model simulation) and aggregate results within each category across experiments using the same-size codes since other differences do not affect the state preparation. For GHZ state preparation experiments using many $\I{4}$ code blocks we report the success with which we can prepare each individual $\I{4}$ code block and aggregate the statistics across all $\I{4}$ code blocks in a given shot. However, we separate the results reported by total number of code blocks since, e.g., the memory error experienced by the qubits may depend on the total number of qubits involved in the experiment.

\begin{figure}[h!]
\centering
\includegraphics[width=\textwidth]{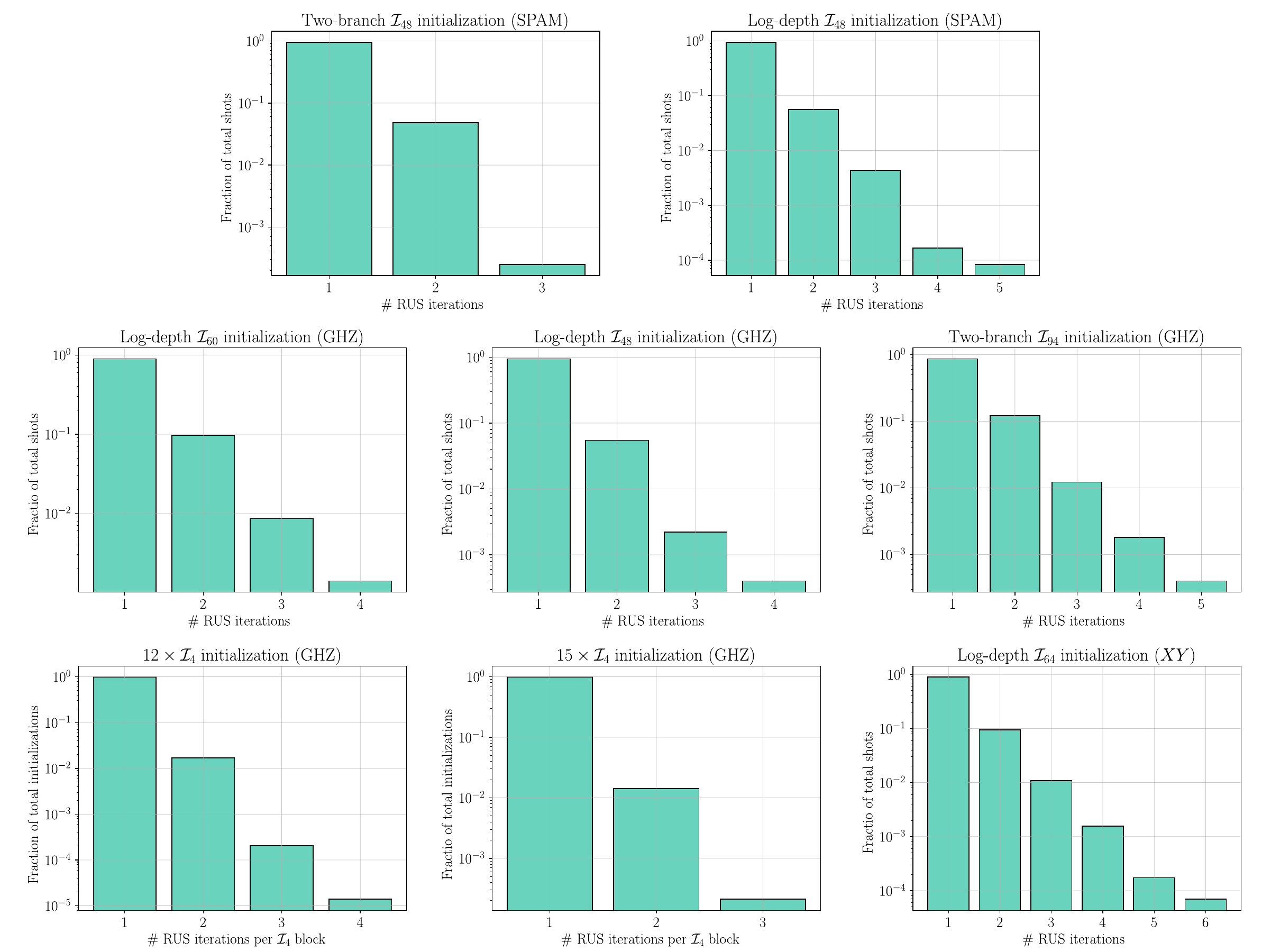}
\caption{
\textbf{Repeat-until-success statistics for distance-2 experiments.} We report the fraction of the time that a code block required each number of RUS iterations before no error was detected, corresponding to the fraction of total shots for experiments using a single $\I{k}$ code block, or fraction of total initializations $=$ (total shots times number of code blocks) for experiments using multiple code blocks. Experiments of the same type (i.e., SPAM, GHZ state preparation, or $XY$ model simulation) are aggregated on the plots above when they only differ in aspects that do not affect state preparation. Shot counts for each experiment are reported in the relevant SI sections below.
}\label{fig:rus}
\end{figure}

Although the log-depth initialization gadget substantially reduces the circuit depth of state preparation, qubit idling remains a concern due to the potential for the buildup of coherent dephasing errors (for instance, due to spatial and temporal variance in magnetic fields) which can grow quadratically at small idle times. Below, we discuss additional error mitigation techniques that were employed only in the log-depth gadget to further reduce coherent memory errors.

To the extent that coherent memory errors are \emph{spatially} uniform across the device, motivated by \emph{decoherence-free subspace} (DFS) codes \cite{PhysRevLett.81.2594, Kielpinski2002} we insert a Pauli X operator into the log-depth tree so that after each full layer of the tree is applied, the qubits are in a ``decoherence-free state'' that is a superposition of computational basis states with an equal mixture of 0s and 1s. The net effect at the end of state preparation is equivalent to the application of $X_i$ operators on half of the physical qubits, so one can simply flip the relevant bits in postprocessing with the understanding that in the case that $n = 2j$ with $j$ odd, the code space throughout the circuit is now defined by the $S_Z = -1$ subspace. In Fig.~\ref{fig:logdepth} we show the inserted Pauli $X$ operator along with two additional Pauli $X$ operators that we insert at the end of initialization so that $q_0$ and $q_{n-1}$ are in opposite states; since these two qubits correspond to the pivot qubits, they may idle for long periods during computations in a single $\I{k}$ code. Consequently, this may provide additional protection against coherent dephasing on those two qubits in particular. In deep circuits, we expect that the code does not remain in a DFS state; however, at the beginning and end of the circuit, and in particular during any terminal syndrome extraction, we might expect the state to remain approximately DFS, so this technique provides a low-overhead way to mitigate some coherent error both during the initialization gadget and potentially during an encoded circuit.

To the extent that coherent memory errors may be spatially varying but relatively \emph{temporally} uniform, we insert two dynamical decoupling (DD) $X$ pulses on the qubits idling for the longest period during state preparation. As in Ref.~\cite{floquet}, we alternate the sign of the DD pulses on each qubit between $+X$ and $-X$ (which have physically distinct implementations on Helios) so as to cancel any potential coherent over-rotation on the single-qubit gate itself in the positive vs. negative directions. We do not depict these gates in Fig.~\ref{fig:logdepth} but they are arranged to occur during the long idle period on half the system qubits on the right-hand side of the gadget, with roughly equal temporal spacing.

\begin{figure}
\includegraphics[width=\textwidth]{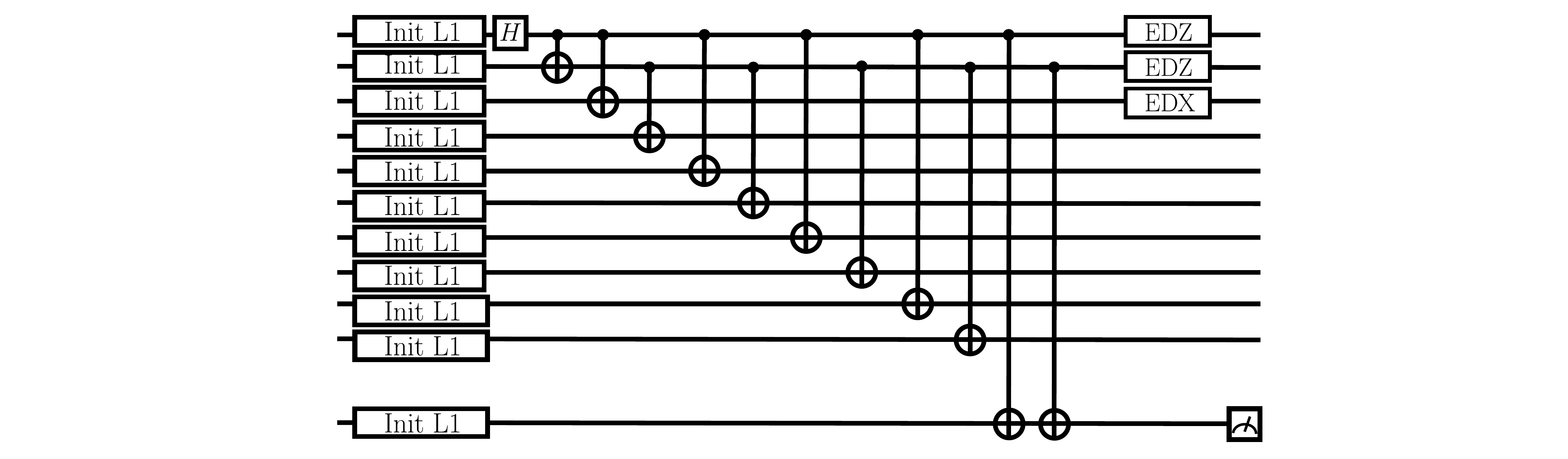}
\caption{\textbf{Two-branch state preparation gadget in the $\I{8} \circ \I{6}$ code.}
All wires are logical, representing $\I{6}$ blocks and all CX gates are transversal. The ``init L1'' gadgets initialize the $\I{6}$ using the $d=2$ two-branch state preparation circuit of Fig.~\ref{fig:halfdepth}. 
}\label{fig:halfdepthd4}
\end{figure}

Finally, we introduce a new, FT two-branch $\ket{0}^{\otimes k_1k_2}$ state preparation gadget for concatenated iceberg codes $\I{k_2}\circ \I{k_1}$ which generalizes the $d=2$ circuit in Fig.~\ref{fig:halfdepth}. This circuit is shown for the $\I{8}\circ\I{6}$ code in Fig.~\ref{fig:halfdepthd4}. First, we initialize the lower-level $\I{k_1}$ blocks using the $d=2$ two-branch circuit in Fig.~\ref{fig:halfdepth}. Then, we perform a two-branch circuit comprised of transversal CX gates with a logical ancilla which measures the logical $ZZ$ operators of two $\I{6}$ blocks to catch all cascading logical $X$ errors, analogous to the $d=2$ case. The main difference is that we arrange the CX gates so that they have only two controls, so that any weight-two Z error resulting from a single failure will be caught by the $d=2$ EDZ gadgets, which extract $X$ syndromes of the $\I{k_1}$ code using a Bell-pair syndrome extraction circuit similar to the non-destructive part of Fig.~\ref{fig:se_protocols}b. Similarly, the EDX gadget will extract $d=2$ Z syndrome of the third $\I{k_1}$ block to catch high weight $X$ errors which may result from an $XX$ error on one of the CX gates in the first transversal CX between the top two code blocks. We postselect on any non-trivial measurements in these EDZ/EDX gadgets as well as in the bottom transversal $Z$ measurement of the $\I{k_1}$ ancilla. If the $\I{k_2}\circ\I{k_1}$ code passes these checks, a single failure will lead to at most a weight-one error on the code block. Pairs of gate failures can lead to columns of errors on the output code block of weight possibly larger than two. However, since these error strings intersect the support of any weight-four logical operator on at most two qubits, we shall show in Sec.~\ref{sec:QEC} that such error are ``generalized weight-two," in the sense that they can be treated as weight-two errors and the usual proofs that computations are FT to distance four will still apply. 

\section{Data from SPAM experiments} \label{sec:spam}

In Table~\ref{tab:spam} we report explicitly all distance-2 data appearing in Fig.~\ref{fig:encoding}b and Fig.~\ref{fig:encoding}c.

We report separately the distance-4 data, which does not use RUS in its state preparation protocol, so that we can distinguish the pre-acceptance and post-acceptance rates of the experiment, where the \emph{pre-acceptance rate} corresponds to the fraction of leakage-postselected shots that pass state preparation and the \emph{post-acceptance} rate refers to the fraction of remaining shots that are not postselected due to a weight-two error. In the $\I{8} \circ \I{6}$ SPAM experiment the leakage acceptance rate was $0.775^{+7}_{-6}$, the pre-acceptance rate was $0.654(8)$, and the post-acceptance rate was $0.987(2)$. No logical errors were recorded after performing 4000 shots of the experiment, which is consistent with an upper bound (Wilson 68\% confidence interval) on the SPAM infidelity per logical qubit of $8.3 \times 10^{-6}$.

\begin{table}[h!]
\begin{ruledtabular}
\begin{tabular}{lccccc}
Experiment & Infidelity ($\times 10^{-5}$) & Total shots & Leakage accept rate &  Overall accept rate  \\[.3em]  \hline \\[-.6em]
Unencoded & $48(6)$ & $4000$ & -- & --\\[.1em]
$\I{48}$, two-branch & $6^{+3}_{-2}$ & $2000$ & -- & $.942(5)$\\[.1em]
$\I{48}$, log-depth & $3^{+3}_{-1}$ & $2000$ & -- & $.943(5)$\\[.1em]
$\I{48}$, two-branch, $S_X$ extraction & $31^{+9}_{-7}$ & $2000$ & -- &$.47(1)$\\[.1em]
$\I{48}$, log-depth, $S_X$ extraction & $13^{+6}_{-4}$ & $2000$ &-- & $.49(1)$\\[.1em]
Unencoded, LH & $170(10)$ & $4000$ & $.82(3)$ & $.82(3)$\\[.1em]
$\I{48}$, log-depth, LH & $12^{+3}_{-2}$ & $4000$ & $.916(4)$ & $.829(6)$\\[.1em]
$\I{48}$, log-depth, LH, $S_X$ extraction & $21(4)$ & $4000$ & $.907^{+4}_{-5}$ & $.695(7)$\\[.1em]
\end{tabular}
\end{ruledtabular}
\caption{Logical distance-2 and unencoded SPAM experiment results. Here, and elsewhere, LH refers to the leakage-heralded measurement available on Helios (Methods~\ref{sec:Helios}). Infidelity is reported per logical qubit as in Fig.~\ref{fig:encoding} as the fraction of total shots with no logical error, divided by number of logical qubits, and (leakage) acceptance rate is computed as the probability of preparing $48$ qubits with no detected (leakage) error in both the encoded and unencoded case. Uncertainties for all encoded experiments in this work are obtained from Wilson 68\% confidence intervals unless otherwise indicated, and are reported with parenthetical notation when symmetric and $\mbox{}^+_-$ notation where asymmetric, with values referring to the trailing significant digits.}
\label{tab:spam}
\end{table}

\section{GHZ-state based syndrome extraction and readout in $\I{k}$ codes} \label{sec:ghz_se}

The standard gadgets for $d=2$ syndrome extraction presented in Refs.~\cite{Self:2024py, iceberg-beyond-the-tip} use two ancilla qubits to extract the $S_Z$ and $S_X$ syndromes non-destructively in the middle of the circuit. These gadgets are made FT by careful ordering of the SE gates so that each ancilla serves as a flag for errors that may occur in extracting the syndrome measured by the other ancilla. This assumes that any nontrivial measurement of the corresponding ancilla qubit results in that circuit's results being discarded via postselection, and it is not possible to distinguish errors that occurred during the rest of the circuit that resulted in a nontrivial syndrome from flagged errors that occurred during the SE gadget itself. More importantly, the gates involved in these gadgets are highly serial, with at most two performed simultaneously throughout the gadget. This presents a challenge for efficient qubit transport on Helios, both because it can parallelize batches of up to 8 two-qubit gates at a time in its gate zones and because this type of ``two-to-all" circuit structure requires repeatedly moving the same ancilla qubits back and forth from the ring storage to the quantum logic regions, which may result in large time overheads.

\begin{figure}[h!]
\centering
\includegraphics[width=\textwidth]{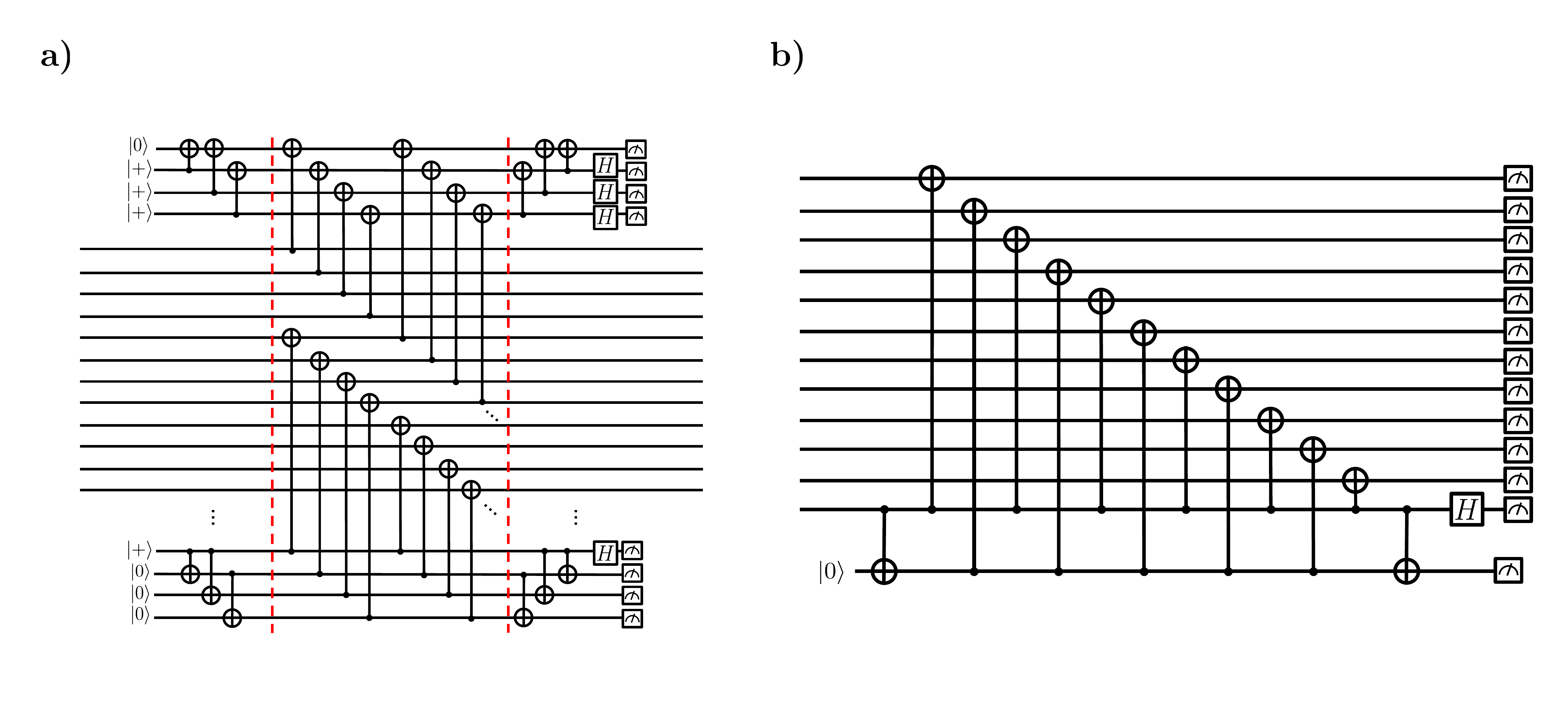}
\caption{\textbf{Syndrome extraction and readout gadgets in the $\I{k}$ codes} \textbf{(a)} GHZ-based syndrome extraction illustrated on an $\I{10}$ code. Both four-qubit ancillary GHZ states read in parities across all qubits; the lower ladder of CX gates wraps around to include the top four physical qubits as well. \textbf{(b)} Readout gadget used in $\I{94}$ GHZ state preparation experiments. The information encoding the measured syndromes is described in the text below.
}\label{fig:se_protocols}
\end{figure}

These standard gadgets are designed with efficiency of qubit number required in mind. However, for nearly all $d=2$ experiments in this work, there is a surplus of qubits in the device that we may employ for mitigation of errors. The parallelization of the standard gadgets can already be improved simply by assigning a flag qubit for each ancilla rather than using them as flags for each other. This allows twice as many gates to be implemented in parallel, since each ancilla and its corresponding flag are prepared in a Bell pair and syndrome extraction gates can be broadcast over both qubits in the Bell pair. 

In this work we take this idea further, noting that we can achieve further parallelization of SE gates by entangling ancillas into GHZ states in the appropriate basis; gates can then be broadcast over all qubits in the ancillary GHZ states and one qubit detects the syndrome while the remaining qubits flag any potential error that may occur during the gadget. In Fig.~\ref{fig:se_protocols}a we illustrate this scheme on an $\I{10}$ code. 
Using four-qubit GHZ states takes full advantage of the gate-zone parallelization on Helios while introducing minimal delays for encoding and decoding of the ancillary states themselves. We expect this is close to the optimal size GHZ state for performing SE on Helios, because larger states will introduce larger encoding/decoding delays, will be more sensitive to coherent dephasing errors on the ancillas, and the gate zones are already being fully utilized in parallel when 8 gates are performed simultaneously (larger states will require additional transport rounds to perform more gates). However, one can take the GHZ states to be of any intermediate size up to potentially $n$ (the Steane-style syndrome extraction limit \cite{Steane_1997}) which may be useful on future devices.

To be explicit, we use an $X$-basis GHZ state to detect $X$ errors (detected by $S_Z$) and vice versa with a $Z$-basis GHZ state as shown in Fig.~\ref{fig:se_protocols}a. We arrange the gates that encode/decode the GHZ states in a careful order as shown, so that all errors which can flip the syndrome propagate back to the same ancilla (i.e., the qubit initialized in $|0\rangle$ for the $X$-basis GHZ state and vice versa), whose measured value records the relevant syndrome. All other qubits flag any hook errors that might occur (i.e., a $Z$ error on the $X$-basis GHZ state from gates or memory errors that could propagate back onto the system), and we correspondingly discard any shot where any qubit in either GHZ state records a nontrivial measurement. We note that when the DFS-based error mitigation discussed in Sec.~\ref{sec:state_prep_gadgets} is used, the code space is defined by the $S_Z = -1$ subspace so the expected noiseless value of the relevant ancilla measurement is flipped.

Readout proceeds similarly to syndrome extraction, but the $S_Z$ syndrome does not need to be extracted non-destructively; it can be reconstructed by the destructive measurement of each qubit in the $Z$ basis. As a result, we only extract $S_X$ non-destructively during readout, using a four-qubit $Z$-basis GHZ state. In general, it is not clear that extracting $S_X$ at the end of a circuit is always desirable, as the gadget itself can introduce coherent idling errors in the $Z$ basis that do not affect the destructive measurements of the qubits but suppress the acceptance rate since they are detected by $S_X$ (and potentially introduce extra logical errors). For the state preparation and measurement (SPAM) experiments that are described in the main text and reported in Fig.~\ref{fig:encoding}b and Fig.~\ref{fig:encoding}c, we report the fidelity of the distance-2 experiments both with and without terminal $S_X$ extraction. It is clear from the results reported there that terminal $S_X$ extraction appears to typically increase both discard rates and logical error. For this reason, we do not extract $S_X$ at the end of the circuit in our $XY$ model experiments.

The results reported in Fig.~\ref{fig:encoding}b and Fig.~\ref{fig:encoding}c also include some experiments that used the leakage-heralded measurement available on Helios (Methods~\ref{sec:Helios}). As expected, those experiments also show slightly larger logical error rates and discard rates, while the leakage-heralded measurement substantially improves the results presented in Fig.~\ref{fig:computing}. This is primarily due to the fact that the leakage-heralded measurement has larger associated physical SPAM error but the SPAM experiments, being relatively short circuits, do not result in much qubit leakage. Consequently, we expect leakage-heralding not to enhance logical error in these very short circuits. 

During both syndrome extraction and readout, we employ additional dynamical decoupling pulses following the same method as discussed for state preparation in Sec.~\ref{sec:state_prep_gadgets}, applying $\pm X$ pulses of alternating sign to idling qubits during the gadget, with an even parity of pulses applied on any given qubit before and after SE gates to avoid affecting the code stabilizers. In the SPAM experiments of Fig.~\ref{fig:encoding}b,c, these pulses are only applied to the experiments that extracted $S_X$. The full impact of these DD pulses is not clearly resolved and may require further study to refine pulse timing.

In the $\I{94}$ GHZ state preparation experiments, there are not sufficiently many qubits on the device to use the log-depth initialization gadget and the GHZ-based readout protocol. For this reason, we default to the two-branch initialization presented in Fig.~\ref{fig:halfdepth}, and we use the readout gadget shown in Fig.~\ref{fig:se_protocols}b, which further improves on the gadget of Ref.~\cite{iceberg-beyond-the-tip} and requires only one ancilla qubit. In terms of the physical operators measured at the end of this gadget, we still have $\logZ^i = Z_{i} Z_{n-1}$; however, the stabilizers are now encoded into the operators $S_X = Z_0$ and $S_Z = \prod_{i=1}^{n-1} Z_i = Z_1Z_2 \ldots Z_k Z_{n-1}$. Any nontrivial measurement of the ancilla flags errors that occur during the gadget.

\section{Concatenated iceberg QEC cycle details}\label{sec:QEC}

In this section, we give fault-tolerance proofs, decoding algorithms, and experimental data from Helios for the QEC cycle for $d=4$ concatenated iceberg codes introduced in the main text. Recall that in the $\I{k_2}\circ\I{k_1}$ code, stabilizers consist of rows and pairs of columns, see Fig.~\ref{fig:encoding}a. We call the former low-level stabilizers, denoted by $S_{X/Z, i} = \prod_{j=0}^{n_1} (X_{ij}/Z_{ij})$ for $i \in \{0, ... , n_2-1\}$, since they are stabilizers of the $d=2$ $\I{k_1}$ codes and the latter, high-level stabilizers, denoted by $S_{X/Z}^{i} = \prod_{j=0}^{n_2} (X_{0,j}X_{i,j}/Z_{i,j}Z_{n_1-1,j})$ for $i \in \{1, ...,n_1-2\}$ since they are the stabilizers of the encoded $\I{k_2}.$

The full QEC cycle for $\I{k_2} \circ \I{k_1}$ is given in Fig.~\ref{fig:General QEC}. To summarize, a $\ket{\bar{+}}^{\otimes k_1}$ state encoded in an $\I{k_1}$ ancilla is used to extract the high-level $X$ stabilizers, $S_X^{i}$. The qubits used to flag this ancilla are unencoded, initialized in the $\ket{0}^{\otimes n_1}$ state. Then the lower-level stabilizers $S_{X,j}$ are extracted using Bell pair $d=2$ syndrome extraction (denoted as EDZ in Figs.~\ref{fig:General QEC} and \ref{fig:encoding}d), the circuit for which is shown in Fig.~\ref{fig:se_protocols}b. The second half of the QEC cycle is conjugate to the first via the Hadamard gate and extracts the $S_Z^{i}$ and $S_{Z,j}$ syndromes.

Before discussing decoding the QEC cycle under circuit-level noise, we give a decoding algorithm in the case of correcting input errors with a noiseless QEC cycle. In this case, we must correct a single input error and halt if uncorrectable errors of weight $> 1$ occur. The intuition behind decoding in this case is that the high-level stabilizers, corresponding to pairs of columns, will give information about the $x$ coordinates of errors, and the low-level stabilizers will give information about the $y$ coordinates. If a weight-one error occurs, we can deduce the $x$ and $y$ coordinates of the error from the stabilizers and correct it. If a detectable error occurs, we can deduce from the stabilizers that there will either be multiple $x$ coordinates, or multiple $y$ coordinates, or both. Pseudocode for correcting Z errors given the X stabilizers is shown in Fig.~\ref{fig:Noiseless Dec}. Correcting X errors given the Z stabilizers is analogous.

We see that strings of vertical and horizontal errors can be differentiated from correctable errors by the presence of multiple $y$ coordinates or multiple $x$ coordinates in Algorithm 1. We will refer to such errors as ``generalized weight-two,'' due to both their detectability as well as to the size of their common support with any of the weight-four logical operators being  $\leq 2$ (because lines intersect rectangles on at most two vertices).

\begin{definition}\label{def:gen wt 2}
    We call $X$ error strings of the form $\prod_{j \in S} X_{x,j}$ and $\prod_{i \in S} X_{i,y}$ generalized weight-two. Similarly, we call $Z$ error strings of the form $\prod_{j \in S} Z_{x,j}$ and $\prod_{i \in S} Z_{i,y}$ generalized weight-two. A product of a generalized weight-two $X$ error and a generalized weight-two $Z$ error is also generalized weight-two, due to independent correction of $X$ and $Z$ errors.
\end{definition}

Additionally, suppose we know that a vertical or horizontal string of errors has occurred, and, moreover, we know the common $x$ or $y$ coordinate of the string. Algorithm 1 will then compute the $y$ or $x$ locations of the string respectively, and this error can be corrected. Therefore, we call such strings for which the common coordinate is known ``generalized weight-one'' errors. There is one additional $X$ error which is ``generalized weight-one:'' an $X$ error on all the qubits of a column. Since pairs of columns are stabilizers, these are all equivalent up to a stabilizer: $\prod_{j \in [n_2]}X_{x_1,j} \sim \prod_{j \in [n_2]}X_{x_1,j}$. Therefore, if in Algorithm 1, the length of the array `Ycoords' is $n_2$, we may apply $\prod_{j \in [n_2]}X_{0,j}$ and this will correct an error string on any column.

\begin{definition}\label{def:gen wt 1}
    We call the tuple $(\prod_{j \in S} X_{x,j},\, x)$, representing a vertical error string with a known $x$ coordinate generalized weight-one. Similarly, we call the tuple $(\prod_{i \in S} X_{i,y}, y)$, representing a horizontal error string with a known $y$ coordinate generalized weight-one. Finally, we also denote the vertical error string $\prod_{j \in [n_2]}X_{x_1,j}$ with $x_1$ arbitrary generalized weight-one. Similarly, $(\prod_{j \in S} Z_{x,j}, x)$, $(\prod_{i \in S} Z_{i,y}, y)$, and $\prod_{j \in [n_2]}Z_{x_1,j}$ are generalized weight-one. A product of a generalized weight-one $X$ error and a generalized weight-one $Z$ error is also generalized weight-one, due to independent correction of $X$ and $Z$ errors.
\end{definition}

\begin{figure}[!h]
\centering
\includegraphics[width=0.95\textwidth]{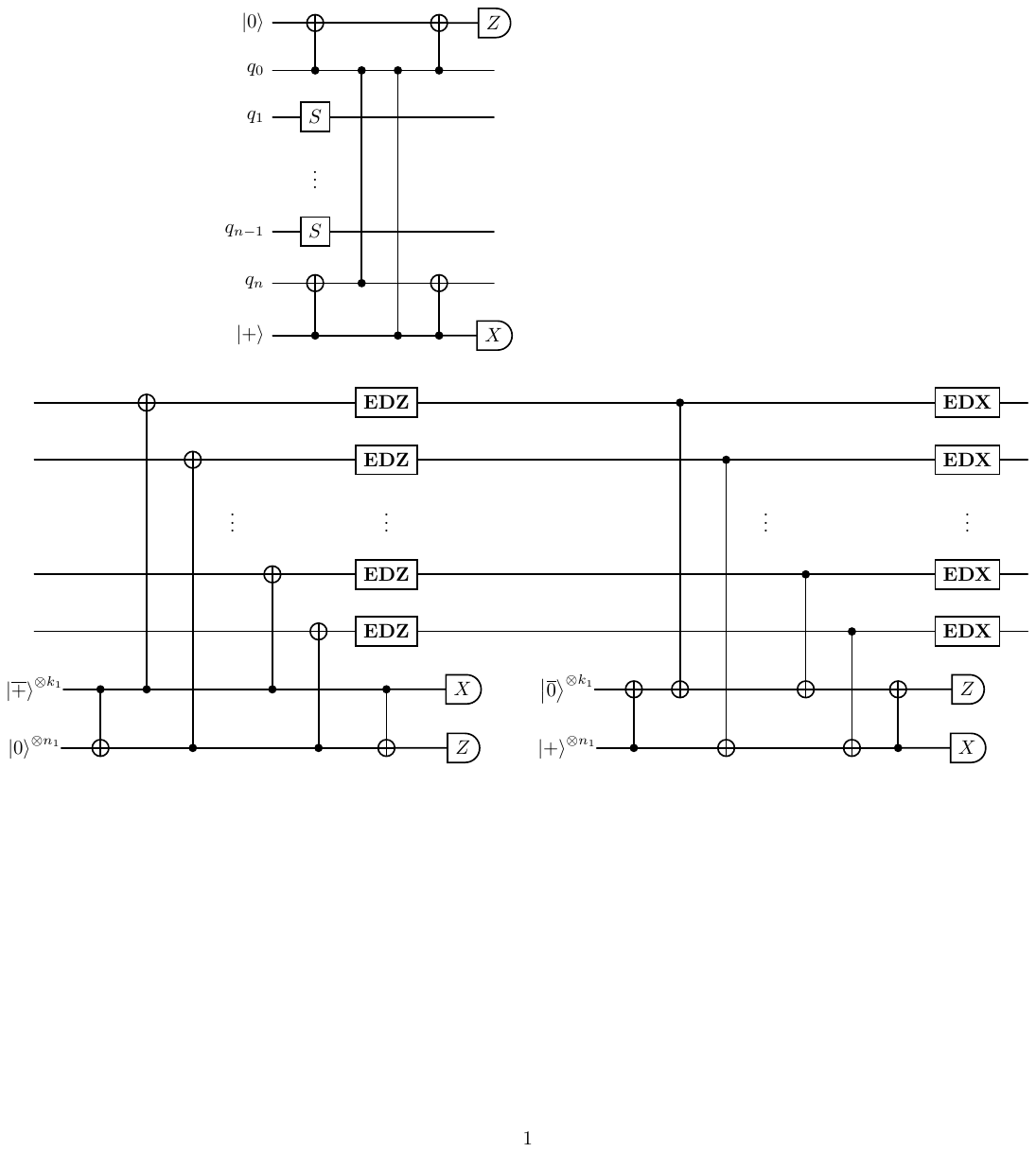}
\caption{
{\textbf{The QEC cycle for $\mathbf{\I{k_2}\circ\I{k_1}}$.} The first half of the QEC cycle extracts $X$ stabilizers and the second half extracts $Z$ stabilizers. EDZ refers to $d=2$ X syndrome and EDX to $d=2$ Z syndrome extraction respectively. All CX gates shown are transversal, including the ones between the bottom two ancilla wires. The top ancilla is an $\I{k_1}$ code block and the bottom is $n_1$ physical qubits, so they have the same size, and ``transversal" in this case still means $n_1$ physical CX gates.}
}\label{fig:General QEC}
\end{figure}

We will prove versions of the fault-tolerance criteria in \cite{Prabhu_2024} for $d=4$ codes, broadened to take into account the notions of generalized weight-1 and -2 errors, as well as the handling of measurement errors, which are natural for this QEC cycle. As a consequence, the conditions we prove are weaker in the sense of allowing more general output errors, but stronger in that the QEC cycle must correct more general input errors. To demonstrate that these conditions are sufficient for fault-tolerance, we will show that they imply the desired $d=3$ QEC and $d=4$ QED ExRec conditions (an ExRec or ``extended rectangle" is a region of a circuit containing two adjacent QEC cycles; see Definition 3). Namely, for $d=3$, we show that an encoded computation using an $\I{k_2} \circ \I{k_1}$ code will succeed so long as there is at most one fault per ExRec. Furthermore, we show for $d=4$, that if there are at most two faults per ExRec, the computation will either succeed or halt, but will not result in an undetectable logical error. 

\begin{proposition}\label{prop: qec correctness}
The QEC cycle for two-level concatenated iceberg codes satisfies the following properties:
\begin{enumerate}
    \item If the QEC cycle contains no fault, it will take an input with a generalized weight-1 (or weight-0) error to an error-free output, and will halt the computation in the case of a generalized weight-2 input.
    \item  Suppose the QEC cycle contains a single internal fault. Then it will take an input with no errors to an output with error of generalized weight at most 1. Also, it will take an input with a generalized weight-1 error to a generalized weight-1 output, or if it does not do so, it will pass flags to the next QEC cycle which will cause it to halt.
    \item If the QEC cycle contains at two internal faults, it will take an error-free input to an output of at most generalized weight-2, or it will halt.
\end{enumerate}

\end{proposition}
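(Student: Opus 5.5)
By the Hadamard symmetry of the cycle in Fig.~\ref{fig:General QEC}, the second half is conjugate to the first. Moreover, $X$ and $Z$ errors are decoded independently, as built into Definitions~\ref{def:gen wt 2} and \ref{def:gen wt 1}. It therefore suffices to analyze the first half, which extracts the high-level stabilizers $S_X^{i}$ and low-level stabilizers $S_{X,j}$ and corrects $Z$ errors on the data, and then transport every statement to the $Z$-stabilizer half by conjugation. The proof is organized around the coordinate picture behind Algorithm~1: the high-level syndrome (pairs of columns, read through the $\I{k_1}$ ancilla) determines the set of $x$ coordinates of data errors, and the low-level syndromes (rows, read by the EDZ Bell-pair gadgets) determine the set of $y$ coordinates.

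For property~1 (no internal fault), I would first check that a generalized weight-one input $(\prod_{j\in S}Z_{x,j},x)$ or $(\prod_{i\in S}Z_{i,y},y)$ produces a syndrome from which Algorithm~1 recovers $S$. One coordinate is known and the other set is read off from whichever stabilizer family it is transverse to, so the recovery operator equals the error. A full-column string is handled by the remark that all column strings are stabilizer-equivalent. For a generalized weight-two input, the syndrome necessarily exhibits either at least two $x$ coordinates or at least two $y$ coordinates with no known common coordinate, and Algorithm~1 halts. I would also verify that no nontrivial generalized weight-two string has trivial syndrome, i.e.\ is a stabilizer times logical. This holds because any line meets each weight-four rectangular logical operator in at most two sites while the code distance is four.

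For properties~2 and~3, I would enumerate the fault locations in the first half: (a) preparation of the $\I{k_1}$ ancilla in $\ket{\bar+}^{\otimes k_1}$ and of the unencoded flag block $\ket{0}^{\otimes n_1}$; (b) the transversal CX gates coupling the ancilla to the data columns; (c) the transversal CX between ancilla and flag block and the final measurements; (d) the EDZ gadgets on each row. Each physical fault is propagated to a data error together with a syndrome/flag pattern. The key lemma is that, because every CX in (b) is transversal and acts within a single row, a single fault in (a)--(c) deposits on the data either a single-qubit error or a string confined to one row (the $y$ coordinate of that ancilla qubit), which is generalized weight-one since the row is known from the flag or the ancilla position. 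Likewise, a fault in an EDZ gadget, which is FT at $d=2$, produces at most a weight-one data error or a flagged event. Measurement faults flip at most one high-level syndrome bit or one row parity. I would build a table of these ``effective syndromes'' and check, using the $d=2$ detection of the ancilla block, that a single ancilla fault never masquerades as a correctable data error in a different location. The result is that the output carries a generalized weight $\le 1$ residual error after correction.

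For an input generalized weight-one error combined with one fault, the two coordinate sets could become inconsistent. In that case the decoder either still identifies a common coordinate, giving a generalized weight-one output, or records a nontrivial ancilla flag. That flag is passed forward, and the next cycle, combining it with the repeated syndrome, halts. For property~3, I would note that two faults produce at most two line-strings or points. Their union either leaves more than one $x$ and more than one $y$ coordinate in view, which halts, or the correction applied by Algorithm~1 combines with them into a line, which is generalized weight-two. The main obstacle is step (b)/(c) of the single-fault enumeration, in particular hook-type faults on the $\I{k_1}$ ancilla that spread a logical-looking error across many columns in one row. There I would use the flag block and the ancilla's own $d=2$ detection to show that such events are always either flagged or confined to a single row.
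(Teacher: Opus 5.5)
\paragraph{Gap in the key lemma and the half-cycle reduction.} Your central lemma gets the fault propagation wrong, and the half-cycle reduction hides the mechanism the proof actually relies on.

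The $\I{k_1}$ ancilla has $n_1$ qubits and is coupled transversally to \emph{every} row. So ancilla qubit $i$ meets column $i$ of each data block. A fault on it (or on its flag $i$) therefore leaves a \emph{vertical} string whose $x$ coordinate is revealed by flag $i$, not a string confined to one row.

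Moreover, in the first half the CX gates run ancilla$\to$data. Such strings are therefore $X$-type and play no role in the first-half $Z$ decoding you set out to analyze. $Z$-type hook strings arise only in the second half, after this cycle's $X$ syndromes have already been read out: vertical ones come from the $\ket{\bar 0}^{\otimes k_1}$ ancilla and its flags, horizontal ones from the EDX gadgets. They are part of the cycle's \emph{output*}, and they count as generalized weight-one only because their flagged coordinate is handed to the next cycle's decoder (the HighZFlags/LowZFlags registers). That is the content of the first half of property~2, and your reduction cannot see it.

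Hadamard conjugation maps the cycle to a half-shifted cycle: $X$ hooks are caught later in the same cycle, $Z$ hooks only in the next one. So ``analyze the first half and conjugate'' does not yield input-to-output statements for a full cycle. You need to follow $Z$ errors from the second half of one cycle into the first half of the next, as the paper does.

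\paragraph{Missing mechanisms in the remaining parts.} A single measurement error in the HighX or LowX registers fires no ancilla flag. Combined with a generalized weight-one input, it can delete a coordinate or add a spurious one. The paper's decoder recognizes such faults from syndrome inconsistency, in two ways:
\begin{itemize}
\item With a strictly weight-one input and a missing coordinate, it lets the error pass uncorrected. It forwards software flags (CurrInternalFaults, NextInputErrorAllowed), so the next cycle halts only if a further fault or input error occurs.
\item With a hook flagged in the previous cycle, it halts when the parity of the number of detected $y$ coordinates disagrees with whether an $x$ coordinate is detected.
\end{itemize}
Neither uses a physical ancilla flag, as your sketch assumes.

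For property~3 your dichotomy fails outright. Take one hook fault on the $\ket{\bar 0}^{\otimes k_1}$ ancilla and one in an EDX gadget, both in the second half. They produce a vertical and a horizontal $Z$ string that are invisible to the current syndrome and exhibit exactly one flagged $x$ and one flagged $y$ coordinate. Their product is a cross that can share three corners with a weight-four rectangular logical. That is not generalized weight-two, so the cycle must halt by counting flagged faults, which your ``union of two lines'' argument never does.

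\paragraph{What the paper actually proves by hand.} The paper does not prove the second half of property~2 or property~3 analytically. It exhaustively enumerates single internal faults with generalized weight-one inputs, and pairs of internal faults, on the $\I{2}\circ\I{2}=\nkd{16}{4}{4}$ cycle against its explicit decoder (Algorithms~2 and~3). It then argues that other sizes are analogous. An analytic case table would be a genuine alternative, but only once the decoder's flag-counting and consistency rules are fixed and every fault class above is included.
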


\begin{proof}
    Let us discuss the correction of $Z$ errors as correcting $X$ errors is analogous. The proof of 1.~follows from Algorithm 1 since, in this case, the extraction of the stabilizers is noiseless. Also, it is clear that, in the absence of noise, Algorithm 1 will correct a weight-one $Z$ error. In order to proceed further, we must analyze and correct circuit-level noise in terms of generalized weight-one and weight-two errors. $Z$ hook errors can only occur in the second half of the QEC cycle. $Z$ errors on the iceberg ancilla for extracting the $S_Z^i$ or its flags will lead to vertical $Z$ strings. An example is shown in Fig.~\ref{fig:Generalized weight example}, where an error on the flag for the first qubit of the iceberg ancilla will lead to a vertical string of $Z$ errors with $x$ coordinate 0. In the absence of further errors, the $0$th flag will have a non-trivial measurement, and this $x$ coordinate is known, making the error generalized weight-one. This information can be passed to the next QEC cycle, which can then correct this error (this is the ``HighZFlags'' measurement register in Algorithms 2 and 3, shown in Figs.~\ref{fig:Gen weight 1 alg} and \ref{fig:Circuit-level Dec}).
    
    Any single internal error, for it to be a hook error of $Z$ type, must affect an ancilla in the second half of the QEC cycle and will therefore cause a non-trivial flag measurement, and lead to an error of at most generalized weight-1 if the input is error-free, proving the first half of 2. To prove the second half, we must discuss what happens in the case of measurement errors in the $X$ stabilizer registers (``HighX" and ``LowX'') which, ideally, will correct generalized weight-1 input errors. If the error is strictly weight-1, in the absence of measurement errors, Algorithm 2 would deduce the $x$ and $y$ coordinate of such an error and be able to correct it. If a single measurement error occurs, then one of these two coordinates may be missing. In this case, we allow the error to pass through to the next QEC cycle, but, in Algorithm 3, we also pass the flags ``CurrInternalFaults''' and ``NextInputErrorAllowed" to the next QEC cycle and postselect if there are subsequent internal faults or input errors respectively. If the error is generalized weight-1 with a flagged $x$ or $y$ coordinate from the previous QEC cycle, and a single internal-fault in the current QEC cycle introduces any uncertainty in any of the $y$ or $x$ coordinates of the string respectively, we can postselect, as we know two errors in this ExRec have occurred (the hook error from the previous QEC cycle and the current internal fault). We can tell if such an internal fault has occurred by looking at, in the case of a vertical hook error, the parity of the number of detected $y$ coordinates, which should be accompanied by a detected $x$ error coordinate in the case that this parity is odd, or no detected $x$ error coordinate, in the case that this parity is even (see the comments ``Halt if there is another error in addition to the hook error" in Algorithm 3). We computationally enumerate all single internal-faults and all generalized weight-1 inputs in the case of the $\I{2} \circ \I{2} = \nkd{16}{4}{4}$ QEC cycle, and verify that, in all cases, if a generalized weight-1 input is turned into a higher weight output, the subsequent QEC cycle will halt, completing the proof of 2, as the decoding for other sizes of iceberg layers is entirely analogous.
    
    Finally, if a hook error occurs, a subsequent internal fault, such as a measurement error in the measurement of flag $0$ in Fig.~\ref{fig:Generalized weight example} can turn a generalized weight-1 vertical error, for instance, into an error where the $x$ coordinate of this error string is not detected. In this case, the error becomes generalized weight-two. We can verify 3.~through an exhaustive enumeration of all two $s$ internal faults for $r+s = 2$ in the case of the $\I{2} \circ \I{2} = \nkd{16}{4}{4}$ QEC cycle.
\end{proof}

Note that Prop.~\ref{prop: qec correctness} implies that the scaling in Fig.~\ref{fig:QEC X Basis Scaling Plot} for circuit-level simulations of the $\I{2}\circ\I{2}$ QEC cycle should have an $O(p^3)$ infidelity and $O(p^2)$ rejection rate, consistent with the numerically fitted slopes.

Now, we prove that Prop.~\ref{prop: qec correctness} implies the $d=3$ and $d=4$ ExRec conditions.

\begin{definition}
    In a memory experiment with repeated QEC cycles, we define an ExRec to be any pair of consecutive cycles. An ExRec is good to distance 3 if it contains $\leq 1$ errors.
\end{definition}

\begin{proposition}
    Suppose $\ket{\overline{\psi}}$ is in the code space of $\I{k_2} \circ \I{k_1}$ and is input to $m$ rounds of QEC. If all ExRecs are good to distance 3, then the computation will not halt and the output will have an error of generalized weight at most one.
\end{proposition}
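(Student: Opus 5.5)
We argue by induction on the number of QEC rounds $m$. The inductive invariant is stronger than the claim. After round $t$, the computation has not halted, and the residual error on the data block is of generalized weight at most one. Moreover, if that residual error is nonzero, or if round $t$ raised any of the flags that Algorithm 3 passes forward (``HighZFlags'', ``CurrInternalFaults'', ``NextInputErrorAllowed''), then round $t$ contained a fault. Since a fault must be charged to the cycle in which it occurred, this last clause lets us invoke the ExRec hypothesis on the pair (round $t$, round $t+1$). The base case $t=0$ is immediate: $\ket{\overline{\psi}}$ is in the code space, so the input error is trivial and no flags are set.

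For the inductive step, we split on whether round $t$ contained a fault. If it did, the pair of rounds $t$ and $t+1$ forms an ExRec that is good to distance 3, so it contains at most one error. Hence round $t+1$ is fault-free, and its input has generalized weight at most one, possibly together with a flagged coordinate or halting flags. By Prop.~\ref{prop: qec correctness}, item 1, a fault-free cycle maps a generalized weight-1 (or weight-0) input to an error-free output and does not halt. We must also check that the forwarded flags do not cause a spurious halt. The ``CurrInternalFaults'' and ``NextInputErrorAllowed'' flags only trigger postselection when round $t+1$ has a further internal fault or input error. Neither occurs here, because the residual error is precisely the one the flag anticipates and round $t+1$ is fault-free. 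So round $t+1$ ends error-free with no flags, and the invariant holds.

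If round $t$ was fault-free, the invariant says its output is error-free with no flags set, so round $t+1$ receives a clean input. Either round $t+1$ is also fault-free, in which case item 1 gives a clean output, or it contains exactly one fault. The bound of one comes from the ExRec (round $t$, round $t+1$), since round $t$ had no fault. In the one-fault case, the first half of item 2 of Prop.~\ref{prop: qec correctness} says an error-free input becomes an output of generalized weight at most one, and the cycle does not halt because its only fault is single. Any flags raised are attributable to a fault in round $t+1$, which is exactly what the invariant requires.

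The main obstacle is the bookkeeping of flags and halting in the faulty-round case. The second half of item 2 allows a single internal fault to leave a residual error together with flags that halt the \emph{next} cycle. We must rule out that such a halt ever happens under the hypothesis. The argument is that it would require a second error in the following ExRec, and we make this precise by reading off the halting conditions of Algorithm 3, which fire only on a subsequent internal fault or input error beyond the flagged one. At the final round $m$, there is no subsequent cycle, so whatever flagged generalized weight-1 error remains is simply the output. That output has generalized weight at most one, which is the claim.
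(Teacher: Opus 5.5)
Your proof is correct and takes the same route as the paper. The paper's proof is a one-line version of your induction: under the distance-3 hypothesis no two consecutive cycles are faulty, so each faulty cycle receives an error-free input and outputs at most a generalized weight-one error (item 2 of Prop.~\ref{prop: qec correctness}), which the following perfect cycle removes (item 1). Your extra flag bookkeeping goes beyond the paper but is not needed: item 1 already guarantees a fault-free cycle maps a generalized weight-$\le 1$ input to an error-free output, and your own case analysis shows the second half of item 2 never arises under the hypothesis.
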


\begin{proof}
    First, suppose that all ExRecs are good to distance three. Then the success of the computation follows from Prop.~\ref{prop: qec correctness} since QEC cycles with errors are interleaved with perfect QEC cycles, which will correct any generalized weight-one errors produced by the faulty QEC cycles.
\end{proof}

Now, we prove the $d=4$ ExRec condition for $Z$ errors (the proof for $X$ errors is analogous). Again, suppose the computation is $m$ rounds of QEC and the input is in the code space of $\I{k_2}\circ\I{k_1}.$

\begin{definition}
    An ExRec is good to distance 4 if it contains $\leq 2$ errors.
\end{definition}

\begin{proposition}
If all ExRecs are good to distance 4, then the computation will either halt or the output will have an $X$ error of generalized weight at most two.
\end{proposition}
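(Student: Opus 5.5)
We argue by induction on the number of QEC rounds $m$, using Prop.~\ref{prop: qec correctness} as the only input about a single cycle. The invariant after each cycle $t$ is the following: either the computation has halted, or the output error is of generalized weight at most two, with generalized weight at most one whenever cycle $t$ and the preceding cycle together contain at most one fault. In the latter case any flags passed forward (``HighZFlags'', ``CurrInternalFaults'', ``NextInputErrorAllowed'') correctly record the residual error. The statement is phrased for $X$ errors; by the $X\leftrightarrow Z$ (Hadamard-conjugate) symmetry of the cycle in Fig.~\ref{fig:General QEC}, it suffices to treat one type, exactly as in the proof of Prop.~\ref{prop: qec correctness}.

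The inductive step splits according to the number of faults in cycle $t$. Let $r$ be the number of faults in cycle $t-1$ and $s$ the number in cycle $t$; since the ExRec $(t-1,t)$ is good to distance 4, $r+s\le 2$.

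If $s=0$, item 1 of Prop.~\ref{prop: qec correctness} applies. A generalized weight-1 input (with its flag from cycle $t-1$) is corrected to zero. A generalized weight-2 input causes a halt. No other inputs can arise by the invariant.

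If $s=1$, then $r\le 1$, so by the invariant the input is of generalized weight at most one. Item 2 then gives an output of generalized weight at most one, or flags that force cycle $t+1$ to halt. We must check this forced halt is legitimate, i.e., the mechanism does not need a good ExRec $(t,t+1)$ to function. This holds because the postselection triggered by ``CurrInternalFaults''/``NextInputErrorAllowed'' fires on any further fault or input error. If neither occurs, cycle $t+1$ is fault-free and only sees the at most generalized weight-2 output, which it either corrects (if the output is generalized weight one) or halts on by item 1.

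If $s=2$, then $r=0$, and cycle $t-1$ was fault-free. By the previous step its output is error-free, so item 3 gives an output of generalized weight at most two or a halt. The base case is the code-space input.

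The final output after round $m$ therefore has generalized weight at most two or the run halted. Since generalized weight-two strings meet any weight-four rectangular logical in at most two sites, this is the desired $d=4$ ExRec guarantee.

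The main obstacle is the bookkeeping at overlapping ExRecs. A generalized weight-2 output from cycle $t$ feeds cycle $t+1$, and the ExRec $(t,t+1)$ may already contain two faults, so cycle $t+1$ could itself be faulty while receiving a generalized weight-2 input. This case is not covered by items 1–3. The plan is to note that if cycle $t$ has two faults, the ExRec $(t,t+1)$ forces cycle $t+1$ to be fault-free, so item 1 makes it halt. If instead cycle $t$'s weight-2 output arose from a single fault on a generalized weight-1 input (the $s=1$ branch), the passed flags force a halt in cycle $t+1$ regardless. Verifying that these two situations exhaust the cases, possibly supplemented by the same exhaustive enumeration on $\I{2}\circ\I{2}$ used for Prop.~\ref{prop: qec correctness}, is where the real work lies.
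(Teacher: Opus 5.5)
Your last paragraph has exactly the two cases the paper uses. Either cycle $t$ has two faults, so cycle $t+1$ is fault-free and halts by item 1; or one fault acts on a generalized weight-1 input, so item 2's flags force a halt. But the induction around these cases does not close, and you leave the decisive step, that the two cases are exhaustive, as ``the real work.''

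Concretely, your invariant is too weak to support the $s=1$ step. Take $r=1$ and suppose cycle $t-2$ also contained a fault, which is allowed since the ExRec $(t-2,t-1)$ then has two faults. The window $(t-2,t-1)$ has two faults, so your invariant only gives that the input to cycle $t$ has generalized weight $\le 2$, not $\le 1$ as you claim. A generalized weight-2 input to a faulty cycle is precisely the situation Prop.~\ref{prop: qec correctness} does not cover. Your invariant also keeps no record that such an input arrived with flags forcing a halt. Similarly, item 2 does not bound the weight of a flagged output, so the ``$\le 2$'' in your invariant is not established by the $s=1$ step either. What is established is only that the next cycle halts.

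The paper avoids this bookkeeping with a minimality argument. Consider the \emph{first} cycle $t$ whose output is not of generalized weight $\le 1$. Its input then has generalized weight $\le 1$. A nonzero input requires a fault in cycle $t-1$, because by item 1 a fault-free cycle maps generalized weight $\le 1$ inputs to error-free outputs. So goodness of the ExRec $(t-1,t)$ rules out ``generalized weight-1 input plus two faults.'' Items 1 and 2 then leave only two cases: error-free input with two faults in cycle $t$, or generalized weight-1 input with one fault. Each forces cycle $t+1$ to halt, as in your last paragraph. No further enumeration on $\mathcal{I}_2\circ\mathcal{I}_2$ is needed.

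If you prefer to keep the induction, strengthen the invariant to: after cycle $t$, one of the following holds.
\begin{itemize}
\item The computation has halted.
\item The output has generalized weight $\le 1$.
\item Cycle $t$ raised item-2 flags.
\item Cycle $t$ contained two faults and its output has generalized weight $\le 2$.
\end{itemize}
With this invariant, every inductive case goes through by the same observations.
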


\begin{proof}
    If all ExRecs are good to distance 4, then we examine the first ExRec where after the first QEC gadget, there is an error of generalized weight-two. There are two ways in which this can occur (by Prop.~\ref{prop: qec correctness}). The first is that the input to the ExRec was error-free and there were two errors in the first QEC cycle. In this case, the second QEC cycle must be error-free for the ExRec to be good, and the computation will halt because this noiseless QEC cycle has a weight-two input error. The second possibility is that the input to the ExRec had a generalized weight-one error and there was a single internal fault in the first QEC cycle. In this case, since after the first QEC cycle of the ExRec, the error is no longer generalized weight $\leq 1$, by Prop.~\ref{prop: qec correctness}, the second QEC cycle will halt, completing the proof.
\end{proof}

We note that while we prove the ExRec conditions here for memory experiments, we believe they could be straightforwardly generalized to, for example, a transversal CX gate, by looking at all the flags in both inputs to the $CX$ gate and halting if their combined non-trivial measurements are too numerous (for instance, if both had a ``CurrFailures" value of 1 in Algorithm 3).

For the $\I{2} \circ \I{2}$ and $2 \times \I{2}$ codes, we performed memory experiments on Helios where $c$ cycles of QEC/QED were performed for $c \in \{1,2,4,8\}$. Leakage heralding was used and the leakage acceptance, preacceptance, and QEC/QED acceptance rates are given in Table~\ref{tab:k=4 qec results}. We decode over multiple ExRecs using Algorithm 3 (and analogous decoding for $X$ errors) with the extra postselection condition that correctable hook errors must occur on every other $x$ or $y$ coordinate due to the fact that the ancillas in syndrome extraction interact with every other qubit. We note that the preacceptance rate, which denotes the fraction of all unleaked shots for which all the state preparations in the protocol succeed, could be made arbitrarily close to 1 through repeat-until-success, and leakage postselection could be solved through the conversion of leakage noise to depolarizing noise in hardware using the technique of Ref.~\cite{Hayes_2020}. Therefore, it is the QEC acceptance rate, which should scale as $O(p^2)$, which is fundamentally the limiting factor and must be increased by scaling the code distance, as we demonstrate. For the single $\I{8}\circ \I{6}$ QEC cycle experiment, the decoding problem over a single ExRec is significantly easier, as we can simply iterate over all possible single failures and apply corrections whenever the flags and syndromes are compatible with such failures, and postselect otherwise. The results for this experiment are given in Table~\ref{tab:k=48 qec results}.

\begin{table}[!h]
\begin{tabular}{lcllll}\hline \hline\\[-.25cm]
\multicolumn{1}{c}{Experiment} & \multicolumn{1}{c}{Cycles} & \multicolumn{1}{c}{Leakage acceptance} & \multicolumn{1}{c}{Preacceptance} & \multicolumn{1}{c}{QEC/QED acceptance}  \\[.3em]  \hline \\[-.6em]
$\I{2} \circ \I{2}$ (X Basis) & [$1$, $2$, $4$, $8$] & [$.930_{-9}^{+8}$, $.89(1)$, $.81(1)$, $.67_{-2}^{+1}$ ] & [$.88_{-2}^{+1}, .83_{-2}^{+1}, .88(1), .84_{-2}^{+1}$] & [$.989^{+3}_{-4}, .970^{+6}_{-7}, .90(1), .75(2)$]\\[.1em]

$\I{2} \circ \I{2}$ (Z Basis) & [$1$, $2$, $4$, $8$] & [$.917_{-9}^{+8}$, $.89(1)$, $.77(1)$, $.64_{-2}^{+1}$ ] & [$.87(1), .84(1), .80^{+1}_{-2}, .82_{-1}^{+2}$] & [$.991^{+3}_{-4}, .952^{+7}_{-8}, .88_{-3}^{+1}, .76(2)$]\\[.1em] 

$2 \times \I{2}$ (X Basis) & [$1$, $2$, $4$, $8$] & [$.983_{-5}^{+4}, .984(4), .967_{-6}^{+5}, .937_{-8}^{+7}$] & [$.993^{+2}_{-3}$, $.991_{-4}^{+2}$, $.993_{-4}^{+2}$, $.996_{-3}^{+1}$ ] & [$.960^{+6}_{-7}, .912_{-10}^{+8}, .71_{-2}^{+1}, .64(2)$]\\[.1em]

$2 \times \I{2}$ (Z Basis) & [$1$, $2$, $4$, $8$] & [$.983_{-5}^{+4}, .984(4), .967_{-6}^{+5}, .937_{-8}^{+7}$] & [$.993_{-3}^{+2}, .990_{-4}^{+3}, .981_{-5}^{+4}, .994_{-4}^{+2}$] & [$.960_{-7}^{+6}, .917_{-9}^{+8}, .74_{-2}^{+1}, .50_{-2}^{+1}$ ]\\[.1em] \hline \hline

\end{tabular}
\caption{Acceptance rates for $k=4$ memory experiments over $1,2,4,$ and $8$ cycles of QEC or QED. 1000 shots were submitted for each cycle length, for the $\ket{\overline{0}}^{\otimes 4}$ and $\ket{\bar{+}}^{\otimes 4}$ states for the $\I{2} \circ \I{2}$ and $2 \times \I{2}$ codes. In all experiments, leakage heralding was used and shots with detected leakage events were discarded. Leakage acceptance is the fraction of shots for which no detected leakage errors occurred. The preacceptance is the fraction of leakage postselected shots for which all state preparations (both for the input states and, in the case of the concatenated code, the lower-level ancilla used in QEC) succeeded. The QEC/QED acceptance rates are computed as the fraction of unleaked, preaccepted shots for which the QEC/QED cycles do not halt due to uncorrectable configurations of $d/2$ circuit failures. In the accepted shots, one error occurred in the $4$-cycle $\I{2}\circ\I{2}$ X basis experiment which induced an error on one of the logical qubits. No errors were found in any of the other $k=4$ memory experiments.}
\label{tab:k=4 qec results}
\end{table}

\begin{table}[!h]
\begin{tabular}{lcccclcc}\hline \hline\\[-.25cm]
\multicolumn{1}{c}{Experiment} & \multicolumn{1}{c}{Leakage acceptance} & \multicolumn{1}{c}{Preacceptance} & \multicolumn{1}{c}{QEC acceptance} & Accepted shots & Errors \\[.3em]  \hline \\[-.6em] 
$\I{8} \circ \I{6}$ (X Basis) & .431(7) & $.57(1)$ & $ .58_{-2}^{+1}$ & 708 & \quad 0 \\[.1 em]
$\I{8} \circ \I{6}$ (Z Basis) & .453(7) & $.64(1)$ & $ .66(1)$ & 958 & \quad 0 \\[.1 em]\hline \hline

\end{tabular}
\caption{Experimental data for the $\I{8}\circ\I{6}$ QEC cycle experiments. 5000 shots were submitted in each basis. Leakage, preacceptance and QEC acceptance rates are as before. Of the 708 accepted shots in the X basis and 958 accepted shots in the Z basis, no errors occurred.}
\label{tab:k=48 qec results}
\end{table}

\begin{figure}[!t]
\centering
\includegraphics{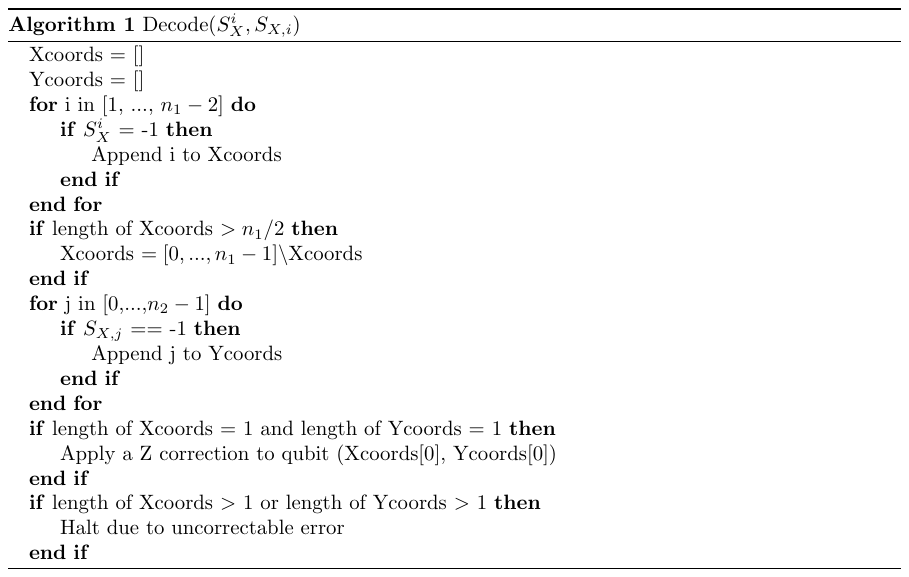}
\caption{
{\textbf{Noiseless decoding of $X$ stabilizer measurements.} Given two arrays of noiseless X stabilizer measurements $S_X^i$ and $S_{X,i}$, in the presence of non-trivial syndromes, this algorithm will either return a compatible weight-one Z correction or determine that an uncorrectable error has occurred.}
}\label{fig:Noiseless Dec}
\end{figure}

\begin{figure}[!h]
\centering
\includegraphics[width=0.95\textwidth]{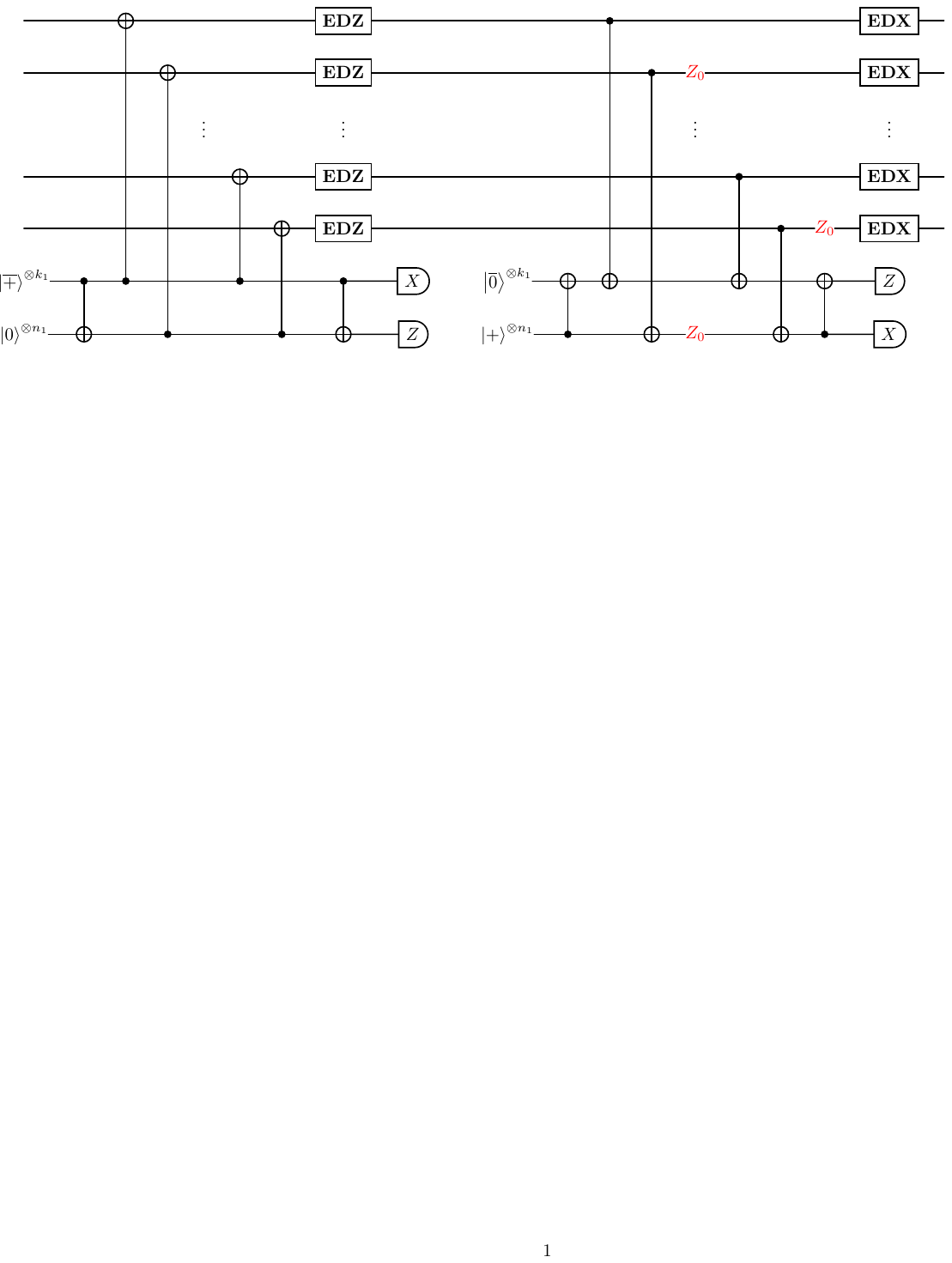}
\caption{
{\textbf{Error propagation during high-level $Z$ syndrome extraction.} An example of a $Z_0$ error on the ancilla during high-level $Z$ syndrome extraction spreading to a vertical string of $Z_0$ errors on the data iceberg code blocks.}
}\label{fig:Generalized weight example}
\end{figure}

\begin{figure}[!h]
\centering
\includegraphics{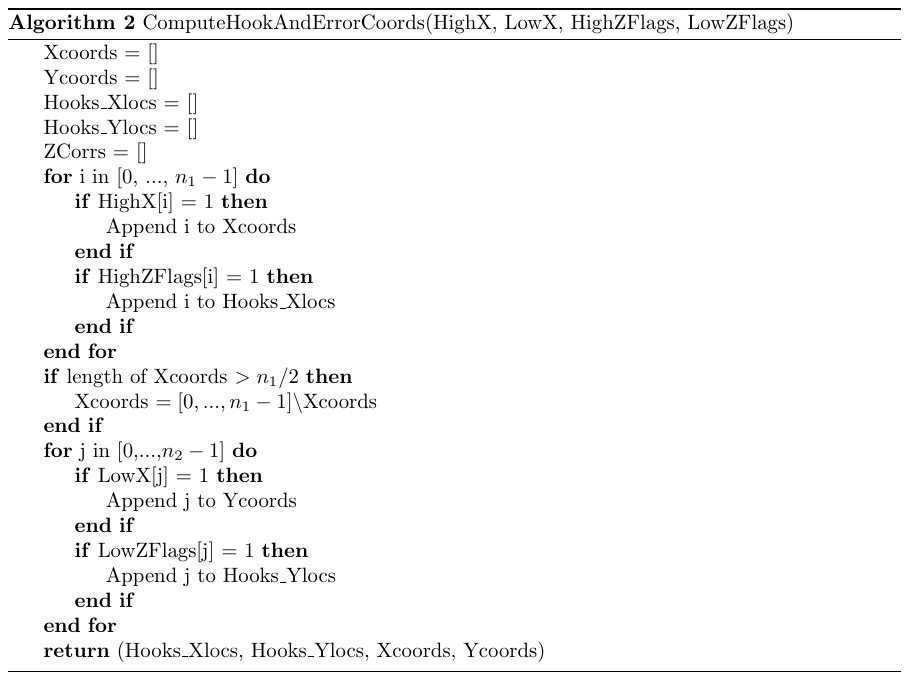}
\caption{
{\textbf{Helper Function for decoding circuit-level $Z$ errors from QEC Cycle Measurements.} Given the four relevant (possibly faulty) measurement registers from the current and previous QEC cycle, this algorithm computes the possible hook error locations as well as non-trivial $x$ and $y$ coordinates of Z errors. The HighX and LowX measurement registers are from the transversal $X$ measurement of  $\ket{\bar{+}}^{\otimes k_1}$ and the stabilizer measurements of the EDZ gadgets in the first half of the QEC cycle respectively. The HighZFlags and LowZFlags are the flags for the $\ket{\bar{0}}^{\otimes k_1}$ ancilla and the flags from the EDX gadgets in the second half of the previous QEC cycle, which record the locations of the vertical and horizontal $Z$ hook errors that the current QEC cycle must correct.}
}\label{fig:Gen weight 1 alg}
\end{figure}

\clearpage

\begin{figure}[h!]
\centering
\includegraphics[scale = .85]{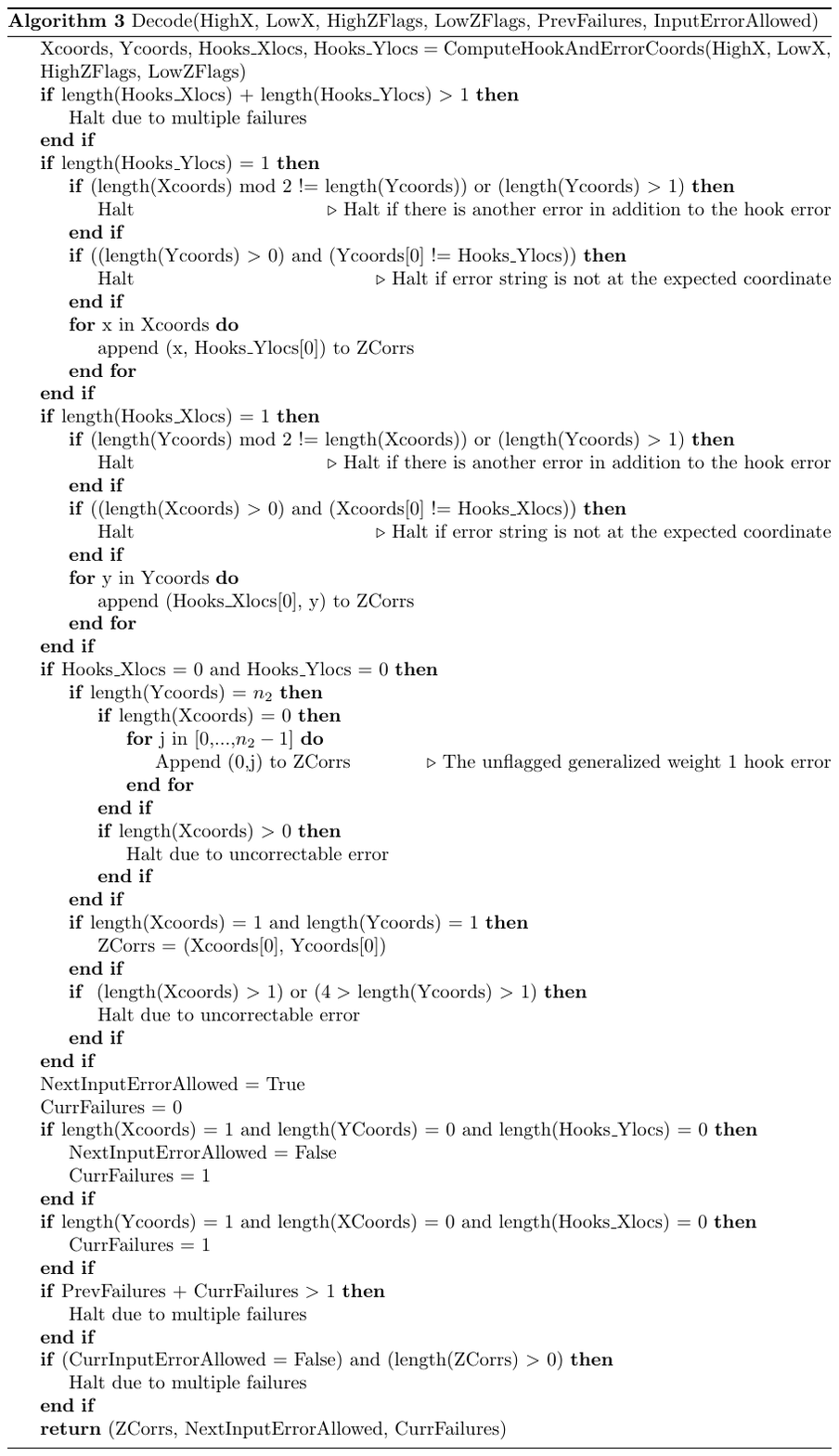}
\caption{\textbf{Circuit-level decoding for the $d=4$ concatenated iceberg QEC cycle.} The decoder for $Z$ errors which satisfies Prop.~\ref{prop: qec correctness}. `PrevFailures' and `InputErrorAllowed' are passed from the previous QEC cycle, and the current QEC cycle will return the analogous quantities `NextInputErrorAllowed' and `Curr Failures' to pass to the next QEC cycle, as well as `Zcorr', the Z corrections for this QEC cycle.}\label{fig:Circuit-level Dec}
\end{figure}

\clearpage

\begin{figure}[!h]
\centering
\includegraphics[width=1.0\textwidth]{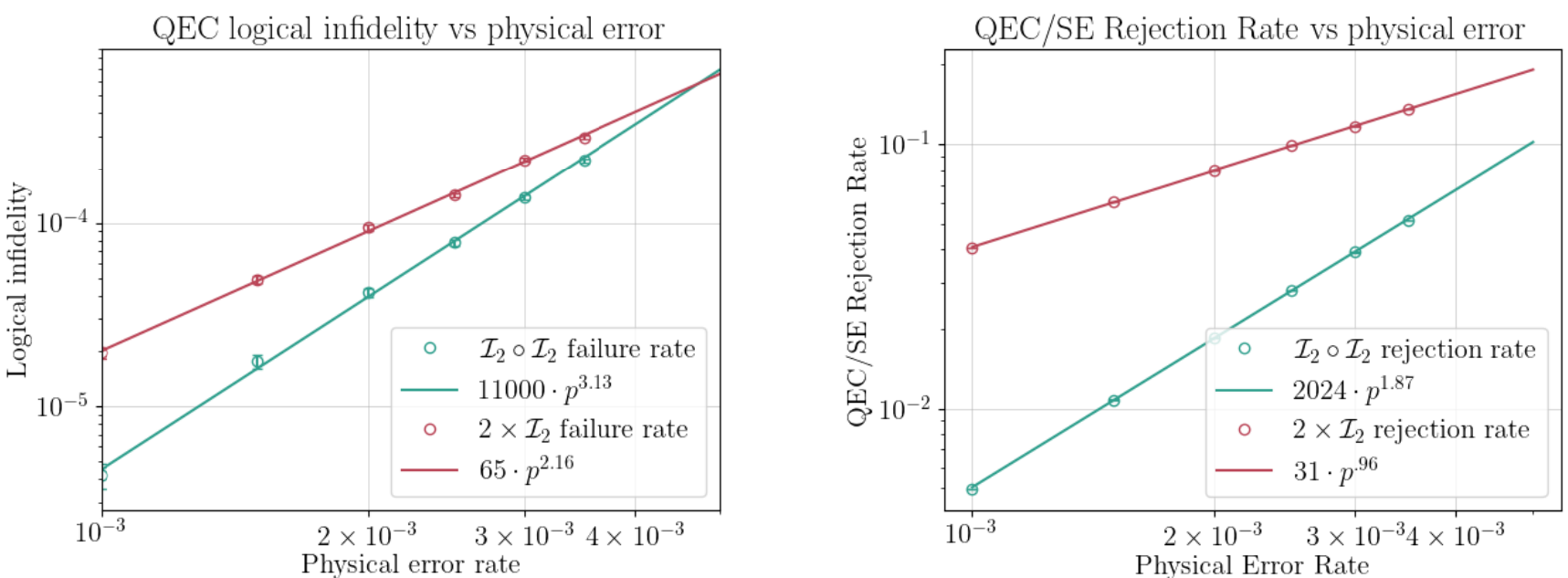}
\caption{
{\textbf{Simulated QEC infidelity and reject rate as a function of physical noise rate $p$.} Circuit-level \texttt{Stim} \cite{Gidney:2021wm} simulations with two-qubit gate errors and measurement errors of a single round of QEC on the $\ket{+}^{\otimes 4}$ input state. The properties in Prop.~\ref{prop: qec correctness} imply that the QEC rejection rate should scale as $O(p^2)$ and the infidelity as $O(p^3)$. Also shown are the rejection rates and infidelities for two copies of the $\I{2}$ code. We can see that for an effective two-qubit infidelity of $\leq 4\times 10^{-3}$, the fidelity and acceptance rate for $d=4$ are improved relative to $d=2$.}
}\label{fig:QEC X Basis Scaling Plot}
\end{figure}

\section{Logical gate benchmarking analysis and statistical procedure} \label{sec:si_lgb}

Here, we provide more details about the logical cycle-benchmarking protocol described in the main text and in Methods \ref{sec:methods_lcb} and discuss the methods used for fitting and statistical analysis.

\begin{figure}[!h]
    \centering 
    \includegraphics[width=\linewidth]{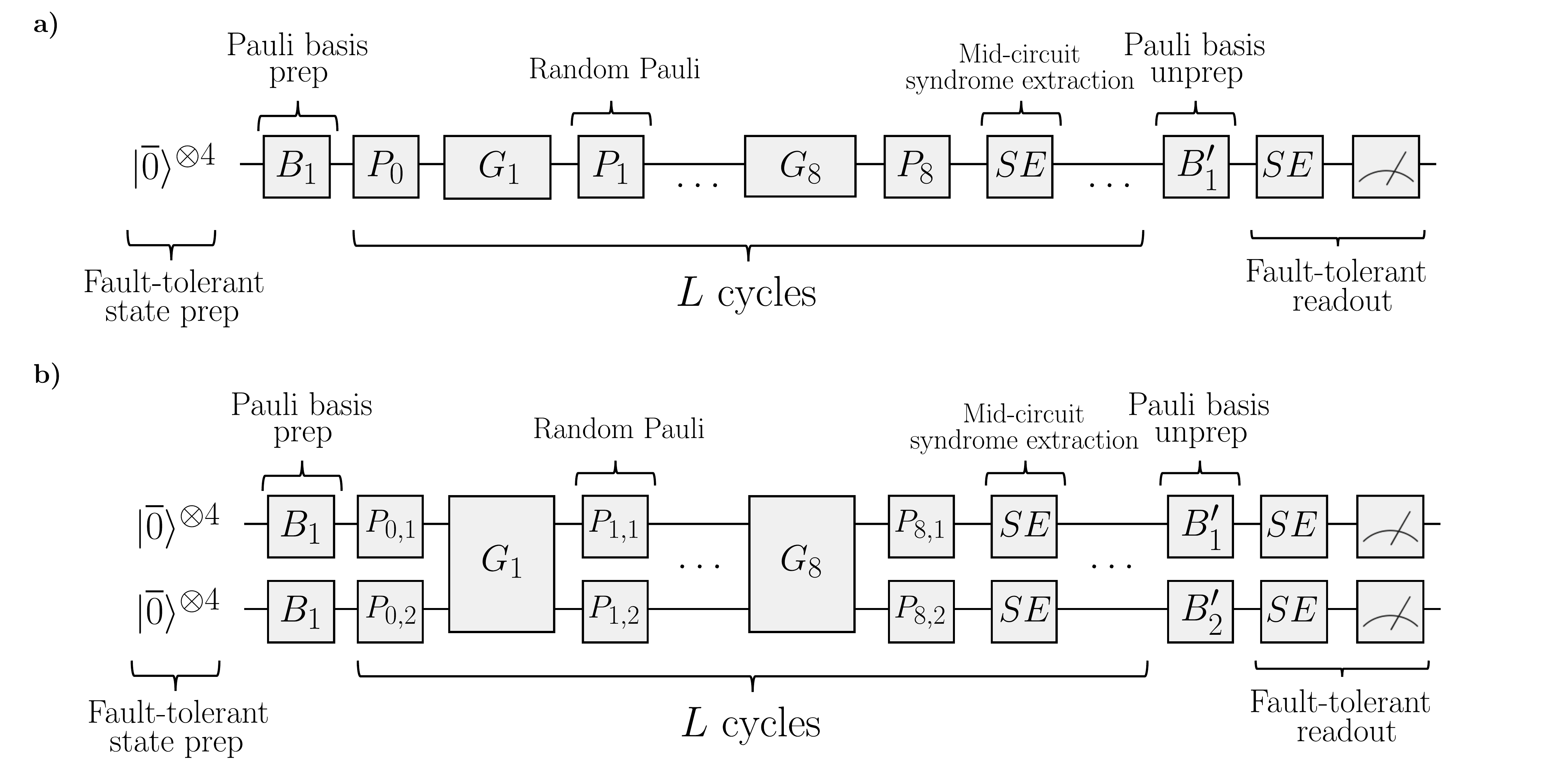}
    \caption{\textbf{Circuit diagrams for logical cycle-benchmarking experiments of $\I{4}$ code blocks.} Both \textbf{(a)} intra-block and \textbf{(b)} inter-block experiments are displayed with cycles consisting of Pauli-twirled 2Q-gate layers and syndrome extraction rounds inserted every 8 cycles.}
    \label{fig:logical-cb-circuit-diagrams}
\end{figure}

The general protocol operates as follows: suppose we have $k$ logical qubits distributed among one or more code blocks on which the logical gates act (i.e., $k = 8$ for two $\I{4}$ code blocks, even if the gates only couple one logical qubit in each block). First, sample a set of logical Pauli observables  $\mathcal{P}_\text{samp}$ from the uniform distribution over $k$-qubit Paulis, and choose a set of cycle depths $\mathcal{L}_\text{cyc}$ with which to perform the experiment. For each corresponding fixed cycle depth $L \in \mathcal{L}_\text{cyc}$, initialize in a random eigenstate $\ket{\psi}_P$ of $P \in \mathcal{P}_\text{samp}$ and apply $L$ Pauli-twirled gates along with periodic syndrome extraction before measuring in the eigenbasis of $P$. 
From the parity of the resulting bitstring, extract an estimate of the expectation value of the output state with respect to the basis of $P$. 
We view the constituent logical cycle of this experiment as being the Pauli-twirled logical gates, with the syndrome extraction essentially just a mechanism for detecting errors that occur during the cycle. However, note that the syndrome extraction can cause up to $k$-qubit logical errors, which is why we sample over Pauli observables of size $k$ correspondingly.

The circuit diagrams for the experiments are displayed in Fig.~\ref{fig:logical-cb-circuit-diagrams}. For the inter-block experiments, we use two $\I{4}$ iceberg code blocks with a single 2Q gate across the blocks ($U_{ZZ}(\pi/2)$ or FANOUT) as the cycle.
For the intra-block experiments, we used a single $\I{4}$ code block and perform two $U_{ZZ}(\pi / 2)$ on both pairs of logical qubits per cycle.
For each set of experiments, we randomly sample 20 Pauli observables across all 4 or 8 logical qubits (for intra-block or inter-block respectively). 
Each shot includes mid-circuit syndrome extraction round after every eight gate cycles and concludes with a terminal SE round. 
This frequency of syndrome extraction was chosen as a rough optimization for logical gate fidelity considering the competing factors of (a) very-frequent syndrome extraction introducing extra memory and gate errors versus (b) infrequent syndrome extraction leading to combinatorially more opportunities for faults to occur per detector region.

In order to obtain an estimate of the Pauli fidelity $f_P$ from the measured Pauli expectation values, we employ a maximum-likelihood based approach. 
For each depth $L\in \mathcal{L}_\text{cyc}$ and Pauli observable $P$, we perform $N$ repetitions of the corresponding logical CB circuit with Pauli layers randomized per shot. 
After postselecting on outcomes without detected errors, we accept $N_L$ shots.
For each accepted shot, we record the parity of the final measured bits to obtain the Pauli eigenvalue outcome
\begin{equation*}
    X_{P,s}(L)\in \{+1,-1\}, \qquad s=1,\dots,N_L.
\end{equation*}
In order to extract the Pauli fidelity, we fit the distribution of $X_{P,s}$ to a Bernoulli model. 
First, define a ``match'' event as when $X_{P,s}(L)$ is equal to the eigenvalue of the initial Pauli eigenstate, $\lambda_{\ket{\psi}}$, and let $K_P(L)$ be the total number of such events across the $N_L$ retained shots.
For fixed $P$ and $L$, we model the observed outcomes as i.i.d.\ Bernoulli trials with probability
\begin{equation*}
    q_P(L) = \Pr\!\big(X_{P,s}(L)=\lambda_{\ket{\psi}}\big).
\end{equation*}
Equivalently, $K_P(L)$ is a binomial random variable with mean $N_L$ and variance $q_P(L)$.
We then relate the Pauli expectation value $\mu_P(L)$ to the ``match'' probability by
\begin{equation*}
    \mu_P(L) = \mathbb{E}[X_{P,s}(L)] = (+1)\cdot q_P(L) + (-1)\cdot[1-q_P(L)] = 2q_P(L) - 1.
\end{equation*}

In cycle benchmarking, the expectation value of each Pauli decays exponentially with the number of cycles $L$ (assuming the probability of each cycle introducing an error is approximately i.i.d.) \cite{Erhard2019}, that is,
\begin{equation*}
    \mu_P(L) = A_P f_P^L,
\end{equation*}
where $f_P \in [0,1]$ represents the Pauli fidelity associated with the effective logical Pauli channel of one cycle, and $A_P\in[0,1]$ is the amplitude parameter that captures the SPAM error for preparing an eigenstate of $P$.
Since $q_P(L) = (1+\mu_P(L))/2$, this gives the binomial success probability
\begin{equation*}
    q_P(L) = \frac{1+A_P f_P^L}{2}.
\end{equation*}
We fit each Pauli individually with its own amplitude parameter $A_P$ and Pauli fidelity $f_P$.

Given observed match counts $\{K_P(L)\}_{L\in\mathcal{L}_\text{cyc}}$ and accepted shot counts $\{N_L\}_{L\in\mathcal{L}_\text{cyc}}$, define the likelihood function for parameters $(A_P,f_P)$ by
\begin{equation*}
    \mathcal{L}(A_P,f_P) = \prod_{L\in\mathcal{L}_\text{cyc}} \binom{N_L}{K_L{(P)}} \left(\frac{1+A_P f_P^L}{2}\right)^{K_P(L)} \left(\frac{1-A_P f_P^L}{2}\right)^{N_L - K_P(L)}.
\end{equation*}
Dropping binomial coefficient terms that do not depend on the fit parameters, the negative log-likelihood is 
\begin{equation*}
\begin{split}
    -\log \mathcal{L}(A_P,f_P)
    =
    -&\sum_{L\in\mathcal{L}_\text{cyc}}
    \Bigg[
    K_P(L)\log\!\left(\frac{1+A_P f_P^L}{2}\right)
     +(N_L-K_P(L))\log\!\left(\frac{1-A_P f_P^L}{2}\right)
    \Bigg].
\end{split}
\end{equation*}
We obtain the maximum-likelihood estimates
\begin{equation*}
(\hat A_P,\hat f_P)=\text{argmin}_{A_P,f_P}\big[-\log \mathcal{L}(A_P,f_P)\big],
\end{equation*}
subject to constraints that ensure that $0 < q_P(L) < 1$ for all fitted cycle depths $L$.  
This yields the fitted Pauli fidelities $\hat f_P$ that we plot in Fig.~\ref{fig:logical-cb-survivals-plots}.

The average of all fitted Pauli fidelities $\hat f_P$ reconstructs the estimated process fidelity of the cycle:
\begin{equation*}    
    F_\text{pro} = \frac{1}{|\mathcal{P}_\text{samp}|}\sum_{P \in \mathcal{P}_\text{samp}} \hat f_P.
\end{equation*}
From the (cycle) process fidelity, we compute the process fidelity of the two-qubit sub-cycle by taking $F_\text{pro, 2Q} = (F_\text{pro})^{1/n}$, where $n$ is the number of 2Q gates in the cycle ($n=1$ for inter-block gates and $n=2$ for intra-block $U_{ZZ}(\pi / 2)$).
We estimate the gate average fidelity from the two-qubit process fidelity through the formula $F_\text{avg} = (4F_\text{pro, 2Q} + 1)/5$ \cite{Nielsen2002} and report the results for each experiment in Table~\ref{tab:logical-cb-fidelities}.
Corresponding results for acceptance rates are reported in Fig.~\ref{fig:logical-cb-accept-plots} and Table~\ref{tab:logical-cb-accept-rates}.

In order to obtain uncertainties on the estimate of the gate-average fidelities, we resample from the binomial distribution for each sequence depth and Pauli and fit the bootstrapped expectation-value decays using the MLE method. From the fitted Pauli fidelities across 500 resamples, we obtain a bootstrap distribution of the gate average fidelity and quantify uncertainty on the empirical estimate using the $1\sigma$ (68\%) quantile around the bootstrap mean.
We report the results in Table~\ref{tab:logical-cb-fidelities}.

\begin{figure}[!h]
    \centering
    \includegraphics[width=.95\linewidth]{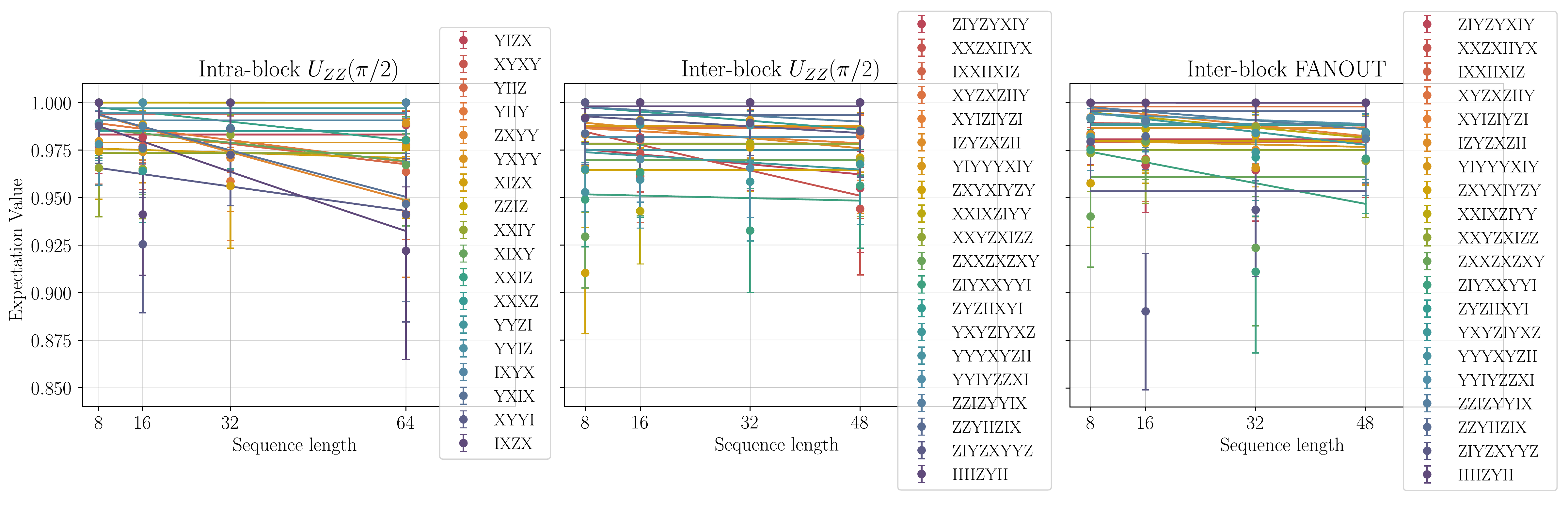}
    \caption{
    Fitted exponential decays of logical Pauli expectation values as a function of sequence depth for the intra-block $U_{ZZ}(\frac \pi 2)$, inter-block $U_{ZZ}(\pi / 2)$, and inter-block FANOUT cycle-benchmarking experiments. Measured expectation values and fitted exponential decays are shown for each sampled 4-qubit Pauli string (for intra-block tests) and 8-qubit Pauli string (for inter-block tests). Each cycle consists of Pauli-twirled logical two-qubit gate layers followed by a round of syndrome extraction every 8 layers. Error bars indicate the 68\% Wilson confidence intervals.
    }
    \label{fig:logical-cb-survivals-plots}
\end{figure}

\begin{figure}[!h]
    \includegraphics[width=.95\linewidth]{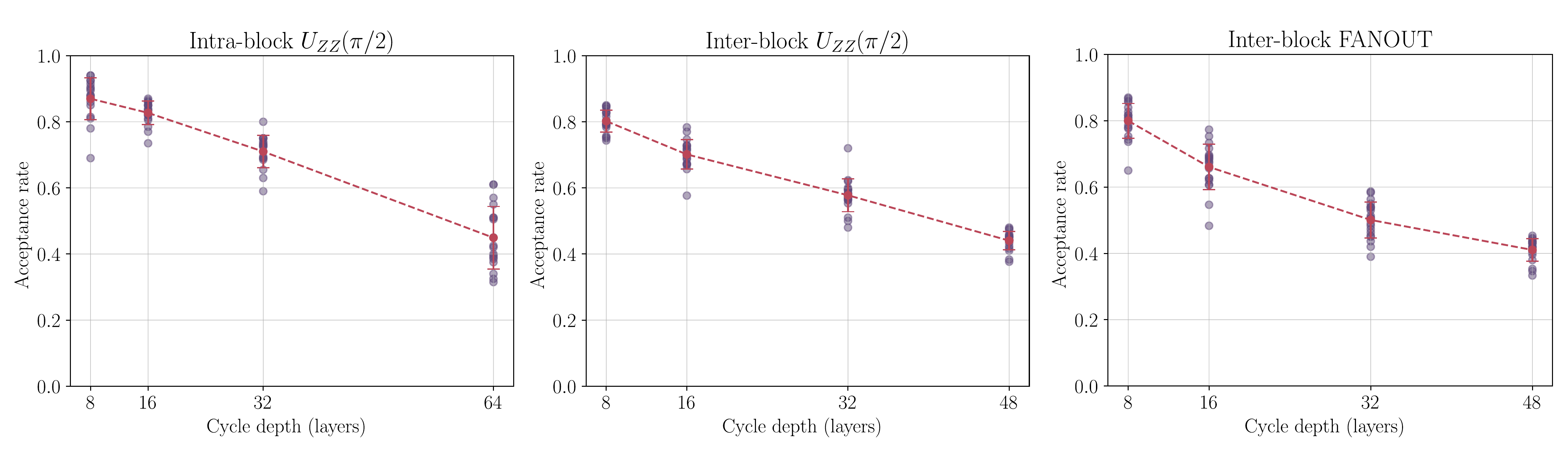}
    \caption{\textbf{Acceptance rates of logical gate benchmarking experiments.} Acceptance rate as function of cycle depth for the intra-block $U_{ZZ}(\pi / 2)$, inter-block $U_{ZZ}(\pi / 2)$, and inter-block FANOUT logical cycle-benchmarking tests. Numerical results are tabulated in Table~\ref{tab:logical-cb-accept-rates}. Pink dashed curves indicate the average acceptance rate at each depth. 
    }
    \label{fig:logical-cb-accept-plots}
\end{figure}

\begin{table}[!h]
\centering
\begin{tabular}{lc}\hline \hline\\[-.25cm]
Experiment & Gate average infidelity   \\[.3em]  \hline \\[-.6em]
Intra-block $U_{ZZ}(\frac \pi 2)$ 
& $1.0^{+2}_{-4} \times10^{-4}$ \\[.1em] 
Inter-block $U_{ZZ}(\frac \pi 2)$ 
& $1.2^{+5}_{-4} \times10^{-4}$ 
\\[.1em] 
Inter-block FANOUT 
& $1.1^{+4}_{-4}\times10^{-4}$  \\[.1em] \hline\hline
\end{tabular}
\caption{Estimated gate average infidelity for various logical cycle-benchmarking tests with 1$\sigma$ confidence intervals obtained through bootstrap resampling.}
\label{tab:logical-cb-fidelities}
\end{table}

\begin{table}[!h]
\centering
\begin{tabular}{lcccc}\hline \hline\\[-.25cm]
Experiment & Cycle depths & Total shots (each depth) & Overall accept rate  \\[.3em]  \hline \\[-.6em]
Intra-block $U_{ZZ}(\frac \pi 2)$ 
& [$8$, $16$, $32$, $64$] 
& $4000$
& [$.87(6)$, $.83(4)$, $.71(5)$, $.45(9)$] 
\\[.1em] 

Inter-block $U_{ZZ}(\frac \pi 2)$  
& [$8$, $16$, $32$, $48$] 
& $6000$ 
& [$.80(3)$, $.70(4)$, $.58(5)$, $.44(3)$] 
\\[.1em] 

Inter-block FANOUT 
& [$8$, $16$, $32$, $48$] 
& $6000$
& [$.80(5)$, $.66(7)$, $.50(5)$, $.41(3)$] 
\\[.1em] \hline\hline
\end{tabular}
\caption{Acceptance rates across various logical cycle-benchmarking (CB) tests showing depths, total shots per depth, and acceptance rates.}
\label{tab:logical-cb-accept-rates}
\end{table}

\section{Partially fault-tolerant gates between $\I{k}$ code blocks} \label{sec:gates}
A convenient feature of the iceberg code is that logical $U_{ZZ}(\theta)$ and $U_{XX}(\theta)$ gates can be implemented with single physical two-qubit gates within an iceberg code block. However, the same cannot be said for logical two-qubit rotations between two disjoint code blocks. The underlying reason is that, if one thinks of the two-qubit Pauli operators appearing in each rotation as being composed of the product of two one-qubit Pauli operators, one has $\logZ{}^i \logZ{}^j = Z_{i}Z_{n-1}Z_{j}Z_{n-1} = Z_{i} Z_{j}$ within a single code block; whereas the corresponding $Z_{n-1}$ operators do not cancel when the logical qubits $q_i$ and $q_j$ are in different code blocks and therefore the logical rotation is four-body.

One method to execute this four-body rotation with some amount of fault-tolerance is to use the well-known decomposition of the $U_{ZZ}(\theta)$ gate in terms of $\text{CX}$ gates and the operator $U_Z (\theta) = \exp\left(-i\frac{\theta}{2} Z \right)$, that is, $U_{ZZ}(\theta) = \CX_{12} U_Z (\theta)_2 \CX_{12}$ (indexing the two qubits involved as $1$ and $2$). Treating all operators in this expression as logical and using the fact that fact that logical CX gates can be applied transversally in the iceberg code (that is, the logical CX operator can be implemented by broadcasting the physical CX operator between the respective physical qubits in each code block), this decomposition can be implemented by the circuit gadget shown on the left-hand side of Fig.~\ref{fig:between-gates}a.

\begin{figure*}[!h]
\centering
\includegraphics[width=\textwidth]{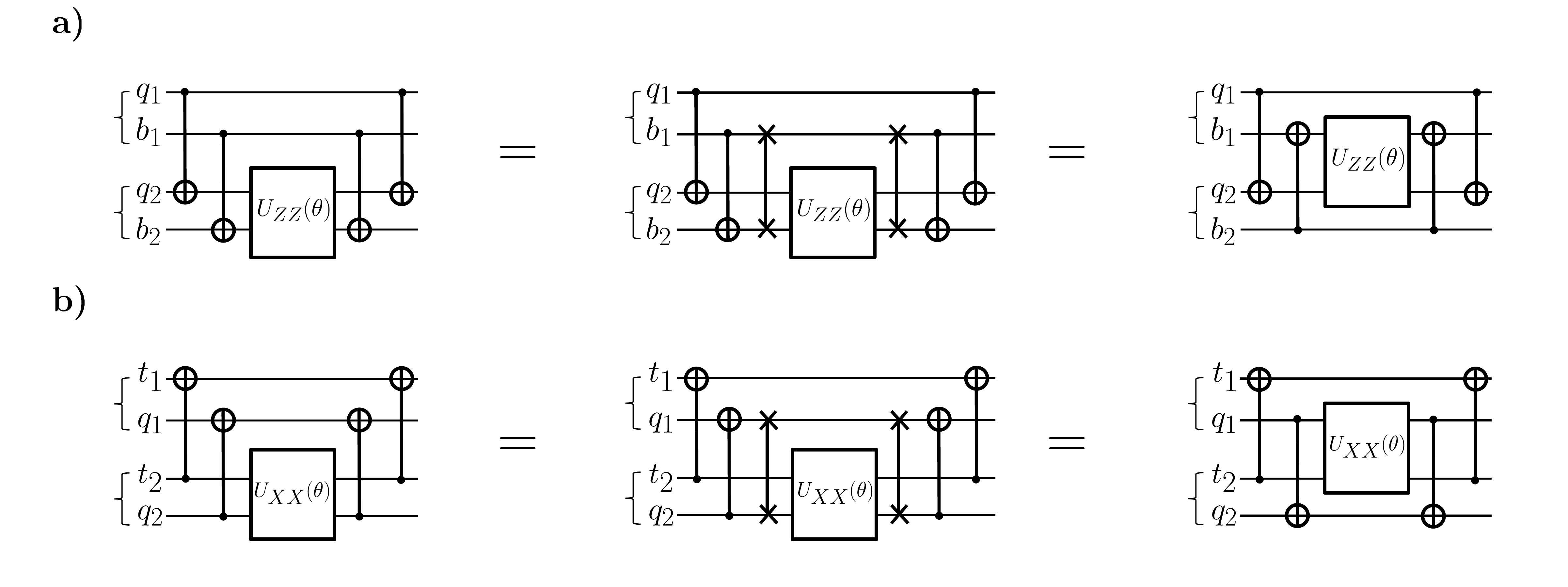}
\caption{
{\bf Partially fault-tolerant gates between code blocks.}
\textbf{(a)} Starting on the left-hand-side with the circuit gadget corresponding to the $\CX$-$U_Z(\theta)$-$\CX$ decomposition of the $U_{ZZ}$, one can insert SWAP gates to move the $U_{ZZ} (\theta)$ operator so that it spans between blocks. With a judicious choice of decomposition of the SWAP into CX gates, and using the fact that Pauli $Z$ operators on the control qubit commute with the CX gate, one arrives at the decomposition on the right-hand side. \textbf{(b)} The analogous series of steps for the circuit gadget implementing the logical $U_{XX} (\theta)$ between code blocks, noting that Pauli $X$ operators on the target qubit also commute with the CX gate. In all circuits above, $t = q_0$ is the iceberg top qubit and $b = q_{n-1}$ is the iceberg bottom qubit on each code block, with brackets indicating qubits that are in the same code block and indices labeling the first or second code block.
}\label{fig:between-gates}
\end{figure*}

This gadget is pFT: any single Pauli error on the CX gates is detected in one of the two code blocks. However, it fails to be fully FT for the same reason that the $U_{ZZ}(\theta)$ gate alone fails to be FT in the iceberg code; namely, weight-2 $XX$, $YY$, or $ZZ$ errors on the $U_{ZZ}(\theta)$ are undetectable. The fault-tolerance can be improved by rewriting the circuit in a logically equivalent way by using SWAP gates to move the location of the $U_{ZZ}(\theta)$ gate so that it is supported on both code blocks, as in Fig.~\ref{fig:between-gates}. The resulting circuit gadgets on the right-hand side for $U_{ZZ}(\theta)$ and $U_{XX}(\theta)$ have the desirable property that the non-FT rotation gate now spans between code blocks, with the effect that only the $ZZ$ errors remain undetectable on the physical $U_{ZZ}(\theta)$ gate. Note that $U_{ZZ}(\theta)$ is the only physically implemented two-qubit gate in the QCCD architecture, so the corresponding undetected $XX$ error on the $U_{XX} (\theta)$ gate is physically a $ZZ$ error on the underlying $U_{ZZ}(\theta)$ gate as well.

The Pauli error model measured for the two-qubit gates on Helios \cite{helios} has negligible contribution from the $XX$ and $YY$ channels, with essentially all weight-2 errors being of $ZZ$ type. Therefore, this construction does not necessarily substantially improve the fault-tolerance of a practical implementation of these gates on Helios. However, we expect that they may be useful for iceberg code implementations on other devices that may have a different Pauli error model.

In Fig.~\ref{fig:entangling}a in the main text we also present the circuit decomposition of the inter-block FANOUT gate, which enacts a global logical parity flip $\prod_i \logX^i$ on one code block controlled by a logical qubit in another code block. The intuition underlying this circuit gadget is that the global logical parity flip can be enacted by applying $X_0 X_{n-1}$ on the second code block, and that controlling on logical qubit $c$ is equivalent to physically controlling on both $c$ and $q_{n-1}$. It is straightforward to verify that this circuit gadget applies $X_0 X_{n-1}  = \prod_i \logX^i$ if and only if qubit $c$ is in the logical $|\overline{1}\rangle$ state.

The FANOUT gate is interesting for several reasons: first, that it is FT even to all single weight-2 gate errors, unlike the pFT $U_{ZZ}$ gates. Second, it allows distribution of entanglement between code blocks with a number of gates (four) that does not scale with the size of the code blocks, unlike a transversal CX construction. A convenient aspect of the construction is that the precise way in which it distributes entanglement among code blocks aligns exactly with efficient logical GHZ state preparation among many $\I{k}$ code blocks as described in the main text and explained further in Sec.~\ref{sec:ghz_prep}.

Although we do not provide constructions here, there are clear extensions of the FANOUT gate to other interesting FT gates that have constant overhead (i.e., with number of gates not scaling with code block size): namely, one can control on, or target, a specific subset of qubits with simple modifications of the choices of $\CX$ gates applied.

\section{Unencoded GHZ state fidelity estimation} \label{sec:physical_ghz}
To measure the fidelity of the physical GHZ state, we use a state verification scheme introduced in Ref.~\cite{Li2020}. A state verification involves constructing an ensemble of pass/fail tests such that the average passing probability reveals information of the closeness to the ideal state. Specifically, one constructs a verification operator $\Omega=\sum_lp_l\Pi_l$ where $p_l$ is the probability of performing a pass/fail test $\{\Pi_l,\mathrm{I}-\Pi_l\}$, which is a two-outcome Positive Operator-Valued Measure (POVM). As shown in Ref. \cite{Li2020}, when the verification operator is homogeneous, i.e., $\Omega=|\Psi\rangle\langle\Psi|+\beta(\mathrm{I}-|\Psi\rangle\langle\Psi|)$ where $|\Psi\rangle$ is the target state and $\beta$ is the second largest eigenvalue of $\Omega$, the fidelity $F:=\langle\Psi|\rho|\Psi\rangle$ between the prepared state $\rho$ and the target state $|\Psi\rangle$ is related to the average passing probability as 
\begin{align}
F=\frac{\Tr[\Omega\rho]-\beta}{1-\beta}. \label{eq:F_to_omega}
\end{align}
Therefore, measuring the fidelity amounts to measuring the average passing probability of the ensemble of tests $\Tr[\Omega\rho]$. 

For the GHZ state $|\text{GHZ}\rangle=(1/\sqrt{2})(|0\cdots0\rangle+|1\cdots1\rangle)$, Ref. \cite{Li2020} found that 
\begin{align*}
    \Omega&=|\text{GHZ}\rangle\langle\text{GHZ}|+\frac{1}{3}\left(\mathrm{I}-|\text{GHZ}\rangle\langle\text{GHZ}|\right) \nonumber\\
    &=\frac{1}{3}P_Z+\frac{2}{3}\left(\frac{1}{2^{N-1}}\sum_{XY}P_{XY}\right),
\end{align*}
where $P_Z=|0\cdots0\>\langle0\cdots0|+|1\cdots1\rangle\langle1\cdots1|$ and $P_{XY}$ is a projector onto the +1 eigenspace of an XY-Pauli stabilizer of $|\text{GHZ}\rangle$ as defined in Eq.~(12) of Ref.~\cite{Li2020}. Passing a $P_Z$ test means the outcome is either $|0\rangle^{\otimes N}$ or $|1\rangle^{\otimes N}$ while passing a $P_{XY}$ test means the corresponding Pauli expectation value is $+1$. Operationally, $\Tr[\Omega \rho]$ is obtained by performing the $P_Z$ test with probability $1/3$ and with probability $2/3$, uniformly sampling a $P_{XY}$ test to perform (there are $2^{N-1}$ of them). Since $\beta=1/3$ in this case, we can obtain the fidelity as $F=(3/2)\Tr[\Omega\rho]-(1/2)$. 

It is worth noting that the sampling cost of this procedure is slightly lower than that of Direct Fidelity Estimation (DFE) \cite{Flammia2011}. Recall that DFE samples a stabilizer uniformly at random and measures the expectation value of that stabilizer. $F$ is obtained by averaging over all measurements. Since each observable is a Pauli operator that takes value $\pm1$ for every shot, sampling and measuring a stabilizer for each shot corresponds to an i.i.d. Bernoulli trial with outcomes $\pm1$. Hence, the variance of the estimate is $(1-F)(F+1)$. However, as shown in Ref. \cite{Li2020}, the variance of the fidelity obtained via Eq.~\eqref{eq:F_to_omega} is $(1-F)\left(F+\tfrac{\beta}{1-\beta}\right)$. In this case where $\beta=1/3$, the variance is $(1-F)(F+0.5)$, marginally lower than that of DFE.

In the experiment, we used a total of 1000 shots for each fidelity estimation, where $1/3$ of the shots were allocated to $P_Z$ tests and the rest for $P_{XY}$ tests. For each shot allocated to the $P_{XY}$ test, we sampled and measured a different XY-Pauli stabilizer of $|\text{GHZ}\rangle$. The average passing probabilities of the $P_Z$ and $P_{XY}$ tests are weighted by $1/3$ and $2/3$ correspondingly to obtain $\Tr[\Omega\rho]$, which is then used to deduce fidelity via Eq.~\eqref{eq:F_to_omega}. 

\section{Logical GHZ state preparation in $d=2$ iceberg codes} \label{sec:ghz_prep}

In a single global $\I{k}$ code, the physical state on the $n$ qubits encoding the logical $|\overline{\text{GHZ}}_k\rangle$ state takes a particularly simple form $|\overline{\text{GHZ}}_k\rangle = |\Phi\rangle_{tb} \otimes|\text{GHZ}_{n-2}\rangle$  where $|\Phi\rangle_{tb} = \tfrac{1}{\sqrt{2}}(|00\rangle_{tb} + |11\rangle_{tb})$ is a Bell pair of the top and bottom qubits $t=q_0$ and $b=q_{n-1}$ respectively. As a result, the logical GHZ state can be directly prepared fault-tolerantly by preparing the physical GHZ state on the system qubits (via the same log-depth initialization gadget that would be applied to prepare the $|\overline{0}\rangle^{\otimes k}$ state) and the Bell pair on $q_0, q_{n-1}$ independently. We demonstrate the state preparation gadget in Fig.~\ref{fig:ghz-global}. We note that the Bell pair state preparation is FT since the undetectable weight-2 errors on the $\CX$ gate stabilize the Bell state.

\begin{figure*}[!h]
\centering
\includegraphics[width=\textwidth]{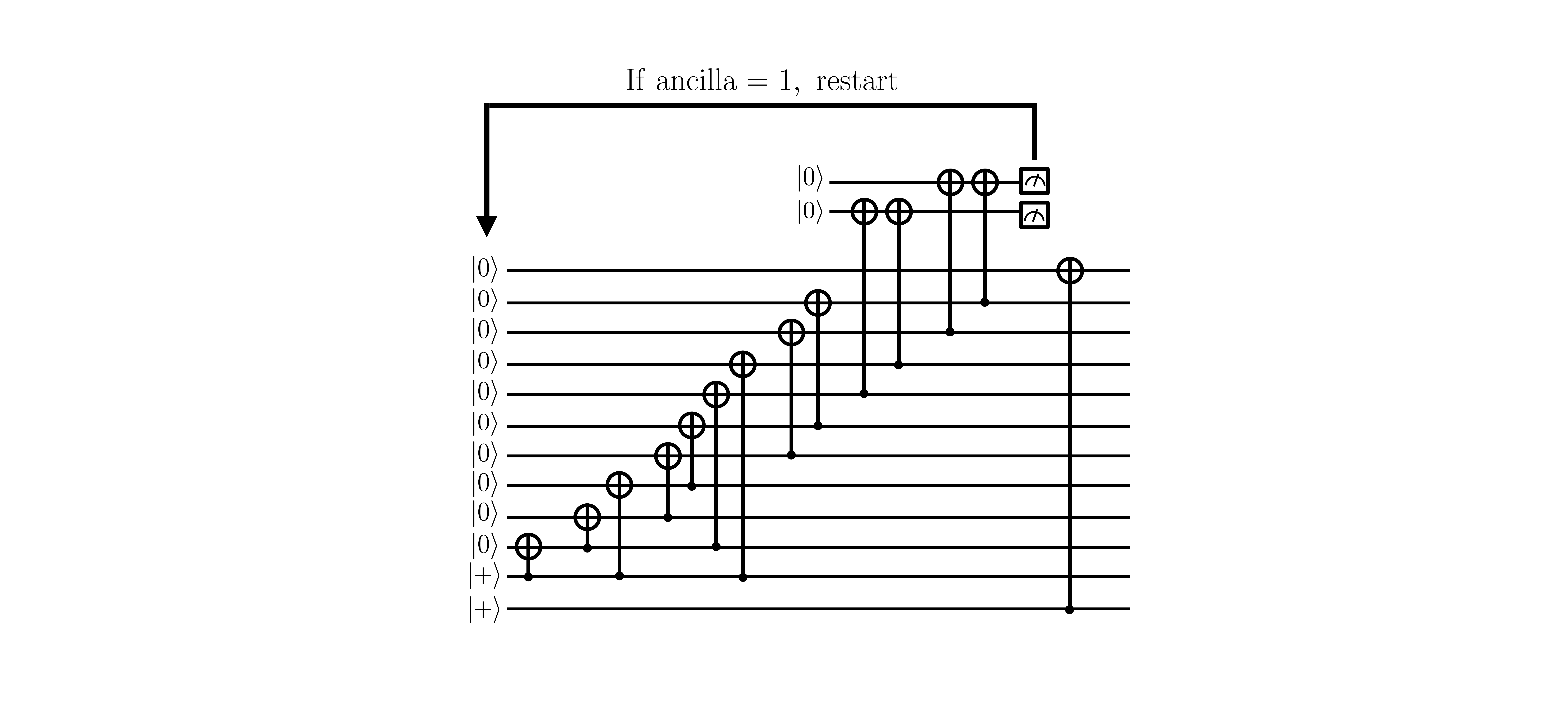}
\caption{
{\bf Logical GHZ state preparation in a global $\I{k}$ code.} The state preparation separately prepares a Bell pair between $q_0$, $q_{n-1}$ and a physical GHZ state on the remaining qubits in log-depth following the same construction as Fig.~\ref{fig:logdepth}.
}\label{fig:ghz-global}
\end{figure*}

As discussed, we perform single-shot fidelity estimation by reading out the GHZ stabilizer $\prod_i \logX^i$ non-destructively simultaneously with obtaining all other stabilizers $\logZ^i \logZ^{i+1}$ from destructive readout in the $Z$ basis, using the fact that $\prod_i \logX^i$ commutes with $S_X$. We measure $\prod_i \overline{X^{i}}$ by noting that it is stabilizer-equivalent to $X_0 X_{n-1}$, which we measure directly by coupling to a single ancilla qubit. To make the implementation FT against state preparation and measurement (SPAM) errors, we repeat the non-destructive measurement of $\prod_i \logX^i$ twice and discard shots in which the parity does not agree between the two measurements. Subsequently measuring $S_X$ non-destructively and reading out each qubit in the $Z$ basis obtains $S_Z$ and all $\logZ^{i} \logZ^{i+1}$. Shots with $S_X = -1$ or $S_Z = -1$ are postselected out and the logical GHZ state fidelity is then estimated as the fraction of accepted shots that prepared $+1$ eigenstates of both $\prod_i \logX^i$ and all $\logZ^i \logZ^{i+1}$. This procedure is demonstrated in the general case of possibly multiple code blocks in Fig.~\ref{fig:entangling}c.

In Table~\ref{tab:ghz} we report the values that appear in Fig.~\ref{fig:encoding}e, excluding results from GHZ state preparation in concatenated iceberg codes, to be discussed in Sec.~\ref{sec:ghz_prep_d4}.

\begin{table}[h!]
\begin{ruledtabular}
\begin{tabular}{lccccc}
Experiment & Fidelity & Total shots & Leakage accept rate &  Overall accept rate  \\[.3em]  \hline \\[-.6em]
$N = 48$ unencoded  & $.79(2)$ & $1000$ & -- & --  \\[.1em] 
$N = 48$ unencoded, LH  & $.74(2)$ & $1000$ & $.966^{+5}_{-6}$ & $.966^{+5}_{-6}$ \\[.1em]
$\I{48}$ & $.983^{+4}_{-5}$ & $2000$ &  -- & $.42(1)$ \\[.1em] 
$\I{48}$, LH & $.986^{+3}_{-4}$ & $3000$ & $.908(5)$ & $.392(9)$  \\[.1em] 
12$\times \I{4}$  & $.992(3)$ & $3000$ & -- & $.288(8)$  \\[.1em] 
12$\times \I{4}$, LH  & $.998^{+1}_{-3}$ & $3000$ & $.683^{+8}_{-9}$ &  $.210^{+8}_{-7}$   \\[.1em]\hline \\[-0.8em]
$N = 60$ unencoded  & $.78(2)$ & $1000$ & -- & --   \\[.1em] 
$N = 60$ unencoded, LH  & $.76(2)$ & $1000$ & $.956^{+6}_{-7}$ & $.956^{+6}_{-7}$  \\[.1em]
$\I{60}$  & $.976^{+5}_{-6}$ & $2000$ & -- & $.42(1)$ \\[.1em] 
$\I{60}$, LH  & $.980(4)$ & $3000$ & $.650(6)$ & $.389(9)$  \\[.1em] 
15$\times \I{4}$ & $.997^{+2}_{-3}$ & $3000$ & -- & $.198(7)$   \\[.1em] 
15$\times \I{4}$, LH & $1.000^{+0}_{-2}$ & $3000$ & $.540(9)$ & $.146^{+7}_{-6}$  \\[.1em] \hline \\[-0.8em]
$N = 94$ unencoded  & $.64(2)$ & $1000$ & -- & --   \\[.1em] 
$N = 94$ unencoded, LH & $.61(2)$ & $1000$ & $.89(1)$ &  $.89(1)$  \\[.1em] 
$\I{94}$  & $.87(1)$ & $2000$ & -- & $.36(1)$   \\[.1em]
$\I{94}$, LH  & $.949^{+7}_{-9}$ & $3000$ & $.650(9)$ & $.25(1)$   \\[.1em] 
\end{tabular}
\end{ruledtabular}
\caption{Logical distance-two and unencoded GHZ experiment results. As before, LH refers to leakage-heralded measurement and the corresponding leakage accept rate is reported in the third column.}
\label{tab:ghz}
\end{table}

\section{Logical GHZ state preparation in the $\I{8}\circ\I{6}$ code} \label{sec:ghz_prep_d4}

The stabilizers of the 48-logical-qubit GHZ state $\ket{\overline{\text{GHZ}}_{48}}$ encoded in $\I{8}\circ \I{6}$ are generated by the $\I{8}\circ \I{6}$ stabilizers, the pairwise product of all logical Z operators $\logZ^{i_1,j_1}\logZ^{i_2,j_2}$, and the product of all logical $X$ operators $\prod_{i=1}^{i=6}\prod_{j=1}^{j=8} \logX^{i,j}.$ This last stabilizer can be rewritten as $\prod_{i=1}^{i=6}\prod_{j=1}^{j=8} \logX^{i,j} = X_{0,0}X_{7,0}X_{0,9}X_{7,9}.$ To see this, observe that the product of X operators forming a rectangle on the four corners of the $8 \times 10$ grid will anti-commute with all the $\logZ^{i,j}.$ To prepare the global GHZ state, we begin by fault-tolerantly preparing the $\ket{\overline{0}}^{\otimes 48}$ state via the ``Many Hypercubes'' state preparation circuit \cite{Goto:2024xd}. This state possesses all the $Z$ stabilizers of $\ket{\overline{\text{GHZ}}_{48}}$ and all $X$ $\I{8} \circ \I{6}$ stabilizers. To transform the state into one stabilized by $\prod_{i=1}^{i=6}\prod_{j=1}^{j=8} \logX^{i,j}$ as well, we fault-tolerantly measure this operator, which will return $\pm 1$ eigenvalue. If the eigenvalue is $-1$, we virtually apply a Pauli correction to flip it to $+1$ by updating the Pauli frame, completing the GHZ preparation procedure. 

We now describe how to fault-tolerantly measure the global logical $X$ operator. For an $\I{8}$ code, the logical $X$ operator $\prod_{i=1}^{6}\logX^i = X_0X_7$ is the stabilizer of a GHZ state on all six logical qubits. Therefore, we can use an $\I{8}$ ancilla encoding a GHZ state to measure the $\I{8}\circ\I{6}$ operator $\prod_{i=1}^{i=6}\prod_{j=1}^{j=8} \logX^{i,j} = X_{0,0}X_{7,0}X_{0,9}X_{7,9}$ by performing transversal CX gates between the $\I{6}$ GHZ state and the top and bottom data qubit iceberg blocks of $\I{8}\circ\I{6}$, shown in Fig.~\ref{fig:non_ft_GHZ_d4}. However, this is not FT because $X$ errors on the lower-level GHZ state can undetectably propagate to weight-two errors on the data qubits. To fix this, we extract the $Z$ stabilizers of $\ket{\overline{\text{GHZ}}_{6}}$, which allows us to locate single-qubit $X$ errors. For instance, $X_1$ will flip the stabilizers $Z_1Z_i$ for $2 \leq i \leq 6$. This will allow us to turn generalized weight-two errors into generalized weight-one errors (see Defs.~\ref{def:gen wt 2} and~\ref{def:gen wt 1}). The only single-qubit $X$ errors that are indistinguishable are $X_0$ and $X_7$, but these are equivalent up to a stabilizer of the GHZ state. We extract these stabilizers by measuring transversally in the $Z$ basis and extract the $X$ GHZ stabilizer via physical ancillae and the $\I{6}$ stabilizer via the EDZ gadget to get the circuit in Fig.~\ref{fig:se_protocols}b. Altogether, we get the FT measurement circuit in Fig.~\ref{fig:ft_GHZ_d4}, which must be repeated once to detect a logical measurement error in the lower-level $\I{6}$ block. If the two measurements disagree by a logical operator of $\I{6}$, we postselect.

After preparing the GHZ state, we measure in either the $Z$ and $X$ bases and use the expectation values of the logical observables to estimate the fidelity of this logical GHZ state via a union bound as in Ref.~\cite{Hong_2024}. The entire circuit consisting of the state preparation of zero, the measurement of the global logical $X$ operator, and a transversal $X$- or $Z$-basis measurement has a fault-distance of four, so we perform correlated decoding over the entire circuit volume, correcting a single error occurring anywhere in the circuit and postselecting if two or more errors occur. This implies that this procedure should have an infidelity of $O(p^3)$ and a discard rate of $O(p^2)$ with respect to the physical error rate. This is illustrated in circuit-level simulations of this protocol in Stim, shown in Fig.~\ref{fig:GHZScalingPlot} in the case of the $Z$ basis. The scaling in the $X$ basis is similar.

On Helios, we prepared the 48 logical qubit GHZ state and then either measured in the $X$ basis and the $Z$ basis. We ran 4000 shots in either basis and used eq. (4) from \cite{Hong_2024} to upper bound the infidelity of the state. Since we correct any single error including during state preparations (as opposed to preselecting on these preparations) there are only two acceptance rates: leakage acceptance and total acceptace. These are given in Table~\ref{tab:ghz d=4 results}.

\begin{figure}[!h]
\centering
\includegraphics[width=0.5\textwidth]{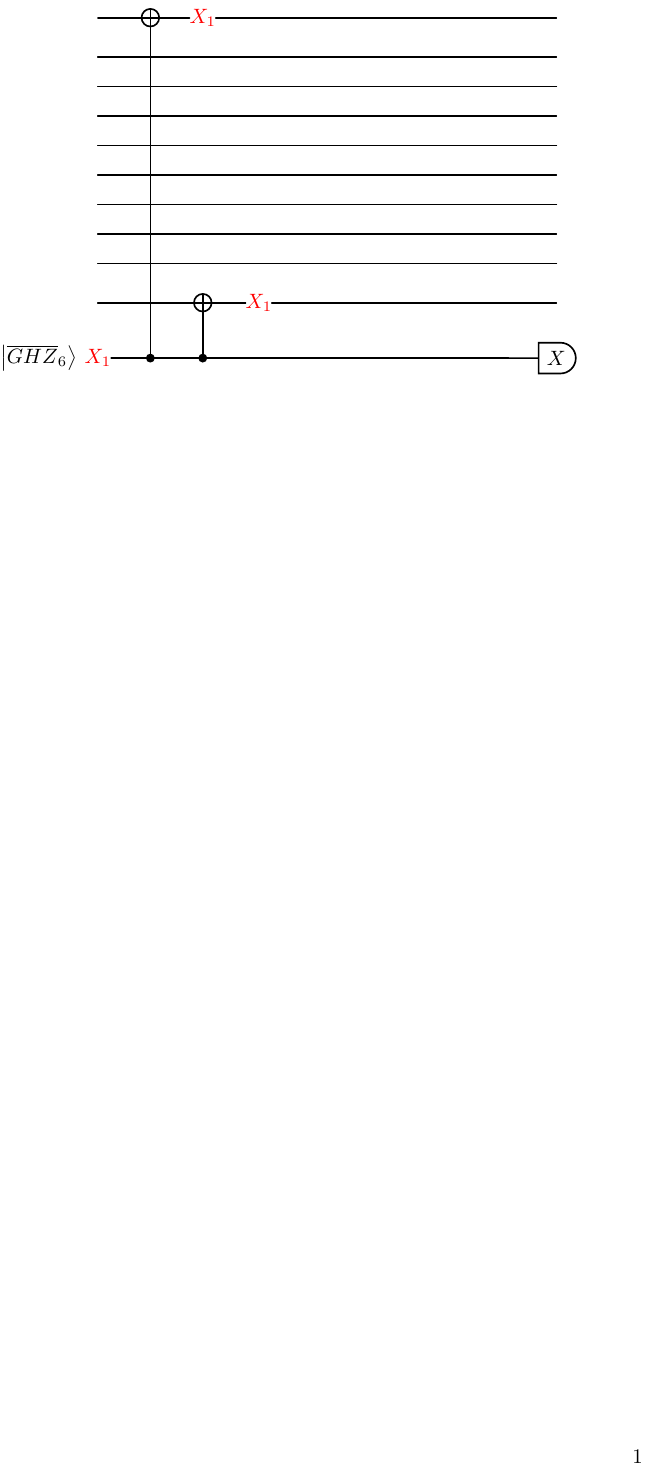}
\caption{
\textbf{Non-fault-tolerantly measuring the global logical $X$ stabilizer of $\ket{\overline{\text{GHZ}}_{48}}$.} A GHZ state encoded in $\I{6}$ can be used to measure the global logical $X$ stabilizer of $\ket{\overline{\text{GHZ}}_{48}}$. We show an example of non-fault-tolerance where a single $X_1$ error on the lower-level GHZ state can undetectably spread to a weight-2 error on the data code block.
}\label{fig:non_ft_GHZ_d4}
\end{figure}

\begin{figure}[!h]
\centering
\includegraphics[width=0.5\textwidth]{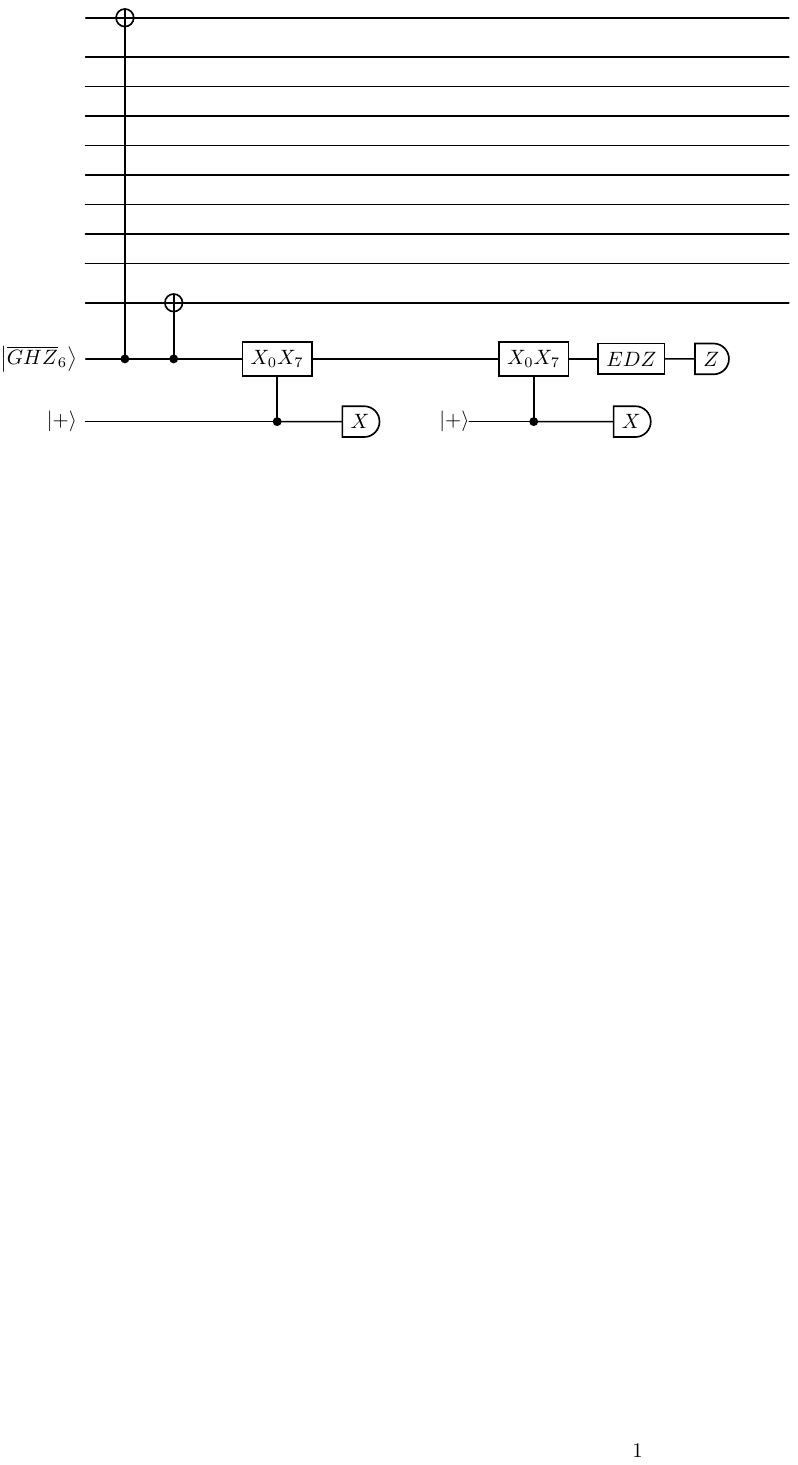}
\caption{
{\textbf{Fault-tolerantly measuring the global logical $X$ stabilizer of $\ket{\overline{\text{GHZ}}_{48}}$.} The transversal $Z$ measurement of the lower-level GHZ state will allow for the correction of an $X$ hook error from the ancilla. To be robust to two errors, this circuit must be repeated again since two measurement errors could cause a logical measurement error of the $X_0X_7$ operator of the lower-level $\I{6}$ block.}
}\label{fig:ft_GHZ_d4}
\end{figure}

\begin{figure}[!h]
\centering
\includegraphics[width=1.0\textwidth]{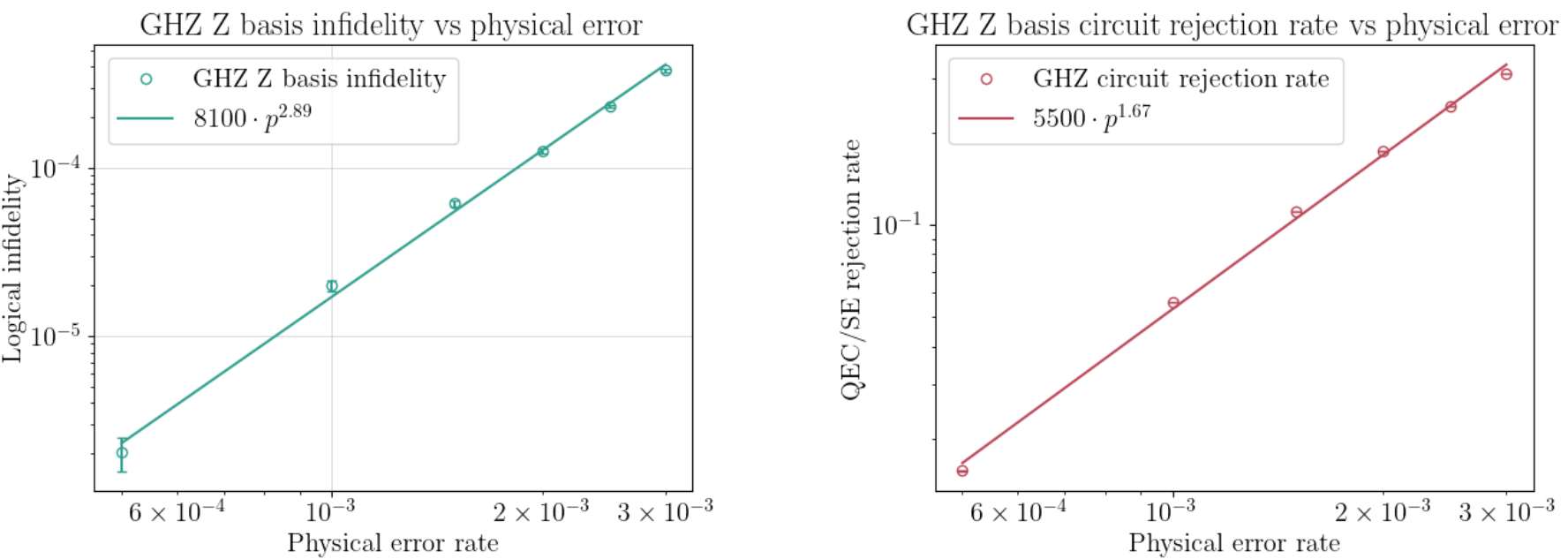}
\caption{
{\textbf{Simulated GHZ Z basis infidelity and reject rate as a function of physical noise rate $p$.} This data is derived from circuit-level simulations performed in Stim~\cite{Gidney:2021wm}. A two-qubit depolarizing error and bit flip error are inserted after every two-qubit gate and measurement respectively with strength $p$. Because one error anywhere in the circuit is corrected, the total rejection rate should scale as $O(p^2)$. Fault tolerance to distance $4$ further implies that the infidelity should scale as $O(p^3)$. Performing linear regression on the simulated rejection rate and infidelity to estimate these exponents demonstrates this scaling.}
}\label{fig:GHZScalingPlot}
\end{figure}

\begin{table}[!h]
\begin{tabular}{lcccclcc}\hline \hline\\[-.25cm]
\multicolumn{1}{c}{Experiment} & \multicolumn{1}{c}{Leakage acceptance} & \multicolumn{1}{c}{Overall accept rate} & Accepted shots & Errors \\[.3em]  \hline \\[-.6em] 
$\I{8} \circ \I{6}$ (X Basis) & $.372_{-6}^{+8}$ & $.53(1)$ & 790 & 0 \\[.1 em]
$\I{8} \circ \I{6}$ (Z Basis) & $.462_{-8}^{+7}$ & $.67_{-1}^{+2}$ & 1233 & 0 \\[.1 em] \hline \hline
\end{tabular}
\caption{Experimental data for $d=4$ 48 logical qubit GHZ state experiments in the $\I{8} \circ \I{6}$ code. 4000 shots were submitted in each basis. Leakage and overall acceptance rates are given. Of the 790 accepted shots in the X basis and 1233 accepted shots in the Z basis, no errors occurred.}
\label{tab:ghz d=4 results}
\end{table}

\section{Data from preliminary XY model experiments in the $\I{50}$ code} \label{sec:bilayer}

Prior to executing the $\I{64}$ code $XY$ model experiments in the main text, we performed several preliminary experiments using the $\I{50}$ code for the $XY$ model on a $2\times 5\times 5$ lattice with periodic boundary conditions (Fig.~\ref{fig:50lq_results}a). These experiments were aimed at assessing the optimal configuration (in terms of achievable circuit fidelity) of a relatively sparse number of syndrome extraction rounds. Three different configurations were studied. In the first, we extracted both syndromes $S_X$ and $S_Z$ at both the midpoint of the circuit as well as at the end of the circuit ($S_X$ extraction $+$ midpoint SE). In the second, both syndromes were extracted at the midpoint, but only $S_Z$ was extracted at the end of the circuit (midpoint SE). In the third and final, no midcircuit syndrome extraction was performed but both syndromes were extracted at the end of the circuit ($S_X$ extraction). We did not use leakage-heralded measurement in any of the $\I{50}$ experiments.

We report the results of these experiments in terms of the obtained estimates of fidelity and acceptance rates in Fig.~\ref{fig:50lq_results}b, Fig.~\ref{fig:50lq_results}c and Tables~\ref{tab:50lq_results},\ref{tab:50lq_results2},\ref{tab:50lq_fit_params}. The first two configurations demonstrated break-even performance across the range of Trotter steps studied, with the third performing similarly at $s=2$ to $s=6$ but failing to demonstrate any clear improvement in fidelity at $s=8$ and $s=10$. For this reason, we chose the second configuration (midpoint SE) for the $\I{64}$ experiments in the main text, which performed similarly to the first configuration ($S_X$ extraction $+$ midpoint SE) but with approximately a factor of two improvement in acceptance rate (implying that terminal $S_X$ extraction tends to typically detect many Pauli $Z$-type memory errors induced by the gadget itself that do not affect terminal measurement in the $Z$ basis). 

\begin{figure}[htbp]
  \centering
  \includegraphics[width=\columnwidth]{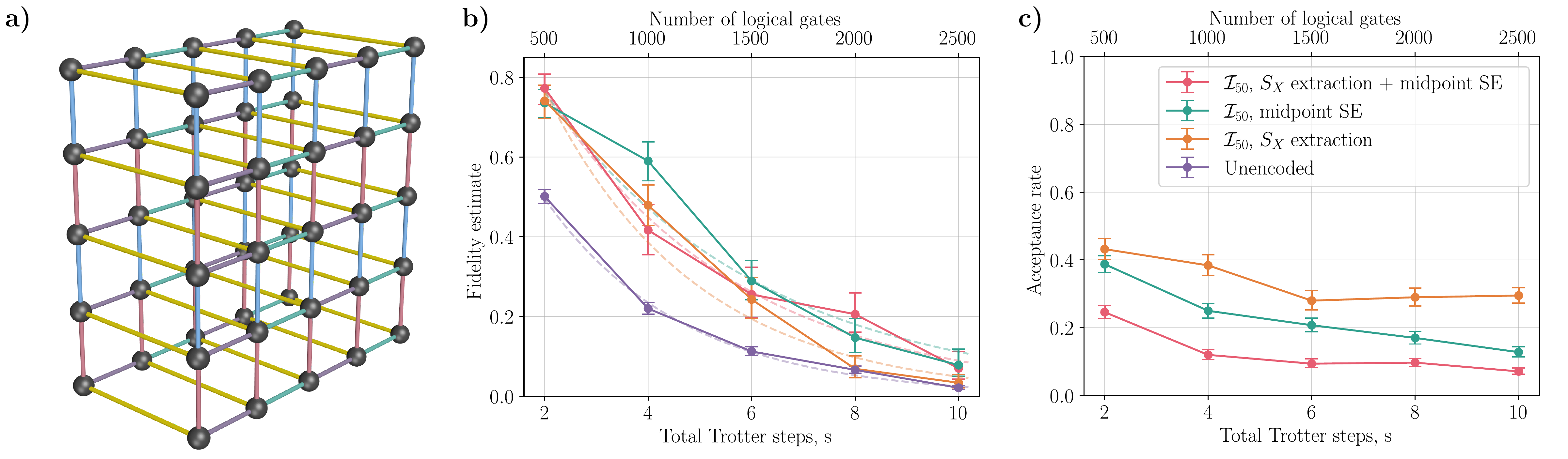}
  \caption{\textbf{Mirror benchmarking of encoded Hamiltonian simulation in the $\I{50}$ code.} \textbf{(a)} The natural edge-$5$-coloring of the $2\times 5 \times 5$ periodic lattice. Edges connecting the periodic boundaries in the latter two directions are not shown to reduce visual clutter, but the corresponding gates are included in the circuits executed, as in the main text. \textbf{(b)} Fidelity estimate as a function of total Trotter steps in the full mirrored circuit. \textbf{(c)} Acceptance rate as a function of total Trotter steps in the full mirrored circuit. }
  \label{fig:50lq_results}
\end{figure}

\begin{table}[!h]
\begin{tabular}{llll}\hline \hline\\[-.25cm]
\multicolumn{1}{c}{Experiment} & \multicolumn{1}{c}{\quad Total Trotter steps},  $s$ & \multicolumn{1}{c}{Fidelity estimate} & \multicolumn{1}{c}{Total shots ($\times 10^{2}$)}  \\[.3em]  \hline \\[-.6em]
$\I{50}$, $S_X$ extraction $+$ midpoint SE & \qquad [$2$, $4$, $6$, $8$, $10$]  & \qquad[$.77(4)$, $.42(6)$, $.26^{+7}_{-6}$, $.21^{+5}_{-4}$, $.07^{+4}_{-3}$]  & \qquad [$5$, $5$, $5$, $7$, $8$]  \\[.1em] 
$\I{50}$, midpoint SE & \qquad [$2$, $4$, $6$, $8$, $10$]  & \qquad[$.73^{+3}_{-4}$, $.59(5)$, $.29(5)$, $.15^{+5}_{-4}$, $.08^{+4}_{-3}$]  & \qquad [$4$, $4$, $4$, $4$, $5$]  \\[.1em] 
$\I{50}$, $S_X$ extraction & \qquad [$2$, $4$, $6$, $8$, $10$]  & \qquad[$.74(4)$, $.48(5)$, $.24(5)$, $.07^{+3}_{-2}$, $.03^{+2}_{-1}$]  & \qquad [$2.5$, $2.5$, $2.5$, $3$, $4$]  \\[.1em] 
Unencoded & \qquad [$2$, $4$, $6$, $8$, $10$]  & \qquad[$.50(2)$, $.22(1)$, $.11(1)$, $.066^{+9}_{-8}$, $.021^{+6}_{-5}$]  & \qquad [$8$, $8$, $8$, $8$, $8$]  \\[.1em] \hline\hline
\end{tabular}
\caption{Fidelity and shot counts across 50 logical qubit $XY$ model quantum simulation experiments at different numbers of total Trotter steps $s$. The five values reported for fidelity estimate and total shots correspond to the five different total Trotter steps respectively.}
\label{tab:50lq_results}
\end{table}

\begin{table}[!h]
\begin{tabular}{ll} \hline \hline\\[-.25cm]
\multicolumn{1}{c}{Experiment} & \multicolumn{1}{c}{\quad Overall accept rate}  \\[.3em]  \hline \\[-.6em]
$\I{50}$, $S_X$ extraction $+$ midpoint SE & \qquad [$.25(2)$, $.12^{+2}_{-1}$, $.09(1)$, $.10(1)$, $.07(1)$]    \\[.1em] 
$\I{50}$, midpoint SE & \qquad [$.39(2)$, $.25(2)$, $.21(2)$, $.17(2)$, $.13^{+2}_{-1}$]  \\[.1em] 
$\I{50}$, $S_X$ extraction & \qquad [$.43(3)$, $.38(3)$, $.28(3)$, $.29(3)$, $.30(2)$]   \\[.1em] 
Unencoded & \multicolumn{1}{c}{\qquad  --}   \\[.1em] \hline\hline
\end{tabular}
\caption{Overall acceptance rates across 50 logical qubit $XY$ model quantum simulation experiments. The five values reported for acceptance rate again correspond to the five different total Trotter steps.}
\label{tab:50lq_results2}
\end{table}

\begin{table}[!h]
\begin{tabular}{lll}\hline \hline\\[-.25cm]
\multicolumn{1}{c}{Experiment} &   \multicolumn{1}{c}{SPAM parameter $A$} & \multicolumn{1}{l}{\qquad Effective average 2Q infidelity $\tilde{\epsilon}_{2Q}$}  \\[.3em]  \hline \\[-.6em]
$\I{50}$, $S_X$ extraction $+$ midpoint SE & \qquad $1.32(9)$ & \qquad \qquad\qquad  $8.6(7)\times 10^{-4}$  \\[.1em] 
$\I{50}$, midpoint SE & \qquad $1.2(2)$ & \qquad \qquad\qquad   $8(1) \times 10^{-4}$  \\[.1em] 
$\I{50}$, $S_X$ extraction & \qquad  $1.5(2)$ &  \qquad \qquad\qquad  $11(1)\times 10^{-4}$  \\[.1em] 
Unencoded & \qquad  $1.05(7)$ & \qquad \qquad \qquad $11.9(6) \times 10^{-4}$  \\[.1em]  \hline\hline
\end{tabular}
\caption{Fit parameters $A$, $\tilde{\epsilon}_{2Q}$ obtained from fitting the model $F(N_{\text{gates}}) = A(1-\frac54 \tilde{\epsilon}_{2Q})^{N_{\text{gates}}}$ to the survival probability as a function of number of logical gates in 50 logical qubit $XY$ model quantum simulation experiments. The SPAM parameters $A$ all exceed $1$, which we take as an general indication of minor deviations from the exponential-decay model. (discussed further below).}
\label{tab:50lq_fit_params}
\end{table}

\section{Data from $\I{64}$ XY model experiments} \label{sec:64lq_data}

We report here the explicit values appearing in Fig.~\ref{fig:computing}d in Tables~\ref{tab:64lq_results},~\ref{tab:64lq_results2} along with additional information about shot counts and leakage acceptance rates. As explained in Methods~\ref{sec:Helios} and plotted in Fig.~\ref{fig:computing}d, we calculate effective average two-qubit gate fidelities for each experiment by fitting the simple model $F(N_{\text{gates}}) = A(1-\frac54 \tilde{\epsilon}_{2Q})^{N_{\text{gates}}}$ to the survival probability as a function of number of logical 2Q gates. In Table~\ref{tab:fit_params} we report both fit parameters obtained by this method. We note that in general we do not expect logical survival probability to decay exponentially on theoretical grounds since logical errors are in general not caused by i.i.d. gate errors but can result from multiple faults aggregated across the physical gates. However, empirically the survival probability appears to be well-modeled by exponential decay and we find the $\tilde{\epsilon}_{2Q}$ thus extracted to be an informative metric for the effective gate error. One explanation is simply that at shallow circuit depth, the leading contribution to logical errors are i.i.d. $ZZ$ errors on individual physical gates. For deep circuits, considering the coarse resolution of total number of Trotter steps, there is a rapid transition to the regime where multiple errors typically occur and syndromes appear effectively random, in which case acceptance is not strongly correlated to particular patterns of errors. This is consistent with the acceptance rate reported for the deepest circuits in the $\I{64}$ experiments which approach a plateau around $12.5\%$ acceptance rate, which would be expected for three random syndrome values. We emphasize that the effective two-qubit gate error obtained by this procedure is not directly comparable to the bare component infidelity of the two-qubit gate, as it also includes additional overheads such as memory errors (see also Methods~\ref{sec:Helios}). Rather, it should be compared to the effective 2Q gate infidelity calculated from full-circuit benchmarks such as mirror benchmarking as in \cite{helios}, which yielded $\tilde{\epsilon}_{2Q} = 20(2) \times 10^{-4}$. We note also that the full-circuit metric in Ref.~\cite{helios} used maximally entangling $U_{ZZ} (\pi/2)$ gates and arbitrary connectivity; in this work, we use half-maximally entangling gates and a circuit with more local three-dimensional connectivity, so the fairer comparison point is the $\tilde{\epsilon}_{2Q} = 10.7(5) \times 10^{-4}$ obtained in our unencoded experiment.

For all encoded experiments in this section, we terminate shots early if an error is detected during midcircuit syndrome extraction. It is possible that errors can be detected midcircuit due to leakage of data qubits leading to nontrivial syndromes. However, we did not obtain leakage information for the data qubits in shots that terminated early. Therefore, leakage acceptance rates are computed only with respect to shots that survived midcircuit postselection and reached terminal measurement.

\begin{table}[!h]
\begin{tabular}{lcll}\hline \hline\\[-.25cm]
\multicolumn{1}{c}{Experiment} & \multicolumn{1}{c}{Total Trotter steps},  $s$ & \multicolumn{1}{c}{Fidelity estimate} & \multicolumn{1}{c}{Total shots ($\times 10^{3}$)}  \\[.3em]  \hline \\[-.6em]
$\I{64}$, LH & [$2$, $4$, $6$, $8$, $10$]  & \qquad[$.69(2)$, $.40(3)$, $.22^{+4}_{-3}$, $.08(2)$, $.03^{+2}_{-1}$] & \qquad[$2$, $2$, $2.2$, $4.4$, $4.8$]  \\[.1em] 
$\I{64}$ & [$2$, $4$, $6$, $8$, $10$]  & \qquad[$.55(2)$, $.27^{+3}_{-2}$, $.08(2)$, $.034^{+8}_{-7}$, $.005^{+4}_{-2}$] & \qquad[$1.6$, $1.6$, $1.8$, $4$, $4.4$] \\[.1em] 
Unencoded, LH & [$2$, $4$, $6$, $8$, $10$]  & \qquad[$.43(2)$, $.14(1)$, $.057^{+9}_{-8}$, $.026^{+6}_{-5}$, $.009^{+5}_{-3}$] & \qquad[$1.2$, $1.2$, $1.5$, $1.8$, $2$]  \\[.1em] 
Unencoded & [$2$, $4$, $6$, $8$, $10$]  & \qquad[$.24^{+2}_{-1}$, $.10(1)$, $.029^{+7}_{-5}$, $.010^{+4}_{-3}$, $.005^{+3}_{-2}$] & \qquad[$0.8$, $0.8$, $0.8$, $0.8$, $0.8$]  \\[.1em] \hline \hline
\end{tabular}
\caption{Fidelity and shot counts across 64 logical qubit $XY$ model quantum simulation experiments at different numbers of total Trotter steps $s$. LH refers to the use of leakage-heralded measurement on Helios. The five values reported for fidelity estimate and total shots correspond to the five different total Trotter steps respectively.}
\label{tab:64lq_results}
\end{table}

\begin{table}[!h]
\begin{tabular}{lll}\hline \hline\\[-.25cm]
\multicolumn{1}{c}{Experiment} &   \multicolumn{1}{c}{\qquad Leakage accept rate} & \multicolumn{1}{l}{\qquad\qquad \qquad Overall accept rate}  \\[.3em]  \hline \\[-.6em]
$\I{64}$, LH & \qquad [$.72^{+1}_{-2}$, $.59(2)$, $.42(2)$, $.33(1)$, $.26(1)$]  & \qquad [$.232^{+10}_{-9}$, $.125^{+8}_{-7}$, $.061(5)$, $.049(3)$, $.032^{+3}_{-2}$]  \\[.1em] 
$\I{64}$ & \multicolumn{1}{c}{\qquad --} & \qquad [$.25(1)$, $.20(1)$, $.137(8)$, .$142^{+6}_{-5}$, $.125(5)$]  \\[.1em] 
Unencoded, LH   & \qquad [$.80(1)$, $.64(1)$, $.50(1)$, $.42(1)$, $.32(1)$]  & \qquad [$.80(1)$, $.64(1)$, $.50(1)$, $.42(1)$, $.32(1)$]  \\[.1em] 
Unencoded &  \multicolumn{1}{c}{\qquad --}  & \multicolumn{1}{c}{\qquad --}  \\[.1em] \hline \hline
\end{tabular}
\caption{Leakage and overall acceptance rates across 64 logical qubit $XY$ model quantum simulation experiments. LH refers to the use of leakage-heralded measurement on Helios and the five values reported for acceptance rates again correspond to the five different total Trotter steps.}
\label{tab:64lq_results2}
\end{table}

\begin{table}[!h]
\begin{tabular}{lll}\hline \hline\\[-.25cm]
\multicolumn{1}{c}{Experiment} &   \multicolumn{1}{c}{\qquad SPAM parameter $A$} & \multicolumn{1}{l}{\qquad Effective average 2Q infidelity $\tilde{\epsilon}_{2Q}$}  \\[.3em]  \hline \\[-.6em]
$\I{64}$, LH & \qquad\qquad$1.4(1)$ & \qquad\qquad\qquad $7.1(6) \times 10^{-4}$ \\[.1em] 
$\I{64}$ & \qquad\qquad$1.5(2)$ & \qquad\qquad \qquad $10.2(8) \times 10^{-4}$ \\[.1em] 
Unencoded, LH & \qquad\qquad $1.16(9)$ &  \qquad\qquad\qquad $10.5(5) \times 10^{-4}$ \\[.1em] 
Unencoded & \qquad\qquad $.69(5)$ & \qquad\qquad \qquad $10.7(4) \times 10^{-4}$ \\[.1em] \hline \hline
\end{tabular}
\caption{Fit parameters $A$, $\tilde{\epsilon}_{2Q}$ obtained from fitting the model $F(N_{\text{gates}}) = A(1-\frac54 \tilde{\epsilon}_{2Q})^{N_{\text{gates}}}$ to the survival probability as a function of number of logical gates in 64 logical qubit $XY$ model quantum simulation experiments.}
\label{tab:fit_params}
\end{table}

\section{Alternative fault-tolerant preparation of resource states}
\label{sec:alternative ft prep}
The FT preparation of diverse resource states is fundamental for QEC: they can be consumed to perform Steane- or Knill-style QEC, to enable long-range interactions between logical qubit blocks via teleportation, to perform non-Clifford gates, or to achieve addressability in high-rate codes.
We present three FT preparation techniques alternative to Goto's FT preparation (Goto) of concatenated hypercubes code~\cite{Goto:2024xd}: a projective method (Projective) analogous to the typical initialization of surface codes, flag at origin (F@O)~\cite{Forlivesi:2025ilq}, and the verification of a non-FT state with a flag circuit optimized with integer linear programming (ILP). 
We then compare the logical error rates and acceptance rates of the resulting circuits for the $\I{8} \circ \I{6}$ under a simple circuit-level noise model.

We analyze the preparation of three logical resource states: a logical $\ket{\Bar{0}}^{\otimes 48}$ state for initializing an algorithm or performing Steane-style QEC in Fig.~\ref{fig:all-zero-state-by-method}, a tensor product of 48 randomly chosen logical $\ket{\Bar{0}}$ and $\ket{\Bar{+}}$ states for injecting addressable logical Hadamard gates as in~\cite{Goto:2024xd} in Fig.~\ref{fig:random-css-state-by-method}, and random CSS state preparation on all logical qubits in Fig.~\ref{fig:exotic-css-state-by-method}.
The noise model inserts bit-flip (phase-flip) channels after the initialization of every qubit in $\ket{0}$ ($\ket{+}$) and before the measurement of every ancillary and flag qubit in the $Z$ ($X$) basis, and inserts a two-qubit depolarizing channel after every two-qubit gate. 
All noise channels have the same error rate $p$.

For the three states we report the logical error rate $P_L$ (left plots), the post-discard rate (middle plot), and the pre-discard rate (right plot). 
The \textit{pre-discard rate} is the probability of discarding the state during preparation, upon detection of one or more faults. This should happen with probability $O(p)$, because all state preparation protocols considered in this work are designed to abort on any fault.
The \textit{post-discard rate} is the probability of discarding a pre-accepted state when performing ideal noiseless decoding on the shots obtained when the state is measured destructively. Such a discard would occur if, for example, the prepared state is used to perform Steane error correction and an uncorrectable error is detected. A post-discard is expected to occur with probability $O(p^2)$, because at least two faults are required to produce a detectable but not correctable error when ideally decoding a distance-4 code.
Finally, the \textit{logical error rate} here is the probability of an error which would result in a large-scale computation continuing with an unintentionally altered state. Such errors are equivalent to a logical operator anticommuting with one that stabilizes the prepared state, or differ from such an operator on one qubit. For a distance-4 code, this is expected to require three faults to occur, and therefore happen with probability $O(p^3)$. 
The relevant logical operators are: $\bar{Z}_i$ for the $\ket{\Bar{0}}^{\otimes 48}$ state, $\bar{Z}_i$ for the logical qubits $i$ prepared in $\ket{\Bar{0}}$ and $\bar{X}_j$ for the logical qubits $j$ prepared in $\ket{\Bar{+}}$ in the random tensor product of these states, and $\bar{Z}_i\bar{Z}_{i+1}$ and $\bar{X}^{\otimes 48}$ for the logical GHZ state.

Looking at the three figures, we first confirm the expected asymptotic scalings for all preparation methods and states.  
For the preparation of $\ket{\Bar{0}}^{\otimes 48}$ in Fig.~\ref{fig:all-zero-state-by-method} we find that the ILP method provides the best logical error rate and post-discard rates, followed by Goto's method.
F@O performs comparably to the Projective method in logical error rate and pre-discard rate, but presents better post-discard rate. 
It is worth noting that Goto and ILP methods are specific to this state while F@O and the Projective methods are general constructions for every logical CSS state.
We can then use them to produce FT state preparation circuits for the random tensor product of $\ket{\Bar{0}}$ and $\ket{\Bar{+}}$ and the GHZ state.
Since these states are stabilized by logical operators in the $Z$ as well as the $X$ bases, we destructively measure in these two bases, obtaining one logical error rate and post-discard rate for each basis (note that the pre-discard rate is independent of this measurement basis).
For both states in Figs.~\ref{fig:random-css-state-by-method} and~\ref{fig:exotic-css-state-by-method} we observe that the natural bias in the Projective method reflects in better results in the $Z$ basis.
The F@O method performs worse in logical error rate and post-discard rate, but provides a significantly better pre-discard rate.

These results showcase the great diversity of QEC strategies that can be implemented on an all-to-all connectivity device, each one being beneficial in some situations over others. For example, the ILP method performs well but requires a large computational problem to be solved for each logical state to be prepared; Goto state preparation performs well for the specific state $\ket{\Bar{0}}^{\otimes k}$ and for higher-distance concatenated codes where ILP cannot yet be used; F@O and Projective methods can be implemented in polynomial time for arbitrary CSS logical states with F@O being more useful when good pre-discard rate is preferable over a good logical error rate.
This is the case when, for example, at some point in the computation the resource state is needed but the available qubit resources on the hardware are so scarce than the entire computation needs to wait until the state preparation attempt is accepted. 

\begin{figure}[htbp]
  \centering
  \includegraphics[width=\columnwidth]{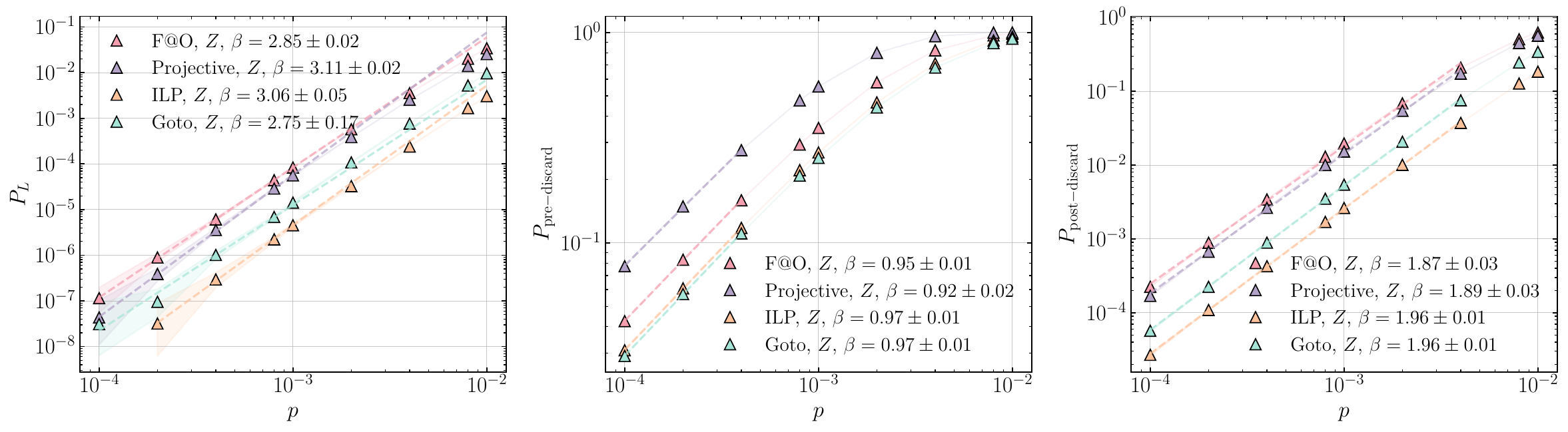}
  \caption{\textbf{Performance of logical state preparation for the encoded all-zero state $\ket{\Bar{0}}^{\otimes 48}$.} We compare the flag-at-origin (Sec.~\ref{sec:flag-at-origin-state-prep}), projective preparation (Sec.~\ref{sec:projection-based-state-prep}), ILP-based (Sec.~\ref{sec:flag-based-prep}) methods, and the circuit obtained using the method described in Ref.~\cite{Goto:2024xd}. The left panel shows the logical error rate conditioned on acceptance, while the middle and right panels respectively show the pre- and post-discard probabilities arising from flag-based and syndrome-based postselection. Data are plotted against physical error rate $p$. Dashed lines indicate power-law fits in the low error-rate regime, and the fitted exponent $\beta$ characterizes the effective scaling of the logical error rate, as pre-(post-) discard rates, and shaded areas indicate $95\%$ Wilson confidence intervals.}
  \label{fig:all-zero-state-by-method}
\end{figure}

\begin{figure}[htbp]
  \centering
  \includegraphics[width=\columnwidth]{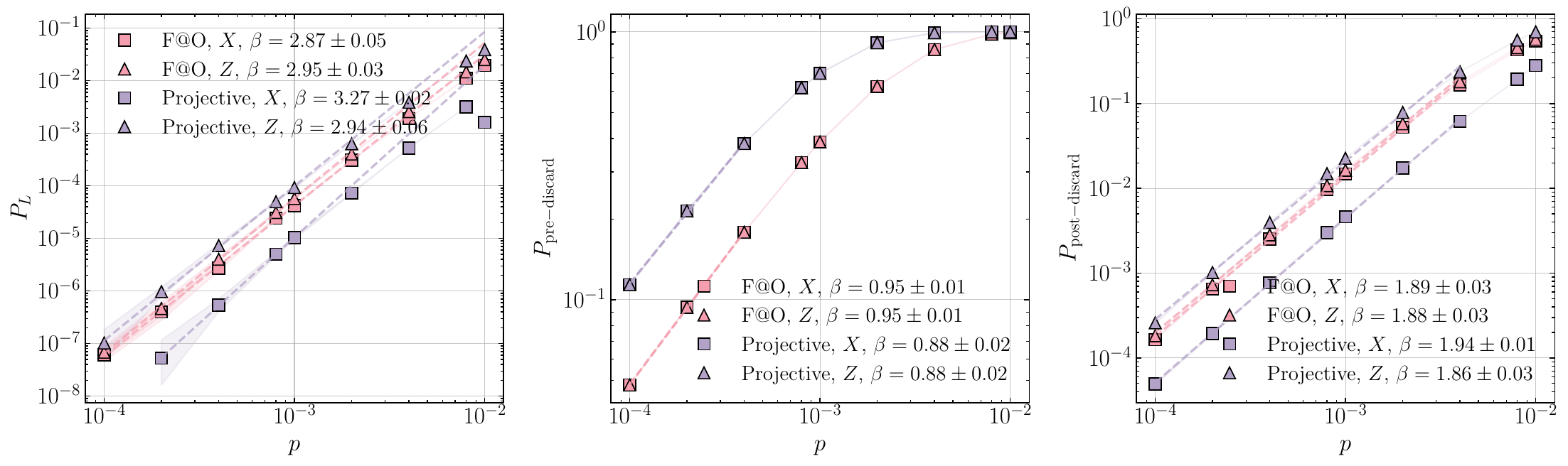}
  \caption{\textbf{Performance of logical state preparation of a random CSS product state benchmarked in $X$ and $Z$ basis.} We compare the flag-at-origin (Sec.~\ref{sec:flag-at-origin-state-prep}) and projective state preparation (Sec.~\ref{sec:projection-based-state-prep}) methods. The left panel shows the logical error rate conditioned on acceptance, while the middle and right panels respectively show the pre- and post-discard probabilities arising from flag-based and syndrome-based postselection. Data are plotted against physical error rate $p$.  Dashed lines indicate power-law fits in the low-error-rate regime, and the fitted exponent $\beta$ characterizes the effective scaling of the logical error rate, as pre-(post-) discard rates, and shaded areas indicate $95\%$ Wilson confidence intervals.}
  \label{fig:random-css-state-by-method}
\end{figure}

\begin{figure}[h!]
  \centering
  \includegraphics[width=\columnwidth]{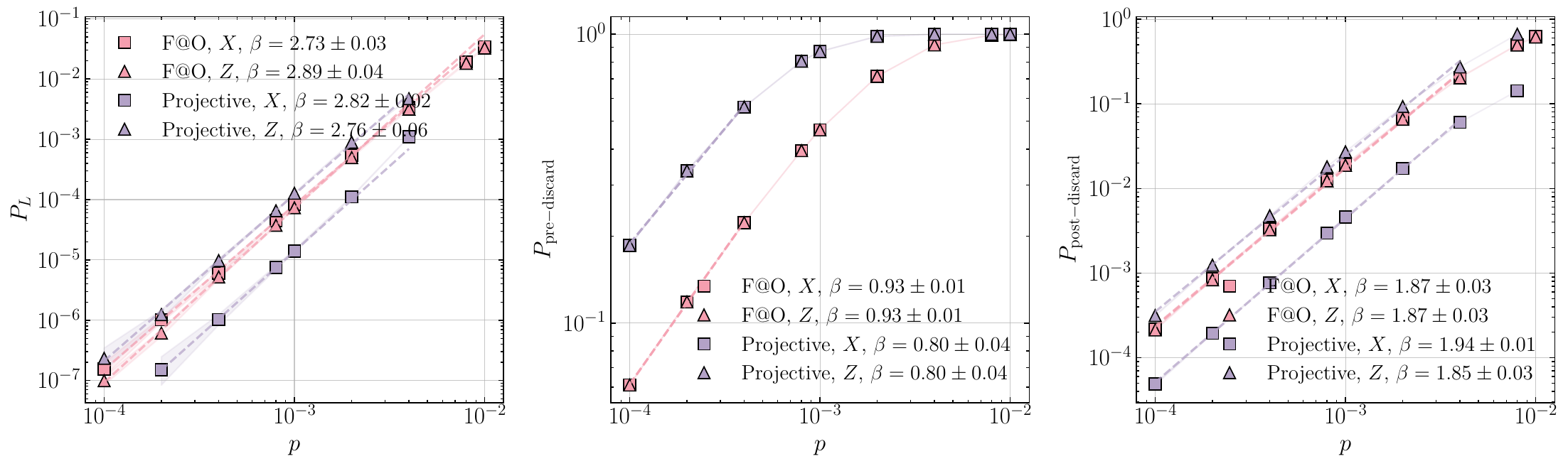}
  \caption{\textbf{Performance of logical state preparation of an exotic CSS state benchmarked in both $X$ and $Z$ bases.} We compare the flag-at-origin (Sec.~\ref{sec:flag-at-origin-state-prep}) and projective state preparation (Sec.~\ref{sec:projection-based-state-prep}) methods. The left panel shows the logical error rate conditioned on acceptance, while the middle and right panels respectively show the pre- and post-discard probabilities arising from flag-based and syndrome-based postselection. Data are plotted against physical error rate $p$.  Dashed lines indicate power-law fits in the low-error-rate regime, and the fitted exponent $\beta$ characterizes the effective scaling of the logical error rate, as pre-(post-) discard rates, and shaded areas indicate $95\%$ Wilson confidence intervals.}
  \label{fig:exotic-css-state-by-method}
\end{figure}

\subsection{Projective preparation}
\label{sec:projection-based-state-prep}

\begin{figure}
    \centering
    \includegraphics[width=0.4\linewidth]{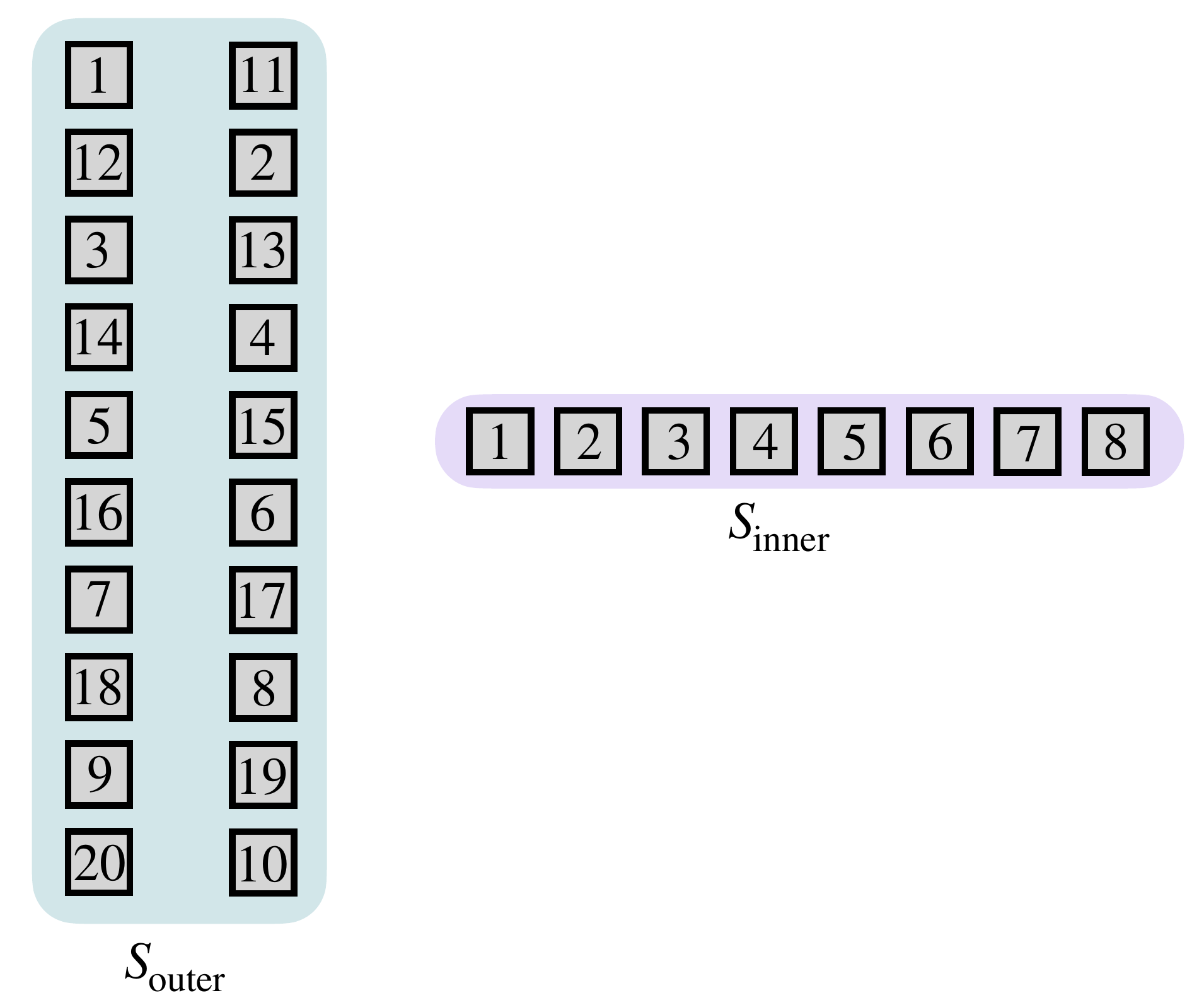}
    \caption{\textbf{Schematic stabilizer measurement schedule for the concatenated iceberg code in Sec.~\ref{sec:projection-based-state-prep}.} We show one representative outer stabilizer (left) and one inner stabilizer (right). Numbers label discrete CX time steps between stabilizer ancilla and data qubits, which are executed in parallel across all inner (respectively, outer) stabilizers. The zigzag measurement ordering of the outer stabilizer enables FT syndrome extraction using a single flag ancilla to detect hook errors.}
    \label{fig:projective_stabilizer_schedule}
\end{figure}

We fault-tolerantly prepare encoded CSS states using a postselected projective measurement scheme. This method consists of initializing the data qubits of the code in a product state of a chosen basis, and projecting the system onto the desired logical state by repeated measurement of an appropriate set of commuting parity checks. In our setting, we apply this method by measuring stabilizers of the concatenated iceberg code in the basis opposite to the initial product state, as well as additional commuting Pauli operators, referred to as state stabilizers, which fix the logical sector and uniquely define the target logical state. The concatenated iceberg stabilizer generators separate into two sets, $S_{\textrm{inner}}$ and $S_{\textrm{outer}}$, which, respectively, have support at lower and higher levels of concatenation. The former each have support on a single lower-level iceberg code, while the outer stabilizers have larger support on multiple lower-level blocks, as shown in Fig.~\ref{fig:encoding}a in the main text. 

Stabilizer measurements are implemented using minimal one-flag gadgets. Each check in $S_{\textrm{inner}}$ and $S_{\textrm{outer}}$ is measured using a dedicated syndrome ancilla and a single flag ancilla, where two additional CX gates couple the syndrome ancilla to the flag ancilla at the beginning and end of the syndrome extraction circuit, with the control (target) depending on the CSS stabilizer basis. The purpose of this construction is to ensure that the probability of an accepted logical failure is suppressed to second order in the physical error rate, i.e. scales as $O(p^2)$. Indeed, any single fault either triggers a flag, leading to immediate rejection of the shot, or results in an error that remains within the correctable regime of the preparation protocol, such that accepted logical errors have no $O(p)$ contribution. Moreover, it is not necessary to individually detect all two-fault events in this postselected scheme, since for distance-4 FT preparation it suffices that two-fault events do not cause accepted at $O(p^2)$. Equivalently, a shot is accepted only if its syndrome is compatible with the desired logical state, within a decoding radius of one physical error. Any undetected two-fault event producing a syndrome outside of this radius is considered non-correctable and the shot is rejected. This prevents the failure mechanism where a two-fault event produces a syndrome that an ideal decoder would interpret as a low-weight correctable error in the wrong logical coset.

The validity of the flagged circuit for the inner stabilizers follows from the fact that any of these is a global weight-$n_{\text{in}}$ check of an iceberg block, and different blocks are disjoint. Hence, all inner stabilizers can be measured in parallel. A logical operator of the concatenated iceberg code can have support on only two qubits per inner block, and a single syndrome ancilla error can cause a hook error within the block. The flag ancilla detects any hook that produces a weight $\geq$ 1 error, such that accepted outcomes are at worst weight-one within each block. The outer (higher-level) stabilizers have support that spans multiple inner (lower-level) blocks, and a naive schedule can lead to malignant hook errors. To avoid this, we choose an interleaved syndrome ancilla-data qubit CX ordering, in a zigzag pattern across the support of each stabilizer, as shown in Fig.~\ref{fig:projective_stabilizer_schedule}. This ordering spreads potential hook endpoints across blocks and ensures that any unflagged two-fault events cannot be creating an error that is indistinguishable from a single-qubit error up to a logical operator. 

To prepare a desired logical state after fixing the codespace, we additionally perform projective measurements of a chosen set of mutually commuting Pauli operators, referred to as the state stabilizers. Such operators commute with all stabilizers of the code, and their joint eigenvalues uniquely define the target logical sector. Operationally, each of these logical operators is measured using a FT flagged Pauli-product measurement circuit, implemented after stabilizer extraction, where each flagged measurement schedule is taken from the library of flag gadgets generated in~\cite{Forlivesi:2025ilq}, where the number of flag ancillae per operator is parametrized by the target code distance. In order to simplify benchmarking for comparison with other state preparation methods, we do not consider idling noise, and assume that these (gauged) logical operator measurements are realized sequentially. This choice isolates the intrinsic fault tolerance of the measurement gadgets themselves. We emphasize that more efficient FT schedules can be constructed by interleaving commuting logical operator measurements, both within this set and potentially with the stabilizer measurements, in order to reduce circuit depth, flag-ancillary overhead, and the impact of idling noise. In particular, the flagged syndrome extraction circuits considered here possess a constrained interaction structure, in which the roles of certain qubit types, i.e. which qubits act exclusively as controls or targets in multi-qubit gates, are fixed by construction. This structure could be exploited to enable greater parallelization in the schedule. We leave a systematic exploration of such schedule optimizations for future work. 

We simulate this logical state preparation protocol using \texttt{Stim} for two-qubit depolarizing noise of strength $p$. We start by initializing the data qubits in the computational basis product state $\ket{0}^{\otimes 80}$. FT state preparation is simulated using repeated flagged stabilizer measurements in the $Z$ basis and, when applicable, flagged logical-operator measurements, followed by postselection onto the desired syndrome outcomes. Three rounds of measurements are performed. To benchmark logical performance in both Pauli bases, we use projective Pauli-product measurements (PPM) at the end of the circuit. When the benchmarking basis coincides with the preparation basis, stabilizers and logical observables can be measured directly. Since we benchmark both $X$ and $Z$ bases in a single numerical experiment, the stabilizers and logical observables of the conjugate basis are measured via PPM projections, which naturally commute with the stabilizer group. This approach enables the extraction of logical error rates in both $X$ and $Z$ bases from a single preparation experiment. In Figures~\ref{fig:all-zero-state-by-method}, \ref{fig:random-css-state-by-method}, and \ref{fig:exotic-css-state-by-method}, we report the logical error rate for the preparation of (i) the encoded zero state, (ii) a random CSS state drawn from an arbitrary $\ket{\Bar{0}} / \ket{\Bar{+}}$ distribution, and (iii) an exotic CSS state comprising a mixture of single-qubit CSS states and Bell pairs. The pre-discard and post-discard rates, shown in panels (b) and (c), are defined as the discard probabilities after flag detection, and after both flag- and syndrome-based postselection.

\subsection{Flag-at-origin preparation}
\label{sec:flag-at-origin-state-prep}
We use the flag-at-origin (F@O) method introduced in \cite{Forlivesi:2025ilq}
as an alternative method for preparing the logical states of the concatenated iceberg code. In this subsection, we will give a brief overview of the F@O method, and describe the modifications made to the original protocol to optimize the resulting circuit further to use the number of qubits available in the Helios trapped-ion quantum processor.

For a CSS state with $n_X$ ($n_Z$) number of $X$-type ($Z$-type) independent stabilizer generators, one can find a non-fault-tolerant bipartite unitary preparation circuit $C$ which acts on $|0\rangle^{\otimes n_Z}|+\rangle^{\otimes n_X}$ and composed of only CX gates.
Let $Q_X$ and $Q_Z$ denote the set of qubits initialized in $|+\rangle$ and $|0\rangle$ states respectively.
The preparation circuit $C$ is bipartite in the sense that for all CX gates, the control qubits are in $Q_X$ and the target qubits are $Q_Z$.
This bipartite structure is closely related to the fact that any CSS states can be represented by a graph state $\ket{G}$ with $G$ bipartite~\cite{Forlivesi:2025ilq}.
The bipartite structure of $C$ ensures that errors will not propagate within $Q_X$ (or within $Q_Z$), and any $X$ error can only propagate once from $Q_X$ to a qubit in $Q_Z$ (and will not propagate further), and similarly any $Z$ error can only propagate once from a qubit in $Q_Z$ to a qubit in $Q_X$.
Importantly, this straightforward error propagation structure allows one to inspect the fault tolerance of the sub-circuit connected to each qubit \emph{separately} and therefore attach a flag gadget to that qubit (i.e. to use flag qubits to prevent harmful errors from propagating undetected), making the corresponding sub-circuit fault-tolerant (see Ref.~\cite{Forlivesi:2025ilq} for further details).
We use $C_\text{FT}$ to denote the FT circuit that results from attaching flag gadgets to each qubit in $Q_X \cup Q_Z$.

Since the concatenated iceberg code $\I{8}\circ\I{6}$ is CSS we may use F@O to prepare any CSS logical state fault-tolerantly. We use the following additional optimization techniques to ensure $C_\text{FT}$ can be executed on the Helios trapped-ion quantum processor:
\begin{itemize}
    \item 
    In the original F@O technique in Ref.~\cite{Forlivesi:2025ilq}, the same gadgets are used for $d=4$ and $d=5$ codes. This ensures any fault pairs in the preparation circuit that result in weight $w > 2$ errors on the data qubits are flagged and discarded. This is sub-optimal since an ideal decoder for the $\I{8}\circ \I{6}$ code discards any shot with $w \ge 2$ errors on data qubits anyway. Therefore we only flag a fault pair if it propagates to an error that the ideal decoder mistakenly corrects instead of discards, i.e. if it propagates to an error with the same syndrome as a single-qubit error.
    \item 
    Ideally, one would apply the gates in $C_\text{FT}$ in parallel. However, doing so requires more qubits than what is available on the Helios trapped-ion quantum processor. Therefore we have to run flag gadgets serially. Different choices of which flag gadgets to run first may then result in different total number of simultaneously allocated qubits. In \cite{Forlivesi:2025ilq}, it was suggested to start by running the gadgets attached to qubits in $Q_X$ and run the gadgets in $Q_Z$ accordingly. However, for states with $|Q_Z| \gg |Q_X|$, such as the logical all-zero state $|\bar{0}\rangle^{\otimes 48}$, it is better to start with gadgets attached to qubits in $Q_Z$ and build the gadgets for qubits in $Q_X$ accordingly. Moreover, within the qubits in $Q_Z$, we begin with the flag gadget of the qubit with the largest number of CX gates attached to it in $C$, followed by the one with the second-to-largest, and so on. This ensures that we are done with flag gadgets which potentially need more flag qubits first, when we have more qubits available, before we entangle more data qubits to the state we are building.
    \item 
    The circuits in Ref.~\cite{Forlivesi:2025ilq} were found by first finding a circuit $C$ which has the lowest number of CX gates (among some fixed number of randomly built $C$) and then, using the fact that all CX gates in $C$ commute with each other, finding the CX ordering which results in a $C_\text{FT}$ with the lowest number of required qubits. Since the required number of qubits in $C_\text{FT}$ is not simply a function of the number of CX gates in $C$, we combine these two steps together; we find a large number of $C$ (of the order of $10^3$) with relatively small number of CX gates, and build $C_\text{FT}$ for each of them and choose the one with the smallest number of required simultaneous qubits.  
    \end{itemize}

\subsection{Flag-based $\ket{\overline{0}}^{\otimes 48}$ state preparation}
\label{sec:flag-based-prep}
Using the insight that all operations required for $\ket{\overline{0}}^{48}$ state preparation in the iceberg code are themselves transversal in the iceberg code, a non-FT state preparation circuit for the two-layer concatenated iceberg code can easily be created, see Figure \ref{fig:non_ft_zero_prep}.
\begin{figure}[!htpb]
\centering
\scalebox{0.5}{\input{bare_zero_prep.tikz}}
\caption{\textbf{Depth-$9$ non-FT circuit preparing $\ket{\overline{0}}^{\otimes 48}$ in the $\I{8}\circ \I{6}$ code.} The circuit contains 142 \CX{} gates.}
\label{fig:non_ft_zero_prep}
\end{figure}

Given the limited error propagation in this circuit, flag-based fault tolerance \cite{chao2018quantum} is an effective method for deriving an equivalent FT circuit.
We produce a preparation circuit using an automated method for producing flag FT circuits for CSS codes \cite{CrigerInPreparation}, see Figure \ref{fig:ft_zero_prep}.
\clearpage
\begin{turnpage}
\begin{figure}[!htpb]
\centering
\scalebox{0.35}{\input{ft_zero_prep.tikz}}
\caption{\textbf{Flag FT circuit preparing $\ket{\overline{0}}^{\otimes 48}$ in the $\I{8}\circ \I{6}$ code.} The circuit contains 344 \CX{} gates and 67 measurements.}
\label{fig:ft_zero_prep}
\end{figure}
\end{turnpage}
\clearpage

\section{Decoding Goto's state preparation}
\label{sec:decoding}

Here we discuss more details on the decoding of the state preparation circuit for the $\I{8} \circ \I{6}$ concatenated code that was introduced in Ref.~\cite{Goto:2024xd}.
First note that products of four operators $X_{i,j} X_{k,j} X_{k,l} X_{i,l}$ (and $Z_{i,j}Z_{k,j}Z_{k,l}Z_{i,l}$) that are, using the conventions from Fig.~\ref{fig:encoding}a, arranged in a rectangle, commute with all stabilizers, thus, they must be products of logical operators and stabilizers.
Since all minimum-weight stabilizers have at least weight $\min (n_1,2 n_2)=8$, the products $X_{i,j} X_{k,j} X_{k,l} X_{i,l}$ (and $Z_{i,j}Z_{k,j}Z_{k,l}Z_{i,l}$) must contain at least one logical operator.
Hence, each two-qubit $X$ error is connected to at least one other two-qubit error by some combination of logical operators, i.e., for each syndrome corresponding to a two-qubit error there are at least two same-weight errors in different equivalence classes.
We thus do not attempt to correct any two-qubit errors.
Similar arguments hold for three-qubit errors: for each three-qubit error, we can find at least another three-qubit error with the same syndrome, but in a different equivalence class; additionally, some three-qubit errors share the same syndrome with one-qubit errors in a different equivalence class.
Hence, we do not attempt to correct any syndrome corresponding to weight-two-or-higher errors.

We simulate the quantum circuit using \texttt{Stim}~\cite{Gidney:2021wm} for two-qubit depolarizing noise of strength $p$ after each two-qubit gate. The error rates for qubit initialization and measurement (bit flip or phase flip errors) also equal $p$.
We measure each final state in its $Z$ basis to reconstruct its $S^Z$ syndrome $s$ and logical equivalence class $q$. Since the code has 16 $S^Z$ stabilizers and 48 logical qubits, we consider $2^{16}$ different syndromes $s$ and $2^{48}$ equivalence classes $q$.

\begin{figure}
  \includegraphics[scale=1]{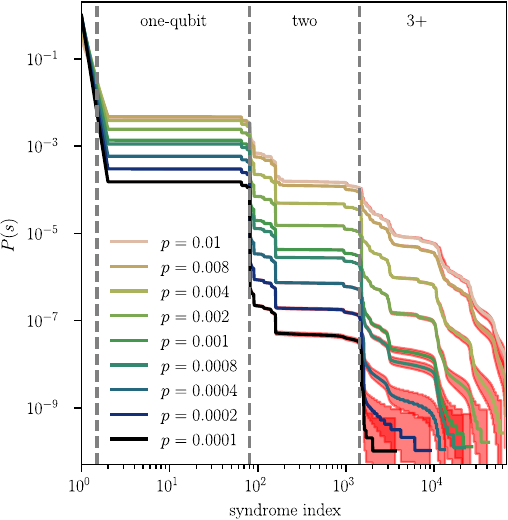}
  \caption{\textbf{Sampled syndrome probabilities for different error rates $p$.} The syndrome may be caused by errors of different weights; the vertical dashed lines separate regimes whose lowest-weight errors are of weight zero to four. The syndromes are ordered by $P(s)$ in descending order. For each $p$, we take $10^{10}$ samples.}
  \label{fig:syndrome_probabilities}
\end{figure}

We show the distribution of syndromes in Fig.~\ref{fig:syndrome_probabilities}. For each $p$, we sample $10^{10}$ different circuits and use the results to estimate the syndrome probability $P(s)$. We categorize all syndromes by their minimum-weight error; note that higher-weight errors (connected via logicals and stabilizers) may cause the same syndrome.
The gray dashed lines in Fig.~\ref{fig:syndrome_probabilities} separate syndromes with minimum-weight zero-, one-, two-, and higher-qubit errors. $P(s)$ is not constant in each regime since (i) errors are not caused by code capacity noise, but circuit-level noise, and (ii) some two-weight error syndromes can be caused by two inequivalent weight-two errors, while others can be caused by eight or ten inequivalent weight-two errors (and the latter thus occur with higher probability).

\begin{figure}
  \includegraphics[scale=1]{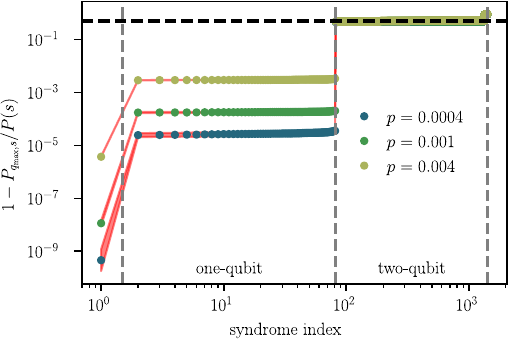}
  \caption{\textbf{Probability $P_{q,s}$ that an error is characterized by an equivalence class $q$ and syndrome $s$, divided by probability of the syndrome $P(s)=\sum_q P_{q,s}$.} Dots denote data points, and the red shaded area the 95\,\% Wilson confidence interval. For most two-qubit errors, $P_{q_\text{max},s}/P(s) \lesssim 1/2$ (black dashed line denotes $1/2$).}
  \label{fig:equivalence_class_probabilities}
\end{figure}

As argued before, we do not attempt to correct any syndrome caused by weight-two errors.
In Fig.~\ref{fig:equivalence_class_probabilities}, we provide numerical evidence that supports our arguments.
We show the ratio of the (numerically sampled) maximum-likelihood equivalence class $P_{q_\text{max},s}$ and the total syndrome probability $P(s) = \sum_q P_{q,s}$.
This ratio is a measure for the confidence in the chosen correction: when the ratio approaches one, we have a high confidence that the syndrome $s$ was caused by an error in class $q$; if another equivalence class is more likely, the ratio will drop below $1/2$.
We show $1-P_{q_\text{max},s}/P(s)$ in Fig.~\ref{fig:equivalence_class_probabilities}, which for the trivial syndrome grows as $p^4$, and for syndromes caused by weight-one errors increases as $p^2$.
For syndromes caused by two-qubit errors, as argued before, there are at least two same-weight errors in different equivalence classes, hence, assuming that for circuit-level noise both errors have similar probabilities, $P_{q_\text{max},s}/P(s) \lesssim 1/2$, as evident from Fig.~\ref{fig:equivalence_class_probabilities}, where the black dashed line denotes $P_{q_\text{max},s}/P(s)=1/2$.

\begin{figure}
  \includegraphics[scale=1]{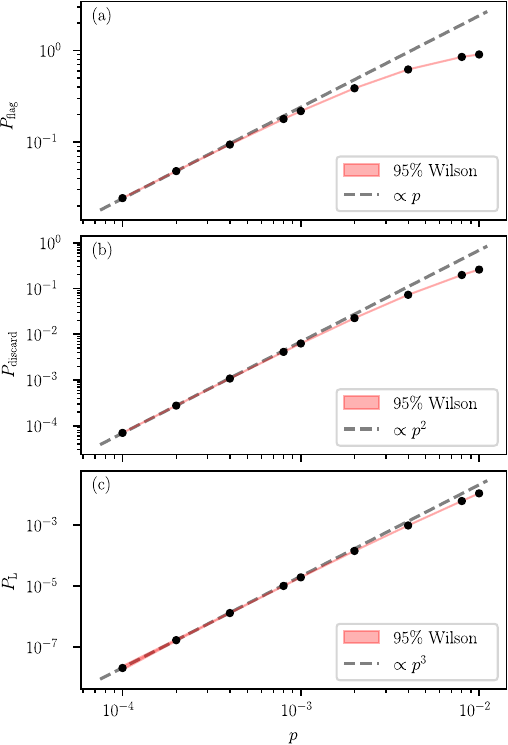}
  \caption{\textbf{Error rates for the state preparation protocol described in Sec.~\ref{sec:decoding}.} The error rates are obtained via Stim~\cite{Gidney:2021wm} simulations, where $p$ is the strength of the depolarizing noise. (a) Rate of flagged circuit runs, $P_\text{flag}$, (b) discard rate of syndromes that are not compatible with weight-zero or weight-one errors, $P_\text{post}$,  and (c) final logical error rate $P_\text{L}$. The black dots represent the data, the red shaded region the 95\,\% Wilson confidence interval, and the gray dashed line a power law as a guide for the eyes.}
  \label{fig:error_rates}
\end{figure}

In Fig.~\ref{fig:error_rates}, we show three different resulting error rates:
in Fig.~\ref{fig:error_rates}a, we show $P_\text{flag}$, the sampled rate of flagged circuit runs that are discarded. Only the postselected fraction $1-P_\text{flag}$ of runs will be used for state preparation. Since single-qubit errors can cause flags to fire, we expect that $P_\text{flag}$ grows linearly with $p$ for small error rates. We show a gray dashed line $\propto p$ as a guide for the eye.
We then discard all syndromes compatible with weight-two-and-higher errors. We show the discard rate $P_\text{post}$ in panel (b), and include $\propto p^2$ as a guide for the eye since two-qubit errors are the dominant contribution to $P_\text{post}$, which occur with probability $\propto p^2$.
Finally, we show the resulting logical error rate $P_\text{L}$ in panel (c): logical errors occur with probability $p^3$ as they correspond to three-qubit errors causing the same syndrome as one-qubit errors in a different equivalence class.

\clearpage

\bibliography{references}

\begin{thebibliography}{51}%
\makeatletter
\providecommand \@ifxundefined [1]{%
 \@ifx{#1\undefined}
}%
\providecommand \@ifnum [1]{%
 \ifnum #1\expandafter \@firstoftwo
 \else \expandafter \@secondoftwo
 \fi
}%
\providecommand \@ifx [1]{%
 \ifx #1\expandafter \@firstoftwo
 \else \expandafter \@secondoftwo
 \fi
}%
\providecommand \natexlab [1]{#1}%
\providecommand \enquote  [1]{``#1''}%
\providecommand \bibnamefont  [1]{#1}%
\providecommand \bibfnamefont [1]{#1}%
\providecommand \citenamefont [1]{#1}%
\providecommand \href@noop [0]{\@secondoftwo}%
\providecommand \href [0]{\begingroup \@sanitize@url \@href}%
\providecommand \@href[1]{\@@startlink{#1}\@@href}%
\providecommand \@@href[1]{\endgroup#1\@@endlink}%
\providecommand \@sanitize@url [0]{\catcode `\\12\catcode `\$12\catcode `\&12\catcode `\#12\catcode `\^12\catcode `\_12\catcode `\%12\relax}%
\providecommand \@@startlink[1]{}%
\providecommand \@@endlink[0]{}%
\providecommand \url  [0]{\begingroup\@sanitize@url \@url }%
\providecommand \@url [1]{\endgroup\@href {#1}{\urlprefix }}%
\providecommand \urlprefix  [0]{URL }%
\providecommand \Eprint [0]{\href }%
\providecommand \doibase [0]{https://doi.org/}%
\providecommand \selectlanguage [0]{\@gobble}%
\providecommand \bibinfo  [0]{\@secondoftwo}%
\providecommand \bibfield  [0]{\@secondoftwo}%
\providecommand \translation [1]{[#1]}%
\providecommand \BibitemOpen [0]{}%
\providecommand \bibitemStop [0]{}%
\providecommand \bibitemNoStop [0]{.\EOS\space}%
\providecommand \EOS [0]{\spacefactor3000\relax}%
\providecommand \BibitemShut  [1]{\csname bibitem#1\endcsname}%
\let\auto@bib@innerbib\@empty
\bibitem [{\citenamefont {Yoder}\ \emph {et~al.}(2025)\citenamefont {Yoder} \emph {et~al.}}]{yoder2025tour}%
  \BibitemOpen
  \bibfield  {author} {\bibinfo {author} {\bibfnamefont {T.~J.}\ \bibnamefont {Yoder}} \emph {et~al.},\ }\href {https://arxiv.org/abs/2506.03094} {\bibinfo {title} {Tour de gross: A modular quantum computer based on bivariate bicycle codes}} (\bibinfo {year} {2025}),\ \Eprint {https://arxiv.org/abs/2506.03094} {arXiv:2506.03094 [quant-ph]} \BibitemShut {NoStop}%
\bibitem [{\citenamefont {Yoshida}\ \emph {et~al.}(2025)\citenamefont {Yoshida}, \citenamefont {Tamiya},\ and\ \citenamefont {Yamasaki}}]{Yoshida2025concatenate}%
  \BibitemOpen
  \bibfield  {author} {\bibinfo {author} {\bibfnamefont {S.}~\bibnamefont {Yoshida}}, \bibinfo {author} {\bibfnamefont {S.}~\bibnamefont {Tamiya}},\ and\ \bibinfo {author} {\bibfnamefont {H.}~\bibnamefont {Yamasaki}},\ }\bibfield  {title} {\bibinfo {title} {Concatenate codes, save qubits},\ }\href {https://doi.org/10.1038/s41534-025-01035-8} {\bibfield  {journal} {\bibinfo  {journal} {npj Quantum Information}\ }\textbf {\bibinfo {volume} {11}},\ \bibinfo {pages} {88} (\bibinfo {year} {2025})}\BibitemShut {NoStop}%
\bibitem [{\citenamefont {Berthusen}\ and\ \citenamefont {Durso-Sabina}(2025)}]{berthusen2025simple}%
  \BibitemOpen
  \bibfield  {author} {\bibinfo {author} {\bibfnamefont {N.}~\bibnamefont {Berthusen}}\ and\ \bibinfo {author} {\bibfnamefont {E.}~\bibnamefont {Durso-Sabina}},\ }\href {https://arxiv.org/abs/2510.18753} {\bibinfo {title} {Simple logical quantum computation with concatenated symplectic double codes}} (\bibinfo {year} {2025}),\ \Eprint {https://arxiv.org/abs/2510.18753} {arXiv:2510.18753 [quant-ph]} \BibitemShut {NoStop}%
\bibitem [{\citenamefont {Gottesman}(1997)}]{gottesmanPHD}%
  \BibitemOpen
  \bibfield  {author} {\bibinfo {author} {\bibfnamefont {D.}~\bibnamefont {Gottesman}},\ }\emph {\bibinfo {title} {{Stabilizer Codes and Quantum Error Correction}}},\ \href {https://doi.org/10.7907/rzr7-dt72} {Ph.D. thesis},\ \bibinfo  {school} {California Institute of Technology} (\bibinfo {year} {1997}),\ \Eprint {https://arxiv.org/abs/quant-ph/9705052} {arXiv:quant-ph/9705052} \BibitemShut {NoStop}%
\bibitem [{\citenamefont {Chao}\ and\ \citenamefont {Reichardt}(2018)}]{chao2018quantum}%
  \BibitemOpen
  \bibfield  {author} {\bibinfo {author} {\bibfnamefont {R.}~\bibnamefont {Chao}}\ and\ \bibinfo {author} {\bibfnamefont {B.~W.}\ \bibnamefont {Reichardt}},\ }\bibfield  {title} {\bibinfo {title} {Quantum error correction with only two extra qubits},\ }\href@noop {} {\bibfield  {journal} {\bibinfo  {journal} {Physical review letters}\ }\textbf {\bibinfo {volume} {121}},\ \bibinfo {pages} {050502} (\bibinfo {year} {2018})}\BibitemShut {NoStop}%
\bibitem [{\citenamefont {Self}\ \emph {et~al.}(2024)\citenamefont {Self}, \citenamefont {Benedetti},\ and\ \citenamefont {Amaro}}]{Self:2024py}%
  \BibitemOpen
  \bibfield  {author} {\bibinfo {author} {\bibfnamefont {C.~N.}\ \bibnamefont {Self}}, \bibinfo {author} {\bibfnamefont {M.}~\bibnamefont {Benedetti}},\ and\ \bibinfo {author} {\bibfnamefont {D.}~\bibnamefont {Amaro}},\ }\bibfield  {title} {\bibinfo {title} {Protecting expressive circuits with a quantum error detection code},\ }\href {https://doi.org/10.1038/s41567-023-02282-2} {\bibfield  {journal} {\bibinfo  {journal} {Nature Physics}\ }\textbf {\bibinfo {volume} {20}},\ \bibinfo {pages} {219} (\bibinfo {year} {2024})}\BibitemShut {NoStop}%
\bibitem [{\citenamefont {Goto}(2024)}]{Goto:2024xd}%
  \BibitemOpen
  \bibfield  {author} {\bibinfo {author} {\bibfnamefont {H.}~\bibnamefont {Goto}},\ }\bibfield  {title} {\bibinfo {title} {High-performance fault-tolerant quantum computing with many-hypercube codes},\ }\bibfield  {journal} {\bibinfo  {journal} {Science Advances}\ }\textbf {\bibinfo {volume} {10}},\ \href {https://doi.org/10.1126/sciadv.adp6388} {10.1126/sciadv.adp6388} (\bibinfo {year} {2024})\BibitemShut {NoStop}%
\bibitem [{\citenamefont {Ransford}\ \emph {et~al.}(2025)\citenamefont {Ransford} \emph {et~al.}}]{helios}%
  \BibitemOpen
  \bibfield  {author} {\bibinfo {author} {\bibfnamefont {A.}~\bibnamefont {Ransford}} \emph {et~al.},\ }\href {https://arxiv.org/abs/2511.05465} {\bibinfo {title} {Helios: A 98-qubit trapped-ion quantum computer}} (\bibinfo {year} {2025}),\ \Eprint {https://arxiv.org/abs/2511.05465} {arXiv:2511.05465} \BibitemShut {NoStop}%
\bibitem [{\citenamefont {Akahoshi}\ \emph {et~al.}(2024)\citenamefont {Akahoshi}, \citenamefont {Maruyama}, \citenamefont {Oshima}, \citenamefont {Sato},\ and\ \citenamefont {Fujii}}]{Fujii2024star}%
  \BibitemOpen
  \bibfield  {author} {\bibinfo {author} {\bibfnamefont {Y.}~\bibnamefont {Akahoshi}}, \bibinfo {author} {\bibfnamefont {K.}~\bibnamefont {Maruyama}}, \bibinfo {author} {\bibfnamefont {H.}~\bibnamefont {Oshima}}, \bibinfo {author} {\bibfnamefont {S.}~\bibnamefont {Sato}},\ and\ \bibinfo {author} {\bibfnamefont {K.}~\bibnamefont {Fujii}},\ }\bibfield  {title} {\bibinfo {title} {Partially fault-tolerant quantum computing architecture with error-corrected clifford gates and space-time efficient analog rotations},\ }\href {https://doi.org/10.1103/PRXQuantum.5.010337} {\bibfield  {journal} {\bibinfo  {journal} {PRX Quantum}\ }\textbf {\bibinfo {volume} {5}},\ \bibinfo {pages} {010337} (\bibinfo {year} {2024})}\BibitemShut {NoStop}%
\bibitem [{\citenamefont {Ibe}\ \emph {et~al.}(2025)\citenamefont {Ibe}, \citenamefont {Hirano}, \citenamefont {Ozu}, \citenamefont {Kawakubo},\ and\ \citenamefont {Fujii}}]{Fujii2025}%
  \BibitemOpen
  \bibfield  {author} {\bibinfo {author} {\bibfnamefont {Y.}~\bibnamefont {Ibe}}, \bibinfo {author} {\bibfnamefont {Y.}~\bibnamefont {Hirano}}, \bibinfo {author} {\bibfnamefont {Y.}~\bibnamefont {Ozu}}, \bibinfo {author} {\bibfnamefont {T.}~\bibnamefont {Kawakubo}},\ and\ \bibinfo {author} {\bibfnamefont {K.}~\bibnamefont {Fujii}},\ }\href {https://arxiv.org/abs/2510.18652} {\bibinfo {title} {Measurement-based fault-tolerant quantum computation on high-connectivity devices: A resource-efficient approach toward early ftqc}} (\bibinfo {year} {2025}),\ \Eprint {https://arxiv.org/abs/2510.18652} {arXiv:2510.18652 [quant-ph]} \BibitemShut {NoStop}%
\bibitem [{\citenamefont {Jin}\ \emph {et~al.}()\citenamefont {Jin} \emph {et~al.}}]{iceberg-beyond-the-tip}%
  \BibitemOpen
  \bibfield  {author} {\bibinfo {author} {\bibfnamefont {Y.}~\bibnamefont {Jin}} \emph {et~al.},\ }\bibfield  {title} {\bibinfo {title} {{Iceberg Beyond the Tip: Co-Compilation of a Quantum Error Detection Code and a Quantum Algorithm}},\ }\Eprint {https://arxiv.org/abs/2504.21172} {arXiv:2504.21172} \BibitemShut {NoStop}%
\bibitem [{\citenamefont {Yamamoto}\ \emph {et~al.}(2025)\citenamefont {Yamamoto} \emph {et~al.}}]{Yamamoto2025molecular}%
  \BibitemOpen
  \bibfield  {author} {\bibinfo {author} {\bibfnamefont {K.}~\bibnamefont {Yamamoto}} \emph {et~al.},\ }\href {https://arxiv.org/abs/2505.09133} {\bibinfo {title} {Quantum error-corrected computation of molecular energies}} (\bibinfo {year} {2025}),\ \Eprint {https://arxiv.org/abs/2505.09133} {arXiv:2505.09133 [quant-ph]} \BibitemShut {NoStop}%
\bibitem [{\citenamefont {Goto}(2025)}]{goto2025optimized}%
  \BibitemOpen
  \bibfield  {author} {\bibinfo {author} {\bibfnamefont {H.}~\bibnamefont {Goto}},\ }\href {https://arxiv.org/abs/2512.00561} {\bibinfo {title} {Optimized many-hypercube codes toward lower logical error rates and earlier realization}} (\bibinfo {year} {2025}),\ \Eprint {https://arxiv.org/abs/2512.00561} {arXiv:2512.00561 [quant-ph]} \BibitemShut {NoStop}%
\bibitem [{\citenamefont {Nakai}\ and\ \citenamefont {Goto}(2026)}]{Nakai_2026}%
  \BibitemOpen
  \bibfield  {author} {\bibinfo {author} {\bibfnamefont {R.}~\bibnamefont {Nakai}}\ and\ \bibinfo {author} {\bibfnamefont {H.}~\bibnamefont {Goto}},\ }\bibfield  {title} {\bibinfo {title} {Subsystem many-hypercube codes: High-rate concatenated codes with low-weight syndrome measurements},\ }\bibfield  {journal} {\bibinfo  {journal} {Physical Review Applied}\ }\textbf {\bibinfo {volume} {25}},\ \href {https://doi.org/10.1103/xbzn-vn37} {10.1103/xbzn-vn37} (\bibinfo {year} {2026})\BibitemShut {NoStop}%
\bibitem [{\citenamefont {Combes}\ \emph {et~al.}(2017)\citenamefont {Combes}, \citenamefont {Granade}, \citenamefont {Ferrie},\ and\ \citenamefont {Flammia}}]{Combes2017}%
  \BibitemOpen
  \bibfield  {author} {\bibinfo {author} {\bibfnamefont {J.}~\bibnamefont {Combes}}, \bibinfo {author} {\bibfnamefont {C.}~\bibnamefont {Granade}}, \bibinfo {author} {\bibfnamefont {C.}~\bibnamefont {Ferrie}},\ and\ \bibinfo {author} {\bibfnamefont {S.~T.}\ \bibnamefont {Flammia}},\ }\href@noop {} {\bibinfo {title} {Logical randomized benchmarking}} (\bibinfo {year} {2017}),\ \Eprint {https://arxiv.org/abs/1702.03688} {arXiv:1702.03688 [quant-ph]} \BibitemShut {NoStop}%
\bibitem [{\citenamefont {Erhard}\ \emph {et~al.}(2019)\citenamefont {Erhard} \emph {et~al.}}]{Erhard2019}%
  \BibitemOpen
  \bibfield  {author} {\bibinfo {author} {\bibfnamefont {A.}~\bibnamefont {Erhard}} \emph {et~al.},\ }\bibfield  {title} {\bibinfo {title} {Characterizing large-scale quantum computers via cycle benchmarking},\ }\href {https://doi.org/10.1038/s41467-019-13068-7} {\bibfield  {journal} {\bibinfo  {journal} {Nature Communications}\ }\textbf {\bibinfo {volume} {10}},\ \bibinfo {pages} {5347} (\bibinfo {year} {2019})}\BibitemShut {NoStop}%
\bibitem [{Note1()}]{Note1}%
  \BibitemOpen
  \bibinfo {note} {We benchmark the gate at $\theta = \pi /2$; in general Helios can implement either gate with arbitrary $\theta $ and we expect that logical error decreases with $\theta $ similar to the physical gate, since the probability of a single fault causing a logical error is correspondingly scaled down.}\BibitemShut {Stop}%
\bibitem [{\citenamefont {Moses}\ \emph {et~al.}(2023)\citenamefont {Moses} \emph {et~al.}}]{PhysRevX.13.041052}%
  \BibitemOpen
  \bibfield  {author} {\bibinfo {author} {\bibfnamefont {S.~A.}\ \bibnamefont {Moses}} \emph {et~al.},\ }\bibfield  {title} {\bibinfo {title} {A race-track trapped-ion quantum processor},\ }\href {https://doi.org/10.1103/PhysRevX.13.041052} {\bibfield  {journal} {\bibinfo  {journal} {Phys. Rev. X}\ }\textbf {\bibinfo {volume} {13}},\ \bibinfo {pages} {041052} (\bibinfo {year} {2023})}\BibitemShut {NoStop}%
\bibitem [{\citenamefont {Haghshenas}\ \emph {et~al.}(2025)\citenamefont {Haghshenas}, \citenamefont {Chertkov} \emph {et~al.}}]{floquet}%
  \BibitemOpen
  \bibfield  {author} {\bibinfo {author} {\bibfnamefont {R.}~\bibnamefont {Haghshenas}}, \bibinfo {author} {\bibfnamefont {E.}~\bibnamefont {Chertkov}}, \emph {et~al.},\ }\href {https://arxiv.org/abs/2503.20870} {\bibinfo {title} {Digital quantum magnetism at the frontier of classical simulations}} (\bibinfo {year} {2025}),\ \Eprint {https://arxiv.org/abs/2503.20870} {arXiv:2503.20870} \BibitemShut {NoStop}%
\bibitem [{\citenamefont {Proctor}\ \emph {et~al.}(2022{\natexlab{a}})\citenamefont {Proctor} \emph {et~al.}}]{Proctor22}%
  \BibitemOpen
  \bibfield  {author} {\bibinfo {author} {\bibfnamefont {T.}~\bibnamefont {Proctor}} \emph {et~al.},\ }\href@noop {} {\bibinfo {title} {Establishing trust in quantum computations}} (\bibinfo {year} {2022}{\natexlab{a}}),\ \Eprint {https://arxiv.org/abs/2204.07568} {arXiv:2204.07568 [quant-ph]} \BibitemShut {NoStop}%
\bibitem [{\citenamefont {Proctor}\ \emph {et~al.}(2022{\natexlab{b}})\citenamefont {Proctor}, \citenamefont {Rudinger}, \citenamefont {Young}, \citenamefont {Nielsen},\ and\ \citenamefont {Blume-Kohout}}]{proctor2022measuring}%
  \BibitemOpen
  \bibfield  {author} {\bibinfo {author} {\bibfnamefont {T.}~\bibnamefont {Proctor}}, \bibinfo {author} {\bibfnamefont {K.}~\bibnamefont {Rudinger}}, \bibinfo {author} {\bibfnamefont {K.}~\bibnamefont {Young}}, \bibinfo {author} {\bibfnamefont {E.}~\bibnamefont {Nielsen}},\ and\ \bibinfo {author} {\bibfnamefont {R.}~\bibnamefont {Blume-Kohout}},\ }\bibfield  {title} {\bibinfo {title} {Measuring the capabilities of quantum computers},\ }\href {https://doi.org/10.1038/s41567-021-01409-7} {\bibfield  {journal} {\bibinfo  {journal} {Nature Physics}\ }\textbf {\bibinfo {volume} {18}},\ \bibinfo {pages} {75} (\bibinfo {year} {2022}{\natexlab{b}})}\BibitemShut {NoStop}%
\bibitem [{\citenamefont {Mayer}\ \emph {et~al.}(2023)\citenamefont {Mayer} \emph {et~al.}}]{mayer2021theory}%
  \BibitemOpen
  \bibfield  {author} {\bibinfo {author} {\bibfnamefont {K.}~\bibnamefont {Mayer}} \emph {et~al.},\ }\href@noop {} {\bibinfo {title} {Theory of mirror benchmarking and demonstration on a quantum computer}} (\bibinfo {year} {2023}),\ \Eprint {https://arxiv.org/abs/2108.10431} {arXiv:2108.10431 [quant-ph]} \BibitemShut {NoStop}%
\bibitem [{\citenamefont {Koch}\ \emph {et~al.}()\citenamefont {Koch}, \citenamefont {Lawrence}, \citenamefont {Singhal}, \citenamefont {Sivarajah},\ and\ \citenamefont {Duncan}}]{guppy}%
  \BibitemOpen
  \bibfield  {author} {\bibinfo {author} {\bibfnamefont {M.}~\bibnamefont {Koch}}, \bibinfo {author} {\bibfnamefont {A.}~\bibnamefont {Lawrence}}, \bibinfo {author} {\bibfnamefont {K.}~\bibnamefont {Singhal}}, \bibinfo {author} {\bibfnamefont {S.}~\bibnamefont {Sivarajah}},\ and\ \bibinfo {author} {\bibfnamefont {R.}~\bibnamefont {Duncan}},\ }\bibfield  {title} {\bibinfo {title} {Guppy: Pythonic quantum-classical programming},\ }\Eprint {https://arxiv.org/abs/2510.12582} {arXiv:2510.12582} \BibitemShut {NoStop}%
\bibitem [{\citenamefont {Acharya}\ \emph {et~al.}(2025)\citenamefont {Acharya}, \citenamefont {Abanin} \emph {et~al.}}]{Acharya2025googlethreshold}%
  \BibitemOpen
  \bibfield  {author} {\bibinfo {author} {\bibfnamefont {R.}~\bibnamefont {Acharya}}, \bibinfo {author} {\bibfnamefont {D.~A.}\ \bibnamefont {Abanin}}, \emph {et~al.},\ }\bibfield  {title} {\bibinfo {title} {Quantum error correction below the surface code threshold},\ }\href {https://doi.org/10.1038/s41586-024-08449-y} {\bibfield  {journal} {\bibinfo  {journal} {Nature}\ }\textbf {\bibinfo {volume} {638}},\ \bibinfo {pages} {920} (\bibinfo {year} {2025})}\BibitemShut {NoStop}%
\bibitem [{\citenamefont {Forlivesi}\ and\ \citenamefont {Amaro}(2025)}]{Forlivesi:2025ilq}%
  \BibitemOpen
  \bibfield  {author} {\bibinfo {author} {\bibfnamefont {D.}~\bibnamefont {Forlivesi}}\ and\ \bibinfo {author} {\bibfnamefont {D.}~\bibnamefont {Amaro}},\ }\href@noop {} {\bibinfo {title} {Flag at origin: a modular fault-tolerant preparation for css codes}} (\bibinfo {year} {2025}),\ \Eprint {https://arxiv.org/abs/2508.14200} {arXiv:2508.14200 [quant-ph]} \BibitemShut {NoStop}%
\bibitem [{\citenamefont {Kakizaki}(2026)}]{kakizaki2026concatenateddecoding}%
  \BibitemOpen
  \bibfield  {author} {\bibinfo {author} {\bibfnamefont {T.}~\bibnamefont {Kakizaki}},\ }\href {https://arxiv.org/abs/2601.18743} {\bibinfo {title} {Approximate level-by-level maximum-likelihood decoding based on the chase algorithm for high-rate concatenated stabilizer codes}} (\bibinfo {year} {2026}),\ \Eprint {https://arxiv.org/abs/2601.18743} {arXiv:2601.18743 [quant-ph]} \BibitemShut {NoStop}%
\bibitem [{\citenamefont {Liu}\ \emph {et~al.}(2025{\natexlab{a}})\citenamefont {Liu} \emph {et~al.}}]{liu2025coniq}%
  \BibitemOpen
  \bibfield  {author} {\bibinfo {author} {\bibfnamefont {P.}~\bibnamefont {Liu}} \emph {et~al.},\ }\href {https://arxiv.org/abs/2508.05779} {\bibinfo {title} {Coniq: Enabling concatenated quantum error correction on neutral atom arrays}} (\bibinfo {year} {2025}{\natexlab{a}}),\ \Eprint {https://arxiv.org/abs/2508.05779} {arXiv:2508.05779 [cs.AR]} \BibitemShut {NoStop}%
\bibitem [{\citenamefont {Dasu}\ \emph {et~al.}(2025)\citenamefont {Dasu} \emph {et~al.}}]{Dasu:2025ktd}%
  \BibitemOpen
  \bibfield  {author} {\bibinfo {author} {\bibfnamefont {S.}~\bibnamefont {Dasu}} \emph {et~al.},\ }\href@noop {} {\bibinfo {title} {Breaking even with magic: demonstration of a high-fidelity logical non-clifford gate}} (\bibinfo {year} {2025}),\ \Eprint {https://arxiv.org/abs/2506.14688} {arXiv:2506.14688 [quant-ph]} \BibitemShut {NoStop}%
\bibitem [{\citenamefont {Sales~Rodriguez}\ \emph {et~al.}(2025)\citenamefont {Sales~Rodriguez}, \citenamefont {Robinson} \emph {et~al.}}]{SalesRodriguez2025queramagic}%
  \BibitemOpen
  \bibfield  {author} {\bibinfo {author} {\bibfnamefont {P.}~\bibnamefont {Sales~Rodriguez}}, \bibinfo {author} {\bibfnamefont {J.~M.}\ \bibnamefont {Robinson}}, \emph {et~al.},\ }\bibfield  {title} {\bibinfo {title} {Experimental demonstration of logical magic state distillation},\ }\href {https://doi.org/10.1038/s41586-025-09367-3} {\bibfield  {journal} {\bibinfo  {journal} {Nature}\ }\textbf {\bibinfo {volume} {645}},\ \bibinfo {pages} {620} (\bibinfo {year} {2025})}\BibitemShut {NoStop}%
\bibitem [{\citenamefont {Bluvstein}\ \emph {et~al.}(2024)\citenamefont {Bluvstein}, \citenamefont {Evered} \emph {et~al.}}]{Bluvstein2024queradevice}%
  \BibitemOpen
  \bibfield  {author} {\bibinfo {author} {\bibfnamefont {D.}~\bibnamefont {Bluvstein}}, \bibinfo {author} {\bibfnamefont {S.~J.}\ \bibnamefont {Evered}}, \emph {et~al.},\ }\bibfield  {title} {\bibinfo {title} {Logical quantum processor based on reconfigurable atom arrays},\ }\href {https://doi.org/10.1038/s41586-023-06927-3} {\bibfield  {journal} {\bibinfo  {journal} {Nature}\ }\textbf {\bibinfo {volume} {626}},\ \bibinfo {pages} {58} (\bibinfo {year} {2024})}\BibitemShut {NoStop}%
\bibitem [{\citenamefont {Wineland}\ \emph {et~al.}(1998)\citenamefont {Wineland} \emph {et~al.}}]{Wineland1998-xz}%
  \BibitemOpen
  \bibfield  {author} {\bibinfo {author} {\bibfnamefont {D.~J.}\ \bibnamefont {Wineland}} \emph {et~al.},\ }\bibfield  {title} {\bibinfo {title} {Experimental issues in coherent quantum-state manipulation of trapped atomic ions},\ }\href@noop {} {\bibfield  {journal} {\bibinfo  {journal} {J Res Natl Inst Stand Technol}\ }\textbf {\bibinfo {volume} {103}},\ \bibinfo {pages} {259} (\bibinfo {year} {1998})}\BibitemShut {NoStop}%
\bibitem [{\citenamefont {Kielpinski}\ \emph {et~al.}(2002)\citenamefont {Kielpinski}, \citenamefont {Monroe},\ and\ \citenamefont {Wineland}}]{Kielpinski2002}%
  \BibitemOpen
  \bibfield  {author} {\bibinfo {author} {\bibfnamefont {D.}~\bibnamefont {Kielpinski}}, \bibinfo {author} {\bibfnamefont {C.}~\bibnamefont {Monroe}},\ and\ \bibinfo {author} {\bibfnamefont {D.~J.}\ \bibnamefont {Wineland}},\ }\bibfield  {title} {\bibinfo {title} {Architecture for a large-scale ion-trap quantum computer},\ }\href {https://doi.org/10.1038/nature00784} {\bibfield  {journal} {\bibinfo  {journal} {Nature}\ }\textbf {\bibinfo {volume} {417}},\ \bibinfo {pages} {709} (\bibinfo {year} {2002})}\BibitemShut {NoStop}%
\bibitem [{\citenamefont {Pino}\ \emph {et~al.}(2021)\citenamefont {Pino} \emph {et~al.}}]{Pino2021}%
  \BibitemOpen
  \bibfield  {author} {\bibinfo {author} {\bibfnamefont {J.~M.}\ \bibnamefont {Pino}} \emph {et~al.},\ }\bibfield  {title} {\bibinfo {title} {Demonstration of the trapped-ion quantum ccd computer architecture},\ }\href {https://doi.org/10.1038/s41586-021-03318-4} {\bibfield  {journal} {\bibinfo  {journal} {Nature}\ }\textbf {\bibinfo {volume} {592}},\ \bibinfo {pages} {209} (\bibinfo {year} {2021})}\BibitemShut {NoStop}%
\bibitem [{\citenamefont {Burton}\ \emph {et~al.}(2023)\citenamefont {Burton} \emph {et~al.}}]{PhysRevLett.130.173202}%
  \BibitemOpen
  \bibfield  {author} {\bibinfo {author} {\bibfnamefont {W.~C.}\ \bibnamefont {Burton}} \emph {et~al.},\ }\bibfield  {title} {\bibinfo {title} {Transport of multispecies ion crystals through a junction in a radio-frequency paul trap},\ }\href {https://doi.org/10.1103/PhysRevLett.130.173202} {\bibfield  {journal} {\bibinfo  {journal} {Phys. Rev. Lett.}\ }\textbf {\bibinfo {volume} {130}},\ \bibinfo {pages} {173202} (\bibinfo {year} {2023})}\BibitemShut {NoStop}%
\bibitem [{\citenamefont {Niroula}\ \emph {et~al.}(2025)\citenamefont {Niroula} \emph {et~al.}}]{Niroula2025}%
  \BibitemOpen
  \bibfield  {author} {\bibinfo {author} {\bibfnamefont {P.}~\bibnamefont {Niroula}} \emph {et~al.},\ }\href@noop {} {\bibinfo {title} {Realization of a quantum streaming algorithm on long-lived trapped-ion qubits}} (\bibinfo {year} {2025}),\ \Eprint {https://arxiv.org/abs/2511.03689} {arXiv:2511.03689 [quant-ph]} \BibitemShut {NoStop}%
\bibitem [{\citenamefont {Liu}\ \emph {et~al.}(2025{\natexlab{b}})\citenamefont {Liu} \emph {et~al.}}]{Liu2025}%
  \BibitemOpen
  \bibfield  {author} {\bibinfo {author} {\bibfnamefont {M.}~\bibnamefont {Liu}} \emph {et~al.},\ }\href@noop {} {\bibinfo {title} {Certified randomness amplification by dynamically probing remote random quantum states}} (\bibinfo {year} {2025}{\natexlab{b}}),\ \Eprint {https://arxiv.org/abs/2511.03686} {arXiv:2511.03686 [quant-ph]} \BibitemShut {NoStop}%
\bibitem [{\citenamefont {Lee}\ \emph {et~al.}(2005)\citenamefont {Lee} \emph {et~al.}}]{Lee_2005}%
  \BibitemOpen
  \bibfield  {author} {\bibinfo {author} {\bibfnamefont {P.~J.}\ \bibnamefont {Lee}} \emph {et~al.},\ }\bibfield  {title} {\bibinfo {title} {Phase control of trapped ion quantum gates},\ }\href {https://doi.org/10.1088/1464-4266/7/10/025} {\bibfield  {journal} {\bibinfo  {journal} {Journal of Optics B: Quantum and Semiclassical Optics}\ }\textbf {\bibinfo {volume} {7}},\ \bibinfo {pages} {S371} (\bibinfo {year} {2005})}\BibitemShut {NoStop}%
\bibitem [{\citenamefont {Beale}\ \emph {et~al.}(2018)\citenamefont {Beale}, \citenamefont {Wallman}, \citenamefont {Guti\'errez}, \citenamefont {Brown},\ and\ \citenamefont {Laflamme}}]{Beale2018}%
  \BibitemOpen
  \bibfield  {author} {\bibinfo {author} {\bibfnamefont {S.~J.}\ \bibnamefont {Beale}}, \bibinfo {author} {\bibfnamefont {J.~J.}\ \bibnamefont {Wallman}}, \bibinfo {author} {\bibfnamefont {M.}~\bibnamefont {Guti\'errez}}, \bibinfo {author} {\bibfnamefont {K.~R.}\ \bibnamefont {Brown}},\ and\ \bibinfo {author} {\bibfnamefont {R.}~\bibnamefont {Laflamme}},\ }\bibfield  {title} {\bibinfo {title} {Quantum error correction decoheres noise},\ }\href {https://doi.org/10.1103/PhysRevLett.121.190501} {\bibfield  {journal} {\bibinfo  {journal} {Phys. Rev. Lett.}\ }\textbf {\bibinfo {volume} {121}},\ \bibinfo {pages} {190501} (\bibinfo {year} {2018})}\BibitemShut {NoStop}%
\bibitem [{\citenamefont {Ryan-Anderson}\ \emph {et~al.}(2021)\citenamefont {Ryan-Anderson} \emph {et~al.}}]{RyanAnderson2021}%
  \BibitemOpen
  \bibfield  {author} {\bibinfo {author} {\bibfnamefont {C.}~\bibnamefont {Ryan-Anderson}} \emph {et~al.},\ }\href {https://arxiv.org/abs/2107.07505} {\bibinfo {title} {Realization of real-time fault-tolerant quantum error correction}} (\bibinfo {year} {2021}),\ \Eprint {https://arxiv.org/abs/2107.07505} {arXiv:2107.07505 [quant-ph]} \BibitemShut {NoStop}%
\bibitem [{\citenamefont {Brown}\ \emph {et~al.}(2020)\citenamefont {Brown}, \citenamefont {Cross},\ and\ \citenamefont {Brown}}]{brown2020}%
  \BibitemOpen
  \bibfield  {author} {\bibinfo {author} {\bibfnamefont {N.~C.}\ \bibnamefont {Brown}}, \bibinfo {author} {\bibfnamefont {A.~W.}\ \bibnamefont {Cross}},\ and\ \bibinfo {author} {\bibfnamefont {K.~R.}\ \bibnamefont {Brown}},\ }\href {https://arxiv.org/abs/2003.05843} {\bibinfo {title} {Critical faults of leakage errors on the surface code}} (\bibinfo {year} {2020}),\ \Eprint {https://arxiv.org/abs/2003.05843} {arXiv:2003.05843 [quant-ph]} \BibitemShut {NoStop}%
\bibitem [{\citenamefont {Hayes}\ \emph {et~al.}(2020)\citenamefont {Hayes} \emph {et~al.}}]{Hayes_2020}%
  \BibitemOpen
  \bibfield  {author} {\bibinfo {author} {\bibfnamefont {D.}~\bibnamefont {Hayes}} \emph {et~al.},\ }\bibfield  {title} {\bibinfo {title} {Eliminating leakage errors in hyperfine qubits},\ }\bibfield  {journal} {\bibinfo  {journal} {Physical Review Letters}\ }\textbf {\bibinfo {volume} {124}},\ \href {https://doi.org/10.1103/physrevlett.124.170501} {10.1103/physrevlett.124.170501} (\bibinfo {year} {2020})\BibitemShut {NoStop}%
\bibitem [{\citenamefont {Javadi-Abhari}\ \emph {et~al.}(2025)\citenamefont {Javadi-Abhari}, \citenamefont {Martiel}, \citenamefont {Seif}, \citenamefont {Takita},\ and\ \citenamefont {Wei}}]{Wei2025}%
  \BibitemOpen
  \bibfield  {author} {\bibinfo {author} {\bibfnamefont {A.}~\bibnamefont {Javadi-Abhari}}, \bibinfo {author} {\bibfnamefont {S.}~\bibnamefont {Martiel}}, \bibinfo {author} {\bibfnamefont {A.}~\bibnamefont {Seif}}, \bibinfo {author} {\bibfnamefont {M.}~\bibnamefont {Takita}},\ and\ \bibinfo {author} {\bibfnamefont {K.~X.}\ \bibnamefont {Wei}},\ }\href@noop {} {\bibinfo {title} {Big cats: entanglement in 120 qubits and beyond}} (\bibinfo {year} {2025}),\ \Eprint {https://arxiv.org/abs/2510.09520} {arXiv:2510.09520 [quant-ph]} \BibitemShut {NoStop}%
\bibitem [{\citenamefont {Lidar}\ \emph {et~al.}(1998)\citenamefont {Lidar}, \citenamefont {Chuang},\ and\ \citenamefont {Whaley}}]{PhysRevLett.81.2594}%
  \BibitemOpen
  \bibfield  {author} {\bibinfo {author} {\bibfnamefont {D.~A.}\ \bibnamefont {Lidar}}, \bibinfo {author} {\bibfnamefont {I.~L.}\ \bibnamefont {Chuang}},\ and\ \bibinfo {author} {\bibfnamefont {K.~B.}\ \bibnamefont {Whaley}},\ }\bibfield  {title} {\bibinfo {title} {Decoherence-free subspaces for quantum computation},\ }\href {https://doi.org/10.1103/PhysRevLett.81.2594} {\bibfield  {journal} {\bibinfo  {journal} {Phys. Rev. Lett.}\ }\textbf {\bibinfo {volume} {81}},\ \bibinfo {pages} {2594} (\bibinfo {year} {1998})}\BibitemShut {NoStop}%
\bibitem [{\citenamefont {Steane}(1997)}]{Steane_1997}%
  \BibitemOpen
  \bibfield  {author} {\bibinfo {author} {\bibfnamefont {A.~M.}\ \bibnamefont {Steane}},\ }\bibfield  {title} {\bibinfo {title} {Active stabilization, quantum computation, and quantum state synthesis},\ }\href {https://doi.org/10.1103/physrevlett.78.2252} {\bibfield  {journal} {\bibinfo  {journal} {Physical Review Letters}\ }\textbf {\bibinfo {volume} {78}},\ \bibinfo {pages} {2252–2255} (\bibinfo {year} {1997})}\BibitemShut {NoStop}%
\bibitem [{\citenamefont {Prabhu}\ and\ \citenamefont {Reichardt}(2024)}]{Prabhu_2024}%
  \BibitemOpen
  \bibfield  {author} {\bibinfo {author} {\bibfnamefont {P.}~\bibnamefont {Prabhu}}\ and\ \bibinfo {author} {\bibfnamefont {B.~W.}\ \bibnamefont {Reichardt}},\ }\bibfield  {title} {\bibinfo {title} {Distance-four quantum codes with combined postselection and error correction},\ }\bibfield  {journal} {\bibinfo  {journal} {Physical Review A}\ }\textbf {\bibinfo {volume} {110}},\ \href {https://doi.org/10.1103/physreva.110.012419} {10.1103/physreva.110.012419} (\bibinfo {year} {2024})\BibitemShut {NoStop}%
\bibitem [{\citenamefont {Gidney}(2021)}]{Gidney:2021wm}%
  \BibitemOpen
  \bibfield  {author} {\bibinfo {author} {\bibfnamefont {C.}~\bibnamefont {Gidney}},\ }\bibfield  {title} {\bibinfo {title} {Stim: a fast stabilizer circuit simulator},\ }\href {https://doi.org/10.22331/q-2021-07-06-497} {\bibfield  {journal} {\bibinfo  {journal} {{Quantum}}\ }\textbf {\bibinfo {volume} {5}},\ \bibinfo {pages} {497} (\bibinfo {year} {2021})}\BibitemShut {NoStop}%
\bibitem [{\citenamefont {Nielsen}(2002)}]{Nielsen2002}%
  \BibitemOpen
  \bibfield  {author} {\bibinfo {author} {\bibfnamefont {M.~A.}\ \bibnamefont {Nielsen}},\ }\bibfield  {title} {\bibinfo {title} {A simple formula for the average gate fidelity of a quantum dynamical operation},\ }\href {https://doi.org/https://doi.org/10.1016/S0375-9601(02)01272-0} {\bibfield  {journal} {\bibinfo  {journal} {Physics Letters A}\ }\textbf {\bibinfo {volume} {303}},\ \bibinfo {pages} {249} (\bibinfo {year} {2002})}\BibitemShut {NoStop}%
\bibitem [{\citenamefont {Li}\ \emph {et~al.}(2020)\citenamefont {Li}, \citenamefont {Han},\ and\ \citenamefont {Zhu}}]{Li2020}%
  \BibitemOpen
  \bibfield  {author} {\bibinfo {author} {\bibfnamefont {Z.}~\bibnamefont {Li}}, \bibinfo {author} {\bibfnamefont {Y.-G.}\ \bibnamefont {Han}},\ and\ \bibinfo {author} {\bibfnamefont {H.}~\bibnamefont {Zhu}},\ }\bibfield  {title} {\bibinfo {title} {Optimal verification of greenberger-horne-zeilinger states},\ }\href {https://doi.org/10.1103/PhysRevApplied.13.054002} {\bibfield  {journal} {\bibinfo  {journal} {Phys. Rev. Appl.}\ }\textbf {\bibinfo {volume} {13}},\ \bibinfo {pages} {054002} (\bibinfo {year} {2020})}\BibitemShut {NoStop}%
\bibitem [{\citenamefont {Flammia}\ and\ \citenamefont {Liu}(2011)}]{Flammia2011}%
  \BibitemOpen
  \bibfield  {author} {\bibinfo {author} {\bibfnamefont {S.~T.}\ \bibnamefont {Flammia}}\ and\ \bibinfo {author} {\bibfnamefont {Y.-K.}\ \bibnamefont {Liu}},\ }\bibfield  {title} {\bibinfo {title} {Direct fidelity estimation from few pauli measurements},\ }\href {https://doi.org/10.1103/PhysRevLett.106.230501} {\bibfield  {journal} {\bibinfo  {journal} {Phys. Rev. Lett.}\ }\textbf {\bibinfo {volume} {106}},\ \bibinfo {pages} {230501} (\bibinfo {year} {2011})}\BibitemShut {NoStop}%
\bibitem [{\citenamefont {Hong}\ \emph {et~al.}(2024)\citenamefont {Hong}, \citenamefont {Durso-Sabina}, \citenamefont {Hayes},\ and\ \citenamefont {Lucas}}]{Hong_2024}%
  \BibitemOpen
  \bibfield  {author} {\bibinfo {author} {\bibfnamefont {Y.}~\bibnamefont {Hong}}, \bibinfo {author} {\bibfnamefont {E.}~\bibnamefont {Durso-Sabina}}, \bibinfo {author} {\bibfnamefont {D.}~\bibnamefont {Hayes}},\ and\ \bibinfo {author} {\bibfnamefont {A.}~\bibnamefont {Lucas}},\ }\bibfield  {title} {\bibinfo {title} {Entangling four logical qubits beyond break-even in a nonlocal code},\ }\bibfield  {journal} {\bibinfo  {journal} {Physical Review Letters}\ }\textbf {\bibinfo {volume} {133}},\ \href {https://doi.org/10.1103/physrevlett.133.180601} {10.1103/physrevlett.133.180601} (\bibinfo {year} {2024})\BibitemShut {NoStop}%
\bibitem [{\citenamefont {Criger}\ \emph {et~al.}(2026)\citenamefont {Criger} \emph {et~al.}}]{CrigerInPreparation}%
  \BibitemOpen
  \bibfield  {author} {\bibinfo {author} {\bibfnamefont {B.}~\bibnamefont {Criger}} \emph {et~al.},\ }\bibfield  {title} {\bibinfo {title} {Automated flag-based fault-tolerant state preparation using integer linear programming}} (\bibinfo {year} {2026}),\ \bibinfo {note} {in preparation}\BibitemShut {NoStop}%
\end{thebibliography}%

\end{document}